\definecolor{mBlue}{HTML}{002FA7}
\definecolor{mGreen}{HTML}{009B55}
\definecolor{mOrange}{HTML}{FF4F00}
\definecolor{mBlack}{HTML}{004242}
\pgfplotsset{compat=newest}
\pgfplotsset{soldot/.style={color=blue,only marks,mark=*}}
\titleformat{\section}[hang]{\normalfont\scshape\large}{\thesection.}{1em}{\centering}
\titleformat{\subsection}[hang]{\normalfont\itshape\bf}{\thesubsection.}{1em}{\centering}
\titleformat{\subsubsection}[hang]{\normalfont\itshape}{\thesubsubsection.}{1em}{\centering}
\let\originalparagraph\paragraph
\renewcommand{\paragraph}[2][.]{\originalparagraph{#2#1}}
\newtheoremstyle{mystyle}
  {}
  {}
  {\itshape}
  {}
  {\bfseries}
  {.}
  { }
  {\thmname{#1}\thmnumber{ #2}\thmnote{ (#3)}}
\theoremstyle{mystyle}
\newtheorem{lemma}{Lemma}
\newtheorem{proposition}{Proposition}
\newtheorem{theorem}{Theorem}
\newtheorem{corollary}{Corollary}
\newcounter{parenttheorem}
\newtheoremstyle{mydefstyle}
  {}
  {}
  {}
  {}
  {\bfseries}
  {.}
  { }
  {\thmname{#1}\thmnumber{ #2}\thmnote{ (#3)}}
\theoremstyle{mydefstyle}
\newtheorem{definition}{Definition}
\newtheorem{example}{Example}
\theoremstyle{remark}
\newcommand{\placepoints}[7]{
    \foreach \x in {#1,...,#2} {
        \filldraw[#5, opacity=#6] (\x,#3) circle (#4);
    }
}
\def\mathcolor#1#{\@mathcolor{#1}}
\def\@mathcolor#1#2#3{%
  \protect\leavevmode
  \begingroup
    \color#1{#2}#3%
  \endgroup
}
\DeclareMathOperator{\supp}{supp}
\DeclareMathOperator{\argmax}{argmax}
\DeclareMathOperator{\Ind}{\mathds{1}}
\DeclareMathOperator{\cone}{cone}
\title{\textbf{Redistribution Through Market Segmentation}\thanks{The \textcircled{r} symbol means that the authors' names are presented in a certified random order \citep{Ray2018}. We especially thank Franz Dietrich, Jeanne Hagenbach, Eduardo Perez-Richet and Jean-Marc Tallon for their guidance. We also thank Sarah Auster, Francis Bloch, Gregorio Curello, Francesc Dilmé, Théo Durandard, Piotr Dworczak, Emeric Henry, Frédéric Koessler, Stephan Lauermann, Raphaël Levy, Paola Manzini, Andrei Matveenko, Laurent Mathevet, Meg Meyer, Benny Moldovanu, Paula Onuchic, Pietro Ortoleva, Franz Ostrizek, Rosina Rodriguez Olivera, Vasiliki Skreta, Randolph Sloof, Ludvig Sinander, Colin Stewart, Nikhil Vellodi, and Adrien Vigier for insightful discussions and comments at various stages of the project as well as participants at various seminars and conferences. The project has received funding from the European Research Council (ERC) under the European Union's Horizon 2020 research and innovation programme (101001694---IMEDMC, and 101040122---IMD).}}
\author{Victor Augias \textcircled{r} Alexis Ghersengorin \textcircled{r} Daniel M.A.~Barreto\thanks{\emph{Augias}: University of Bonn, Department of Economics---e-mail: \href{mailto:vaugias@uni-bonn.de}{\texttt{vaugias@uni-bonn.de}}; \emph{Ghersengorin}: University of Bonn, Department of Economics---email: \href{mailto:aghersen@uni-bonn.de}{\texttt{aghersen@uni-bonn.de}}; \emph{Barreto}: Amsterdam School of Economics, University of Amsterdam---e-mail: \href{mailto:d.m.a.barreto@uva.nl}{\texttt{d.m.a.barreto@uva.nl}}.}}
\date{\today}
\begin{document}

\maketitle

\begin{abstract}
    We study how to optimally segment monopolistic markets with a redistributive objective. We characterize optimal redistributive segmentations and show that they (i) induce the seller to price progressively, i.e., charge richer consumers higher prices than poorer ones, and (ii) may not maximize consumer surplus, instead granting extra profits to the monopolist. We further show that optimal redistributive segmentations are implementable via price-based regulation.
\end{abstract}
\noindent\textbf{JEL classification codes}: D42, D83, D39.\\
\noindent\textbf{Keywords}: Market Segmentation; Price Discrimination; Redistribution; Information Design.

\section{Introduction}


Firms and online platforms use rich consumer data to segment markets and personalize prices.\footnote{Third-degree price discrimination is often implemented through targeted discounts. For instance, this is how \href{https://sellercentral.amazon.com/help/hub/reference/external/GFM3F4GG5EYCC5XC}{Amazon Brand Tailored promotions} and \href{https://support.google.com/google-ads/answer/7367521?hl=en}{Google promotion assets} operate. See \cite{mohammed17how,wallheimer18are,oecd2022personalised} for more details and evidence.} While these practices can improve economic efficiency and even increase total consumer surplus, questions remain about their \emph{distributional} consequences. 

Distributive concerns have long shaped both policy practice and fairness norms regarding price discrimination. From ``lifeline'' tariffs to student discounts at cultural activities or food stamps to low-income households, price discrimination has repeatedly been justified on the basis of \emph{who} it benefits. Recent regulatory debates on online personalized pricing have expressed similar concerns \citep{WhiteHouse2015,bourreau2018regulation}. Despite this, most of the theoretical literature on price discrimination has focused on studying its impacts on aggregate measures such as output or total consumer surplus. Price discrimination inherently affects consumers differently, and evaluating its welfare effects solely through such aggregates conceals these heterogeneous impacts.


In particular, segmenting markets to maximize total consumer surplus can favor wealthier consumers. As \cite{bbm15} show, monopolistic markets can be segmented so that the whole generated surplus accrues to consumers. This outcome is achieved by exploiting \emph{pooling externalities} \citep{galperti2023}: consumers with a low willingness to pay exert a positive externality on consumers with a high willingness to pay who belong to the same segment by inducing the seller to charge prices below the uniform monopoly price. Adequately pooling consumers in the same segments therefore increases the total share of the surplus going to consumers, particularly those most willing to pay. However, a low willingness to pay is often indicative of adverse economic and social circumstances. Therefore, maximizing consumer surplus can benefit wealthy buyers to the detriment of poorer ones.

This paper aims to better understand these distributive effects of market segmentation. Specifically, we address the normative question of how to segment a market to benefit consumers while prioritizing the welfare of those with \emph{low willingness to pay}. Our two main economic insights are the following. First, unlike segmentations that maximize total consumer surplus, which assign consumers with a high willingness to pay to low-price segments, optimal redistributive segmentations are \emph{positive-assortative}: they assign buyers with a high willingness to pay to high-price segments and buyers with low willingness to pay to low-price segments. As a result, they induce the monopolist to price \emph{progressively}, meaning that wealthier consumers pay higher prices than poorer ones (\cref{sec:monotone}). Second, this \emph{separation} of wealthy and poor consumers in different segments sometimes allows the seller to price discriminate more effectively, and therefore increase its profit compared to uniform pricing. Hence, redistributive segmentations may not maximize total consumer surplus (\cref{sec:rents}).

Our model considers a continuous population of buyers, each having a unit demand for a good sold by a monopolistic seller. Every buyer is characterized by her willingness to pay for the good---her type. We refer to the distribution of types in the economy as the \emph{aggregate market}. Without loss of generality, we assume that the seller's cost of production is zero, so gains from trade are always positive. A market designer can \emph{flexibly} segment the aggregate market, anticipating that the seller will charge a profit-maximizing price in each segment. 

The designer's preferences are represented by a \emph{welfare function}, which describes the social value of a given type consuming at a given price. We focus on welfare functions that align with consumer surplus: welfare equals zero when buyers do not purchase, and is decreasing in price when buyers purchase. Within this class, we study \emph{redistributive} welfare functions, for which the welfare gain resulting from reducing the price paid by buyers with low type is greater than the welfare gain resulting from an equal price reduction for buyers with high type.\footnote{Mathematically, this corresponds to the welfare function being supermodular.} This is motivated by the idea that buyers' willingness to pay reflects both their taste for the good and their \emph{opportunity cost of money}. Prioritizing low willingness to pay buyers thus appropriately captures redistributive concerns when heterogeneity in willingness to pay primarily reflects differences in \emph{ability to pay}: wealthier individuals can typically pay more for what they need.\footnote{We refer to \cite{dworczak21,akbarpour2024redistributive,dworczak2024inequality} for in-depth discussions of this argument. \Cref{app:microfoundation} micro-founds this interpretation in our environment.} Examples of redistributive welfare functions include weighted consumer surplus with Pareto weights \emph{decreasing in willingness to pay}, or \emph{increasing concave} transformations of consumer surplus.

Our main result, \cref{thm:opt_red_seg}, characterizes the set of segmentations that are optimal for all (and only) redistributive welfare functions as those that \textit{saturate} a specific set of the seller's incentive constraints. These binding incentive constraints ensure that no marginal reallocations of buyers across segments can be beneficial for every redistributive welfare function. \cref{thm:opt_red_seg} relies on a key result of independent interest. We define a \emph{redistributive order}, which ranks a segmentation as more redistributive than another if it yields higher aggregate welfare for \emph{every} redistributive welfare function. We characterize this order in terms of elementary reallocations of buyers across segments: a segmentation is \emph{more redistributive} than another if and only if there exists a sequence of such elementary reallocations transforming the latter into the former (\cref{prop:red_order}). \Cref{thm:opt_red_seg} follows then from the characterization of the segmentations that are maximal in the redistributive order in terms of their binding incentive constraints.

We leverage \cref{thm:opt_red_seg} to draw key economic implications of optimal redistributive segmentations. We first show the optimality of progressive pricing by proving that any optimal segmentation must assign buyer types to prices in a \emph{positive-assortative} manner. Specifically, \cref{cor:weak_mon} establishes that any optimal segmentation satisfies a \emph{weak} form of monotonicity, which requires that, if a buyer consumes at a given price, there must \emph{exist} a buyer with a greater willingness to pay who consumes at a higher price. This notion of progressive pricing is mild, since it still allows for instances in which some buyer with higher willingness to pay consumes at a price lower than some buyer with lower willingness to pay. We show in \cref{prop:strong_mon} how a strengthening of redistributive concerns\footnote{As shown below, this corresponds mathematically to the welfare function satisfying a strengthened notion of supermodularity (\cref{def:strong_red}).} eliminates these instances, leading to a stronger notion of monotonicity: if a buyer consumes at a given price, then \emph{all} buyers with a (weakly) greater willingness to pay consume at a higher price. Furthermore, we show that optimal strongly monotone segmentations are obtained by following a simple greedy algorithm that iteratively maximizes the welfare of every type starting from the lowest.

Second, while redistribution does not induce an equity-efficiency trade-off (\cref{lem:efficiency} shows that optimal redistributive segmentations are efficient), we show that it may be in conflict with consumer surplus maximization. The monotonic structure of redistributive segmentations implies separation of buyers with low and high willingness to pay into different segments, possibly allowing the seller to price discriminate more effectively than under consumer surplus maximizing segmentations. As a result, the seller might obtain a strictly positive share of the surplus generated by segmenting the market. In such cases, we say that the seller enjoys a \emph{redistributive rent}. Redistributive rents arise especially when optimal segmentations satisfy the strong form of monotonicity described in the previous paragraph. \Cref{thm:rent} characterizes the markets for which strongly redistributive preferences necessarily induce a redistributive rent for the seller. When rents are not necessary, the optimal segmentation is remarkably simple. It only generates two segments: one where the lowest possible price is charged, pooling all types that would not have bought the good in the unsegmented market; and one segment where the uniform monopoly price is charged.

Our contributions do not rely on a literal interpretation of the segmentation problem. The design problem we consider could be understood metaphorically, as emphasized in \cite{bergemann2019}, as an analytical device that allows us to characterize a ``redistributive frontier’’  in the space of segmentations (\cref{thm:opt_red_seg}) and study the properties of segmentations pertaining to this frontier.  \Cref{prop:red_order}, then, provides a way of understanding the ``gap’’ between any given (and potentially exogenous) segmentation and such redistributive frontier. From this perspective, our results shed light on the structure of redistributive segmentations and on the way redistributive objectives interact with the classical standard of consumer-surplus maximization (as discussed in \cref{sec:discussion}). 

Nevertheless, the growing availability of consumer data and the rapid expansion of segmentation practices by actual market participants also invite a literal reading of the segmentation design problem. 
In practice, segmentation is increasingly performed by private actors whose objectives are generally distinct from redistribution or welfare maximization. For example, online platforms can focus on maximizing the number of transactions \citep{Romanyuk2019}, while data intermediaries may aim at maximizing profits of their clients \citep{Yang2022,terstiege26optimal}. In the last section of the paper, we therefore investigate under what conditions redistributive segmentations can be implemented by a regulator who can only imperfectly observe the segmentation chosen by an agent, assumed here to be the seller.\footnote{The result would still hold if the agent had a different objective function, provided it is aligned with efficiency. For example, a platform remunerated by a flat fee on transactions would seek to maximize participation, and thus segment efficiently.} We introduce a notion of \emph{price-based implementation}: a segmentation is implementable through price-based regulation if there exists no alternative segmentation that strictly increases the seller's profit while inducing the \emph{same distribution of prices}.\footnote{This notion of implementation is equivalent to \citeauthor{linliu}'s (\citeyear{linliu}) notion of \emph{credibility}. We further discuss the connection to that contribution in \cref{sec:implementation}, \cref{foot:credibility}.} This notion of implementation is motivated by the fact that although regulators are likely to possess less information than data intermediaries, they can plausibly observe the distribution of charged prices.\footnote{\label{footnote:websites}Many websites, such as Camelcamelcamel, Keepa, Price Before, and Honey, already specialize in tracking the evolution, frequency and extent of price fluctuations for products sold on Amazon.} In the context of regulating online platforms, for instance, it only involves monitoring transaction data and imposing penalties based on the divergence of the empirical price distribution from the desired target distribution. In \cref{prop:implementation}, we show that \emph{any} optimal redistributive segmentation is implementable through price-based regulation.\footnote{In fact, \cref{prop:implementation} states a more general result of independent interest: any \emph{efficient} segmentation can be implemented via price-based regulation.}

\paragraph{Related Literature}

Our paper contributes to the analysis of the welfare effects of third-degree price discrimination, following \cite{pigou} and \cite{robinson}. Much of this literature focuses on studying how given market segmentations affect efficiency, consumer surplus, total output, and prices \citep{schmalensee,varian,Aguirre2010Monopoly,Cowan2016Welfare, rhodeszhou}. 

\cite{legrand} is the first to study theoretically the use of third-degree price discrimination as a redistributive device for public firms. His analysis is expanded by \cite{steinberg} to the case where there is competition between for-profit firms and a (distributionally minded) non-profit provider.

\cite{bbm15} pioneered an information design approach to third-degree price discrimination.\footnote{See \cite{kamenicareview} and \cite{bergemann2019} for comprehensive reviews of the Bayesian persuasion and information design literature.} They show that any pair of consumer surplus and producer profit is attainable through some segmentation if the seller's profit is at least the uniform-pricing level, consumer surplus is non-negative, and total surplus does not exceed the efficient level. \cite{hagpanah_siegel_aer} and \cite{terstiege2023} extend this analysis to multi-product environments, \cite{elliott2024Market} and \cite{bergemann2024alignment} study competitive markets, 
and \cite{Salas2025} examines how optimal segmentation is shaped by endogenous buyer participation. 
We contribute to this literature by studying the impact of market segmentation on consumer welfare beyond the standard of consumer surplus. In particular, we study redistributive welfare functions and show that prioritizing low-type consumers does not hinder efficiency but may not maximize total consumer surplus.

A burgeoning literature rooted in the seminal contribution of \cite{weitzman1977} shows how markets \citep{dworczak21}, allocation mechanisms \citep{Condorelli2013,Reuter2020,akbarpour2024vaccines,akbarpour2024redistributive,kang2023,kang2024a,kang2024b}, or commodity taxation \citep{pai2022taxing,Ahlvik2025,Doligalski2025} can be optimally designed to mitigate economic inequality.\footnote{\cite{dworczak2024inequality} provides an overview of recent contributions to inequality-aware market design.} Our paper contributes to this literature by showing how \emph{market segmentation} (or, equivalently, \emph{information}) can serve as a redistributive device.\footnote{\cite{arya2022rethinking} study redistribution through a different notion of market segmentation, namely the non-tradability of essential and non-essential goods in a general equilibrium framework.} 

Relatedly, \citet{doval_smolin_22} analyze how information provision affects the welfare of heterogeneous agents. They introduce the \emph{Bayes welfare set}, defined as the set of agents’ welfare profiles that can be achieved under some information policy. Their framework applies to a broad class of persuasion environments, including the monopolistic market we study. They characterize this set geometrically and show that its Pareto frontier is generated by the solutions to a family of Bayesian persuasion problems in which a weighted-utilitarian social planner plays the role of the designer. This delivers a complete description of the welfare effects that \emph{any} segmentation can have on consumers of different types.

Rather than characterizing which welfare profiles are feasible, we study how to \emph{optimally} distribute consumer welfare through information design. To do so, we specify a class of redistributive welfare functions---those that are supermodular in the types and the prices---and fully characterize \emph{all} segmentations that are optimal for such objectives, for any distribution of consumer types. This characterization implies the use of \emph{monotone} information structures as optimal policies---structures that are often viewed as especially attractive in the persuasion literature because of their simplicity \citep[see, e.g.,][]{Ivanov2021,Mensch2021,Nikzad2021,Kolotilin2022,Onuchic2023,Kolotilin2024}. This literature often obtains monotone structures as optimal in the case where payoffs are \emph{linear} in the state (that is, only the posterior mean matters).\footnote{Notable exceptions are, e.g., \cite{Mensch2021} and \cite{Nikzad2021}.} This is not the case in the environment we consider---both the sender's and receiver's payoffs are non-linear. We thus develop a new method inspired by the literature on stochastic orders (cf.~\cref{prop:red_order}) to prove our characterization.\footnote{We discuss the connection between our approach and the literature on stochastic orders and optimal transport in greater detail in \cref{sec:red_order}.}

The distributive impacts of price discrimination have also been investigated empirically. \citet{Dube2023} conduct a randomized controlled field experiment to study the welfare implications of personalized pricing. While they find that personalized pricing reduces total consumer surplus, aggregate consumer welfare can increase under inequality-averse welfare criteria, indicating that buyers with low willingness to pay benefit the most from the implemented price discrimination. Similarly, \citet{buchholz2020value} examine price discrimination on a two-sided platform for cab rides and find that, although aggregate consumer surplus falls, more price-sensitive buyers gain from personalized pricing.

\section{Model}

\paragraph{Primitives}

A monopolistic seller (she) supplies a good to a unit mass of buyers. Buyers have a unit demand and differ in their willingness to pay for the good, $\theta$, assumed to belong to the finite set $\Theta=\{\theta_{1},\dots,\theta_{K}\}\subset\mathbb{R}_{+}$, where $\theta_1 < \dots < \theta_K$. We refer to the willingness to pay as the buyer's type and use the two terms interchangeably. A buyer willing to pay $\theta$ consumes at a price $p\in \mathbb{R}_{+}$ if and only if $\theta\geq p$. We assume that the seller's marginal cost of production is zero, so the gains from trade are positive.\footnote{This assumption is without loss of generality. If the seller's marginal cost were equal to $c>0$, then one could redefine the buyer type as $\tilde{\theta}=\max\{0, \theta-c\}$ for all $\theta\in \Theta$.} The \emph{aggregate market} (in short the \emph{market}) $\mu \in \Delta(\Theta)$ specifies the initial distribution of types\footnote{We assume throughout the paper, and without loss of generality, that $\mu(\theta)>0$ for all $\theta\in \Theta$.}.

\paragraph{Market segmentation}

A market \emph{segmentation} is a division of the population of buyers into different market segments indexed by $s\in S$, where $S$ is a sufficiently large finite set. Formally, a segmentation corresponds to a probability distribution $\sigma \in \Delta(\Theta\times S)$ describing the mass $\sigma(\theta,s)$ of type $\theta$ buyers assigned to segment $s$, and satisfying the following \emph{market consistency} constraint:
\begin{equation}\label{eqn:market_consistency}
   \forall \theta\in \Theta, \quad \sum_{s\in S} \sigma(\theta,s)=\mu(\theta). \tag{$\mathrm{C}_{\mu}$}
\end{equation}

Condition \labelcref{eqn:market_consistency} ensures that the mass of type $\theta$ buyers assigned to all segments coincides with the mass of those types in the market.\footnote{We can equivalently think of any pair $(\sigma,S)$ as a \textit{Blackwell experiment} which, for any prior distribution $\mu$ over the state space $\Theta$, describes the joint distribution of signals $s\in S$ and states of the world $\theta\in \Theta$, and satisfies the \emph{Bayes-plausibility constraint} \labelcref{eqn:market_consistency} \citep{Aumann1995,kamenica_gentzkow11}. According to that interpretation, each $\sigma$ represents a way of revealing \emph{information} about the buyers' types to the seller.}

\paragraph{Seller's pricing}

When faced with a market segmentation, the seller sets prices to maximize her profit in each segment. That is, for any $\mu\in\Delta(\Theta)$, any  $\sigma$ satisfying \labelcref{eqn:market_consistency}, and any $s\in S$ such that $\sigma(\theta,s)>0$ for some $\theta\in \Theta$, the seller solves
\begin{equation*}
   \max_{p\in P} \; p \sum_{\theta\geq p} \sigma(\theta,s),
\end{equation*}
where the set of prices is given by $P=\Theta$. Indeed, there is no loss of generality in assuming that the seller charges prices in $\Theta$ since, for any $k\in\{1,\dots,K\}$, charging a price falling strictly between two consecutive types $\theta_{k}$ and $\theta_{k+1}$ always yields a lower profit to the seller than charging either $p = \theta_{k}$ or $p = \theta_{k+1}$.\footnote{Similarly, charging any price lower than $\theta_{1}$ or greater than $\theta_{K}$ is never optimal.} 

\paragraph{Direct and obedient segmentations}

It is without loss of generality to focus on \emph{direct} and \emph{obedient} market segmentations.\footnote{Precisely, it is without loss given the objectives we define below. See \cite{bergemann2019}.} In such segmentations, $S=P$, so that the segments are indexed by a recommended price, and the seller finds it optimal to follow this recommendation. That is:
\begin{equation}\label{eqn:obedience}
    \forall p,q \in P, \quad p\sum_{\theta \geq p}\sigma(\theta,p) \geq q \sum_{\theta \geq q} \sigma(\theta, p). \tag{Ob}
\end{equation}

For any $\mu\in\Delta(\Theta)$, we denote the set of direct and obedient segmentations of $\mu$ as $\Sigma(\mu)$. We henceforth slightly abuse terminology by referring to any element of $\Sigma(\mu)$ as a segmentation of $\mu$. For any $\mu\in \Delta(\Theta)$ and any $\sigma \in \Sigma(\mu)$, we let $\sigma_{P}\in\Delta(P)$ be the \emph{distribution of recommended prices} induced by $\sigma$, defined by:\footnote{As will become clear after \cref{lem:efficiency} is presented, in optimal segmentations the distribution of recommended prices will coincide with the distribution of realized transaction prices.}
\begin{equation*}
    \forall p \in P, \quad \sigma_{P}(p)=\sum_{\theta\in\Theta} \sigma(\theta,p).
\end{equation*}

For any $\mu\in\Delta(\Theta)$, we let $p^{\star}_{\mu}\in P$ be the \emph{uniform optimal price}, that is, the price the seller would optimally charge in the aggregate market:\footnote{Whenever there are several uniform optimal prices for some market $\mu\in\Delta(\Theta)$, we let $p^{\star}_{\mu}$ be the lowest one. This tie-breaking rule is inconsequential for any of our results.}
\begin{equation*}
    \forall p\in \supp(\mu), \quad p^{\star}_{\mu}\sum_{\theta\geq p^{\star}_{\mu}} \mu(\theta) \geq p \sum_{\theta\geq p} \mu(\theta).
\end{equation*}

Let $\mu\in\Delta(\Theta)$. An example of direct segmentation of $\mu$ is illustrated in \cref{fig:direct_seg}.
\begin{figure}[h!]
    \begin{center}
        \begin{tikzpicture}[scale=1, transform shape]
            \draw[step=1cm,mBlack,very thin,opacity=0.2] (0,0) grid (6,6);
            \draw[thick,-stealth] (0,0) -- (7,0) node[anchor=west] {$\theta$};
            \draw[thick,-stealth] (0,0) -- (0,7) node[anchor=south] {$p$};
            
            \node[below left] at (0,0) {$(\theta_1,\theta_1)$};
            \node[below] at (3,0) {$p^{\star}_{\mu}$};
            \node[below] at (6,0) {$\theta_{K}$};
            
            \node[left] at (-0.1,3) {$p^{\star}_{\mu}$};
            \node[left] at (-0.1,6) {$\theta_{K}$};
        
            \node[left] at (-0.1,2) {$p'$};
            \node[left] at (-0.1,4) {$p''$};
            
            
            \placepoints{0}{6}{3}{3pt}{mBlack}{0.4};
            
            \placepoints{0}{2}{2}{3pt}{mBlack}{0.8};
            \placepoints{4}{5}{2}{3pt}{mBlack}{0.8};
            \placepoints{6}{6}{2}{3pt}{mBlack}{0.8};
            
            \placepoints{0}{3}{4}{3pt}{mBlack}{0.8};
            \placepoints{5}{6}{4}{3pt}{mBlack}{0.8};
        
            \draw[-stealth, very thick, mOrange, opacity=0.65] (0,3) -- ++(0,-0.75);
            \draw[-stealth, very thick, mOrange, opacity=0.65] (0,3) -- ++(0, 0.75);
            \draw[-stealth, very thick, mOrange, opacity=0.65] (1,3) -- ++(0,-0.75);
            \draw[-stealth, very thick, mOrange, opacity=0.65] (1,3) -- ++(0,+0.75);
            \draw[-stealth, very thick, mOrange, opacity=0.65] (2,3) -- ++(0,-0.75);
            \draw[-stealth, very thick, mOrange, opacity=0.65] (2,3) -- ++(0,+0.75);
            \draw[-stealth, very thick, mOrange, opacity=0.65] (3,3) -- ++(0,+0.75);
            \draw[-stealth, very thick, mOrange, opacity=0.65] (4,3) -- ++(0,-0.75);
            \draw[-stealth, very thick, mOrange, opacity=0.65] (5,3) -- ++(0,-0.75);
            \draw[-stealth, very thick, mOrange, opacity=0.65] (5,3) -- ++(0,+0.75);
            \draw[-stealth, very thick, mOrange, opacity=0.65] (6,3) -- ++(0,-0.75);
            \draw[-stealth, very thick, mOrange, opacity=0.65] (6,3) -- ++(0,+0.75);
            
        \end{tikzpicture}
    \end{center}
    \caption{A direct segmentation with two segments.}
    \label{fig:direct_seg}
\end{figure}
The initial mass of consumers, located in the row $p^{\star}_{\mu}$ in the grid $\Theta\times P$, is divided between two segments $p'$ and $p''$ vertically, so \labelcref{eqn:market_consistency} is satisfied. Furthermore, \labelcref{eqn:obedience} requires that the mass moved at each point be such that $p'$ and $p''$ are, respectively, profit-maximizing prices in segments $p'$ and $p''$.

\paragraph{Redistributive welfare functions}

We consider a designer (he) who can segment the market flexibly. His preferences over market outcomes are represented by a \emph{welfare function} $w\colon \Theta \times P \to \mathbb{R}$. For any welfare function $w$ and segmentation $\sigma\in\Sigma(\mu)$, the \emph{aggregate welfare} is defined by
\begin{equation*}
    \sum_{(\theta,p)\in \Theta\times P} w(\theta, p) \, \sigma(\theta,p).
\end{equation*}

The value $w(\theta,p)$ thus represents the individual contribution to the aggregate welfare when a buyer of type $\theta$ is assigned to a segment where the seller charges the price $p$. 

We restrict attention to welfare functions satisfying three conditions. First, the value from assigning a buyer to a segment where he does not consume is zero, and the value from assigning a buyer to a segment where he consumes is non-negative. That is, $w(\theta,p)=0$ if $\theta < p$, and $w(\theta,p)\geq 0$ if $\theta \geq p$. Second, assigning any buyer to a segment where the seller charges a lower price always improves welfare. That is, $p\mapsto w(\theta_{k},p)$ is a non-increasing function on $\{\theta_{1}, \dots, \theta_{k} \}$ for all $k\in \{1,\dots,K\}$. The two preceding conditions capture the fact that social welfare aligns with consumer surplus.\footnote{It is straightforward to verify that the \emph{buyer-utilitarian} welfare function, given by $w(\theta,p)=(\theta-p)\mathds{1}_{\theta\geq p}$ for all $(\theta,p)\in\Theta \times P$, satisfies these two minimal conditions.} The third condition captures the designer's \emph{redistributive concerns}. We require that any welfare function be \emph{supermodular} conditional on buyers consuming. That is, for any $(\theta,p),(\theta',p')\in \Theta\times P$ such that $\theta'>\theta \geq p'>p$,
\begin{equation}\label{cond:R}
     w(\theta,p)-w(\theta,p') \geq w(\theta',p)-w(\theta',p'). \tag{R}
\end{equation}

Condition \labelcref{cond:R} ensures that any price discount from $p'$ to $p$ increases welfare more when it is granted to a low-type $\theta$ rather than to a high-type $\theta'$.

We say that a welfare function is \emph{redistributive} if it satisfies the three conditions above, and denote by $\mathcal{R}$ the class of those welfare functions. Furthermore, we say that $w$ is \emph{strictly redistributive} if it satisfies the condition \labelcref{cond:R} strictly on all $\Theta\times P$, and is such that $p\mapsto w(\theta_k,p)$ is strictly decreasing on $\{\theta_{1}, \dots, \theta_k \}$ for all $k\in \{1,\dots,K\}$. We denote the class of strictly redistributive welfare functions by $\overline{\mathcal{R}}$.

In \cref{ex:decreasing_pareto,ex:deacr_marg_surplus} below, we highlight that the class of redistributive welfare functions encompasses two common specifications.
\begin{example}[Decreasing Pareto weights]\label{ex:decreasing_pareto}
Let $\lambda\colon \Theta \to \mathbb{R}$ be such that $\lambda(\theta)\geq 0$ for any $\theta\in \Theta$. The welfare function defined by
\begin{equation}\label{eqn:deacr_Pareto_w}
    \forall(\theta,p)\in\Theta\times P, \quad w(\theta,p)=\lambda(\theta) (\theta-p) \mathds{1}_{\theta\geq p},
\end{equation}
belongs to $\mathcal{R}$ whenever $\theta\mapsto\lambda(\theta)$ is non-increasing, and belongs to $\overline{\mathcal{R}}$ whenever $\theta\mapsto\lambda(\theta)$ is strictly decreasing. The class of welfare functions given by \labelcref{eqn:deacr_Pareto_w} for some decreasing function $\lambda$ corresponds to the family of weighted buyer-utilitarian welfare functions with \emph{decreasing Pareto weights}. Under such welfare functions, the value to the designer of an additional unit of surplus is lower for high-types than for low-types.  Decreasing Pareto weights can endogenously arise under a utilitarian objective if buyers with a low willingness to pay have a higher marginal utility of money than those with a high willingness to pay \citep{Condorelli2013,dworczak21}, or if buyers are identified by unobserved welfare weights that negatively correlate with their willingness to pay \citep{akbarpour2024redistributive}.
\end{example}
\begin{example}[Increasing concave transformation]\label{ex:deacr_marg_surplus}
Let $u\colon\mathbb{R}\to\mathbb{R}$ be an increasing function such that $u(0)=0$. The welfare function defined by
\begin{equation}\label{eqn:concave_trans}
    \forall(\theta,p)\in\Theta\times P, \quad w(\theta,p)=u\bigl((\theta-p)\mathds{1}_{\theta\geq p}\bigr),
\end{equation}
belongs to $\mathcal{R}$ if $u$ is concave, and belongs to $\overline{\mathcal{R}}$ if $u$ is strictly concave. Increasing concave transformations of surplus correspond to the class of welfare functions usually considered in optimal taxation \citep{mirrlees1971exploration} and insurance \citep{Rothschild1976}.
Concavity reflects equity concerns, as it implies diminishing marginal returns to welfare from additional units of surplus. 
\end{example}

Note also that any welfare function which is a combination of \cref{ex:decreasing_pareto,ex:deacr_marg_surplus}, given by
\begin{equation*}
    \forall(\theta,p)\in\Theta\times P, \quad w(\theta,p) = \lambda(\theta) u\bigl( (\theta - p) \mathds{1}_{\theta\geq p} \bigr),
\end{equation*}
belongs to $\mathcal{R}$ whenever $\theta\mapsto\lambda(\theta)$ is sufficiently decreasing, or $u$ is sufficiently concave, or both.

\Cref{app:microfoundation} provides a micro-foundation for these specifications. We show that, even for a purely buyer-utilitarian designer, redistributive welfare functions arise when willingness to pay reflects both taste and income and when higher willingness-to-pay buyers have weakly lower marginal utility for money. Under concave utility for money, this holds, for instance, when income is stochastically increasing in willingness to pay. The appendix also shows that weighted-surplus welfare functions such as those in \cref{ex:decreasing_pareto} can be interpreted as reduced-form buyer-utilitarian objectives, with Pareto weights arising endogenously from income effects rather than being imposed as primitives.


\paragraph{Segmentation design}

For any given welfare function $w \in \mathcal{R}$ and an aggregate market $\mu \in \Delta(\Theta)$, the designer's optimization program consists of choosing a segmentation of $\mu$ to maximize aggregate welfare:
\begin{equation}\label{eqn:designer_pb}
    \max_{\sigma\in\Sigma(\mu)} \; \sum_{(\theta,p)\in \Theta\times P} w(\theta,p) \, \sigma(\theta,p). \tag{$\mathrm{P}_{w,\mu}$}
\end{equation}

\section{An Illustrative Example}\label{sec:illustrative_example}

This section illustrates our main results and economic insights in a three-type example. Think of the traded good as a normal good for which buyers' willingness to pay mainly reflects their ability to pay. We endow the designer with a weighted utilitarian objective and Pareto weights decreasing in buyers' willingness to pay, as in \cref{ex:decreasing_pareto}.


Assume that buyer types are $\Theta = \{1,2,3\}$ and the aggregate market is $\mu = (0.3, 0.4, 0.3)$. The profit-maximizing uniform monopoly price is $p^\star_\mu = 2$. For any $(\theta,p) \in \Theta^2$, welfare is
\begin{equation*}
    w(\theta,p) = \lambda(\theta)\,(\theta - p)\,\Ind_{\theta \geq p},
\end{equation*}
with welfare weights $\lambda(3) = 1 < \lambda(2)$. The value of $\lambda(1)$ is irrelevant, since type-1 consumers never obtain positive surplus (the lowest feasible price is $1$).

Our first result, \cref{lem:efficiency}, shows that we can, without loss, restrict attention to \emph{efficient} segmentations, that is, segmentations under which every buyer type trades (see \cref{sec:efficiency}). Efficient segmentations are exactly the segmentations supported weakly below the diagonal $p=\theta$, corresponding to the shaded region in each panel of \cref{fig:ill_example_2}.

A simple efficient segmentation assigns all type-1 buyers to segment 1 and all type-2 and type-3 buyers to segment 2. \Cref{fig:ill_example_2a} depicts this segmentation, with dot size proportional to buyer mass. This segmentation cannot be optimal. To see this, observe that the designer can implement two reallocations of buyers across segments that strictly improve his objective (and, more generally, weakly improve any redistributive welfare criterion). In particular, he can reassign some type-$2$ and type-$3$ buyers from segment $2$ to segment $1$, as indicated by the downward arrows in \Cref{fig:ill_example_2a}. We refer to such reallocations as \emph{downward transfers} (\cref{def:down_trans}). As long as they respect obedience constraints, downward transfers are always welfare-improving because they reduce the prices faced by some buyers.

\begin{figure}[h]
    \centering
    \begin{subfigure}[c]{0.495\textwidth}
        \centering
        \begin{tikzpicture}[scale=1.25, transform shape]
            \draw[step=1cm,mBlack,very thin,opacity=0.2] (1,1) grid (3,3);
            \draw[thick,-stealth] (1,1) -- (3.2,1) node[anchor=west,scale=1.1] {$\theta$};
            \draw[thick,-stealth] (1,1) -- (1,3.2) node[anchor=south,scale=1.1] {$p$};
            \draw[thick,mBlack,opacity=0.5] (1,1) -- (3,3);
            \fill[mBlack,fill opacity=0.05] (1,1) -- (3,3) -- (3,1) -- cycle;

            \node[anchor=north,scale=0.95] at (1,1) {1};
            \node[anchor=north,scale=0.95] at (2,1) {2};
            \node[anchor=north,scale=0.95] at (3,1) {3};
            \node[anchor=east,scale=0.95]  at (1,1) {1};
            \node[anchor=east,scale=0.95]  at (1,2) {2};
            \node[anchor=east,scale=0.95]  at (1,3) {3};

            \filldraw[mBlack,opacity=0.85] (1,1) circle (3pt); 
            \filldraw[mBlack,opacity=0.85] (2,2) circle (4pt); 
            \filldraw[mBlack,opacity=0.85] (3,2) circle (3pt); 

            \draw[-stealth,thick,mBlue,opacity=1, very thick,opacity=0.55] (3,2) to node[midway, right, very thick,scale=0.8,opacity=0.55] {$\delta$} ++(0,-1);
            \draw[-stealth,thick,mBlue,opacity=1, very thick,opacity=0.55] (2,2) to node[midway, right, very thick,scale=0.8,opacity=0.55] {$\delta$} ++(0,-1);
        \end{tikzpicture}
        \subcaption{}\label{fig:ill_example_2a}
    \end{subfigure}
    \hfill
    \begin{subfigure}[c]{0.495\textwidth}
        \centering
        \begin{tikzpicture}[scale=1.25, transform shape]
            \draw[step=1cm,mBlack,very thin,opacity=0.2] (1,1) grid (3,3);
            \draw[thick,-stealth] (1,1) -- (3.2,1) node[anchor=west,scale=1.1] {$\theta$};
            \draw[thick,-stealth] (1,1) -- (1,3.2) node[anchor=south,scale=1.1] {$p$};
            \draw[thick,mBlack,opacity=0.5] (1,1) -- (3,3);
            \fill[mBlack,fill opacity=0.05] (1,1) -- (3,3) -- (3,1) -- cycle;

            \node[anchor=north,scale=0.95] at (1,1) {1};
            \node[anchor=north,scale=0.95] at (2,1) {2};
            \node[anchor=north,scale=0.95] at (3,1) {3};
            \node[anchor=east,scale=0.95]  at (1,1) {1};
            \node[anchor=east,scale=0.95]  at (1,2) {2};
            \node[anchor=east,scale=0.95]  at (1,3) {3};

            \filldraw[mBlack,opacity=0.85] (1,1) circle (3.0pt); 
            \filldraw[mOrange,opacity=0.85] (2,1) circle (1.5pt); 
            \filldraw[mBlack,opacity=0.85] (3,1) circle (1.5pt); 

            \filldraw[mBlack,opacity=0.85] (2,2) circle (2.5pt); 
            \filldraw[mBlack,opacity=0.85] (3,2) circle (1.5pt); 

            \draw[-stealth, very thick, mOrange,opacity=0.55] (2,2) to node[midway, left, very thick,scale=0.8,opacity=0.55] {$\varepsilon$} ++(0,-0.89);
            \draw[-stealth, very thick, mOrange,opacity=0.55] (3,1) to node[midway, right, very thick,scale=0.8,opacity=0.55] {$\varepsilon$} ++(0,0.89);
        \end{tikzpicture}
        \subcaption{}\label{fig:ill_example_2b}
    \end{subfigure}
    \\
    \begin{subfigure}[c]{0.495\textwidth}
        \centering
        \begin{tikzpicture}[scale=1.25, transform shape]
        \draw[step=1cm,mBlack,very thin,opacity=0.2] (1,1) grid (3,3);
            \draw[thick,-stealth] (1,1) -- (3.2,1) node[anchor=west,scale=1.1] {$\theta$};
            \draw[thick,-stealth] (1,1) -- (1,3.2) node[anchor=south,scale=1.1] {$p$};
            \draw[thick,mBlack,opacity=0.5] (1,1) -- (3,3);
            \fill[mBlack,fill opacity=0.05] (1,1) -- (3,3) -- (3,1) -- cycle;

            \node[anchor=north,scale=0.95] at (1,1) {1};
            \node[anchor=north,scale=0.95] at (2,1) {2};
            \node[anchor=north,scale=0.95] at (3,1) {3};
            \node[anchor=east,scale=0.95]  at (1,1) {1};
            \node[anchor=east,scale=0.95]  at (1,2) {2};
            \node[anchor=east,scale=0.95]  at (1,3) {3};

        \filldraw[mBlack,opacity=0.85] (1,1) circle (3.0pt); 
        \filldraw[mOrange,opacity=0.85] (2,1) circle (2.6pt); 
        \filldraw[mBlack,opacity=0.85] (3,1) circle (1.1pt); 

        \filldraw[mBlack,opacity=0.85] (2,2) circle (1.5pt); 
        \filldraw[mOrange,opacity=0.85] (3,2) circle (2.6pt); 

        \draw[-stealth, very thick, mOrange,opacity=0.55] (2,2) to node[midway, left, very thick,scale=0.8,opacity=0.55] {$\varepsilon$} ++(0,-0.89);
        \draw[-stealth, very thick, mOrange,opacity=0.55] (3,1) to node[midway, right, very thick,scale=0.8,opacity=0.55] {$\varepsilon$} ++(0,0.89);
        \draw[-stealth, very thick, mBlue,opacity=0.55] (3,2) to node[midway, right, very thick,scale=0.8,opacity=0.55] {$3\varepsilon$} ++(0,1);
        \end{tikzpicture}
        \subcaption{}
        \label{fig:ill_example_3a}
    \end{subfigure}
    \hfill
    \begin{subfigure}[c]{0.495\textwidth}
        \centering
        \begin{tikzpicture}[scale=1.25, transform shape]
        \draw[step=1cm,mBlack,very thin,opacity=0.2] (1,1) grid (3,3);
            \draw[thick,-stealth] (1,1) -- (3.2,1) node[anchor=west,scale=1.1] {$\theta$};
            \draw[thick,-stealth] (1,1) -- (1,3.2) node[anchor=south,scale=1.1] {$p$};
            \draw[thick,mBlack,opacity=0.5] (1,1) -- (3,3);
            \fill[mBlack,fill opacity=0.05] (1,1) -- (3,3) -- (3,1) -- cycle;

            \node[anchor=north,scale=0.95] at (1,1) {1};
            \node[anchor=north,scale=0.95] at (2,1) {2};
            \node[anchor=north,scale=0.95] at (3,1) {3};
            \node[anchor=east,scale=0.95]  at (1,1) {1};
            \node[anchor=east,scale=0.95]  at (1,2) {2};
            \node[anchor=east,scale=0.95]  at (1,3) {3};

        \filldraw[mBlack,opacity=0.85] (1,1) circle (3.0pt); 
        \filldraw[mOrange,opacity=0.85] (2,1) circle (3.0pt); 

        \filldraw[mBlack,opacity=0.85] (2,2) circle (1.2pt); 
        \filldraw[mOrange,opacity=0.85] (3,2) circle (2pt); 

        \filldraw[mBlack,opacity=0.85] (3,3) circle (1.2pt); 

        \end{tikzpicture}
        \subcaption{}
        \label{fig:ill_example_3b}
    \end{subfigure}
    \caption{(a): an efficient segmentation, and two beneficial downward transfers leading to (b). (b): a consumer-surplus maximizing segmentation, and a beneficial redistributive transfer leading to (c). (c): a consumer-surplus maximizing and redistributively saturated segmentation, and a redistributive and an upward transfers leading to (d). (d): a strongly monotone and redistributively saturated segmentation. Dot sizes are proportional to masses and orange dots indicate binding obedience constraints.}
    \label{fig:ill_example_2}
\end{figure}

Reallocating buyers in this way eventually saturates some obedience constraint. In this case, once the mass $\delta$ of types $2$ and $3$ shifted downward reaches $0.15$, the obedience constraint in segment $1$ becomes binding: given that segment’s new composition, the seller is indifferent between prices $1$ and $2$. \cref{fig:ill_example_2b} shows the resulting segmentation, with the orange dot indicating the binding constraint. Note that the uniform price $2$ is optimal in every segment, while the segmentation remains efficient. Consequently, this segmentation maximizes consumer surplus, achieving a seller-consumer surplus combination at the bottom-right corner of \citeauthor{bbm15}'s \citeyearpar{bbm15} welfare triangle.

However, this segmentation cannot be optimal if the designer has redistributive preferences. The reason is that the designer can implement an alternative reallocation of buyers across segments that strictly improves his objective (and, more generally, weakly improves any redistributive welfare criterion). Specifically, he can exchange a mass $\varepsilon$ of type $2$ agents from segment $2$ with an equal mass of type $3$ agents from segment $1$, as illustrated in \cref{fig:ill_example_2b}. We refer to such reallocations as \emph{redistributive transfers} (\cref{def:red_trans}). Redistributive transfers leave the obedience constraints in the lower segment unaffected, but they may alter those in the higher segment. In the current example, since the obedience constraint associated with price $3$ is slack in segment $2$, this swap does not violate obedience for sufficiently small $\varepsilon$. Mathematically, redistributive transfers are always welfare-improving because redistributive welfare functions satisfy the supermodularity condition \labelcref{cond:R}.

Executing the redistributive transfer in \cref{fig:ill_example_2b} for a value of $\varepsilon$ that makes the obedience constraint in segment $2$ bind yields the segmentation shown in \cref{fig:ill_example_3a}. This segmentation maximizes consumer surplus. However, it is the only consumer-surplus-maximizing segmentation that can be optimal under strictly redistributive welfare functions.\footnote{This unique selection among consumer-surplus-maximizing segmentations is, however, not a generic property in environments with more types.} This follows from our main result, \cref{thm:opt_red_seg}, which characterizes the class of \emph{redistributively saturated} segmentations (\cref{def:saturation}) as comprising exactly those segmentations that are optimal for strictly redistributive welfare functions. This illustrates how redistributive preferences can serve as a selection device among the multiplicity of consumer-surplus-maximizing segmentations (i.e., among the segmentations leading to the bottom-right corner of \citeauthor{bbm15}'s \citeyearpar{bbm15} triangle). The resulting segmentation implements a relatively mild form of progressive pricing, which we term \emph{weak monotonicity} (\cref{def:weakly_monotone,cor:weak_mon}): whenever a buyer of some type consumes at a given price, there must \emph{exist} a buyer of higher type consuming at a (weakly) higher price. Under the present weighted-surplus objective, this segmentation is optimal if and only if $\lambda(2) \leq 4 =4 \times \lambda(3)$.

Note that the segmentation depicted in \cref{fig:ill_example_3a} still induces some type-$3$ buyers to consume at a price lower than some type-$2$ buyers. This cannot be ``fixed'' simply with the redistributive transfer described in \cref{fig:ill_example_2b} due to the binding obedience constraint in segment $2$. When $\lambda(2) > 4$, however, the designer assigns sufficient priority to type-$2$ buyers to justify another reallocation of buyers across segments: one that bypasses the binding obedience constraint in segment $2$ by supplementing the original redistributive transfer with an additional upward transfer of some mass of type-$3$ buyers from segment $2$ to a newly created segment $3$. \Cref{fig:ill_example_3a} illustrates these transfers and their associated proportional masses that exactly maintain the binding obedience constraint in segment $2$. This upward transfer of type-$3$ buyers entails destroying part of their surplus in excess of the surplus gained by type-$2$ buyers through the redistributive transfer. This naturally raises the question of when such a trade-off is welfare-improving. Our notion of \emph{strongly redistributive welfare functions} (\cref{def:strong_red}) specifically characterizes the class of welfare objectives for which these trade-offs are always beneficial. 

In the present example, applying this definition yields the simple condition $\lambda(2) > 4$. When this condition is satisfied, the designer is willing to implement the combination of transfers depicted in \cref{fig:ill_example_3a} until segment $1$ no longer contains any type-$3$ buyer, resulting in the segmentation shown in \cref{fig:ill_example_3b}. This segmentation is, once again, redistributively saturated (\cref{def:saturation}). Within the class of such segmentations, it implements an extreme form of progressive pricing, which we refer to as \emph{strong monotonicity} (\cref{def:strongly_monotone}): if a buyer of some type consumes at a given price, then \emph{every} higher type pays (weakly) higher prices.\footnote{Strong monotonicity can be interpreted as a discrete analogue of a monotone partitional information structure (see \cref{sec:red_order}).} This constitutes an instance of \cref{prop:strong_mon}, which establishes that, for any strongly redistributive welfare function, the optimal segmentation is strongly monotone.

Crucially, the segmentation depicted in \cref{fig:ill_example_3b} is \emph{not} consumer-surplus maximizing. The reduction in consumer surplus results from reallocating some high-types into higher-price segments. This reallocation enables the seller, as a by-product, to extract additional surplus from these buyers, thereby generating extra profit, which we refer to as a \emph{redistributive rent}. More generally, it is a common implication of strongly monotone segmentations that they induce redistributive rents in favor of the seller. Our \cref{thm:rent} characterizes precisely the conditions under which this occurs. The presence of redistributive rents demonstrates the trade-off between maximizing aggregate consumer surplus and channeling the gains from segmentation toward low types.

\section{Redistributive Segmentations and Welfare}

We start by proving a preliminary result that simplifies the analysis: it is without loss of generality to restrict the search for optimal redistributive segmentations to the \emph{efficient} ones. We then characterize the optimal redistributive segmentations. Finally, we discuss the economic implications of our characterization. First, we show that optimal redistributive segmentations must exhibit a \emph{monotonic} structure that induces the seller to price \emph{progressively}. Second, we show that when the redistributive concerns are \emph{strong}, optimal redistributive segmentations can lead the seller to obtain a \emph{strictly positive} share of the surplus.  

\subsection{Efficiency}\label{sec:efficiency}

A segmentation is efficient if it leads the seller to serve all the buyers. Let $\Omega = \bigl\{(\theta,p)\in\Theta \times P \ | \ \theta\geq p \bigr\}$. For any $\mu\in\Delta(\Theta)$, the set of efficient segmentations of $\mu$ is defined by
\begin{equation*}
    \Sigma^{\star}(\mu) = \bigl\{\sigma\in\Sigma(\mu) \ | \ \supp(\sigma)\subseteq \Omega \bigr\}.
\end{equation*}

\begin{lemma}[Efficiency]\label{lem:efficiency}
    For any $w\in \mathcal{R}$ and $\mu\in\Delta(\Theta)$, there exists an optimal segmentation for the problem \labelcref{eqn:designer_pb} that belongs to $\Sigma^{\star}(\mu)$.
\end{lemma}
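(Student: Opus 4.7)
The plan is to take any optimal segmentation $\sigma$ for \labelcref{eqn:designer_pb} and modify it into an efficient one $\sigma'$ that achieves at least the same aggregate welfare. Existence of an optimum is not an issue: $\Sigma(\mu)$ is a compact subset of $\Delta(\Theta\times P)$ defined by finitely many linear (in)equalities, and $\sigma\mapsto\mathbb{E}_\sigma[w(\theta,p)]$ is linear, so an optimum exists.

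The construction is the following. Starting from an optimal $\sigma$, for every pair $(\theta,p)$ such that $\theta<p$ and $\sigma(\theta,p)>0$, I would reassign the entire mass $\sigma(\theta,p)$ from segment $p$ to segment $\theta$. Formally, I would set $\sigma'(\theta,p)=0$ whenever $\theta<p$, $\sigma'(\theta,p)=\sigma(\theta,p)$ whenever $\theta>p$, and $\sigma'(\theta,\theta)=\sigma(\theta,\theta)+\sum_{p>\theta}\sigma(\theta,p)$. By construction, $\supp(\sigma')\subseteq\Omega$, and market consistency \labelcref{eqn:market_consistency} is preserved since the total mass of each type $\theta$ is unchanged.

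The main step is verifying obedience \labelcref{eqn:obedience} in the two kinds of affected segments. In any segment $p$ where some mass of types $\theta<p$ was removed, the seller's revenue at the recommended price $p$ is unchanged (those types were not buying at $p$ anyway), while revenues from deviating to any price $q<p$ can only (weakly) decrease, since the removed buyers of type $\theta$ would have consumed at any $q\leq\theta$; thus \labelcref{eqn:obedience} at $p$ is preserved. In any segment $\theta$ where mass $\Delta>0$ of type-$\theta$ buyers was added, the revenue at the recommended price $\theta$ increases by $\theta\Delta$, the revenue at any deviation price $q\leq\theta$ increases only by $q\Delta\leq\theta\Delta$, and the revenue at any deviation price $q>\theta$ is unchanged (type $\theta$ does not buy at such $q$); so the obedience inequality $\theta\cdot N_{\theta}\geq q\cdot N_{q}$ is preserved in all cases. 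This is the only real subtlety of the argument.

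Finally, since $w(\theta,p)=0$ whenever $\theta<p$, the contributions from the removed pairs were zero, while the additions contribute $w(\theta,\theta)\cdot\sum_{p>\theta}\sigma(\theta,p)\geq 0$ by nonnegativity of $w$ on $\Omega$. Hence $\mathbb{E}_{\sigma'}[w(\theta,p)]\geq\mathbb{E}_{\sigma}[w(\theta,p)]$, so $\sigma'$ is also optimal for \labelcref{eqn:designer_pb} and belongs to $\Sigma_{\Omega}(\mu)$, completing the proof.
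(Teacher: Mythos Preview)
Your proof is correct and follows essentially the same approach as the paper's: reallocate every mass at $(\theta,p)$ with $\theta<p$ to $(\theta,\theta)$, verify that \labelcref{eqn:market_consistency} and \labelcref{eqn:obedience} are preserved, and note that welfare weakly increases since $w(\theta,p)=0\leq w(\theta,\theta)$. The paper performs one such reallocation and asserts that obedience is ``relaxed in both segments''; you do all reallocations simultaneously and spell out the obedience checks more carefully, which is a minor improvement in explicitness but not a different argument.
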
 
\begin{proof}
    Let $\mu\in\Delta(\Theta)$ and consider $\sigma\in\Sigma(\mu)$ with $\supp(\sigma)\not\subset \Omega$. Therefore, there must exist $(\theta,p)\in\supp(\sigma)$ such that $\theta<p$. Consider reallocating the entire mass of buyers $\sigma(\theta,p)$ to $(\theta,\theta)$. This reallocation violates neither \labelcref{eqn:market_consistency} nor \labelcref{eqn:obedience}. The total mass of buyers of type $\theta$ is still equal to $\mu(\theta)$ so \labelcref{eqn:market_consistency} is preserved. Furthermore, this reallocation relaxes \labelcref{eqn:obedience} in both segments $\theta$ and $p$. Finally, since $w(\theta, p)= 0 \leq w(\theta,\theta)$, the aggregate welfare weakly increases. In fact, aggregate welfare weakly increases without hurting the seller or the buyers---a (weak) Pareto improvement.
\end{proof}
\cref{lem:efficiency} establishes that segmenting a market in a redistributive way never hinders efficiency. The intuition behind that result is similar to \citeauthor{hagpanah_siegel_jpe}'s (\citeyear{hagpanah_siegel_jpe}) main result: any inefficient segmentation is Pareto-dominated (with respect to the seller's profit and consumer surplus) by some efficient one. For our class of welfare functions, such Pareto improvements always increase aggregate welfare, since $w(\theta,\, \cdot \,)$ is a decreasing function for each $\theta\in \Theta$.

Operationally, \cref{lem:efficiency} implies that we can restate the segmentation design problem \labelcref{eqn:designer_pb} as follows:
\begin{equation}\label{eqn:designer_pb_efficient}
    \max_{\sigma\in\Sigma^{\star}(\mu)} \; \sum_{(\theta,p)\in \Theta\times P} w(\theta, p) \, \sigma(\theta,p). \tag{$\mathrm{P}^{\star}_{w,\mu}$}
\end{equation}

For any $w\in\mathcal{R}$ and $\mu\in\Delta(\Theta)$, we say that $\sigma$ is \labelcref{eqn:designer_pb_efficient}-optimal if it solves the optimization program \labelcref{eqn:designer_pb_efficient}. We call \emph{optimal redistributive segmentations} all segmentations that are \labelcref{eqn:designer_pb_efficient}-optimal for some $\mu\in\Delta(\Theta)$ and $w\in\mathcal{R}$.

\subsection{Characterization of Optimal Segmentations}

The characterization of \labelcref{eqn:designer_pb_efficient}-optimal segmentations, stated below as \cref{thm:opt_red_seg}, is given in terms of the conditions described in \cref{def:saturation} (see below for a detailed explanation).
\begin{definition}[Redistributive Saturation]\label{def:saturation}
    Let $\mu\in\Delta(\Theta)$. A segmentation $\sigma\in\Sigma^{\star}(\mu)$ is \emph{redistributively saturated} (\emph{saturated} for short) if the following conditions hold:
    \begin{enumerate}[(a)]
        \item In any segment $p \in \supp(\sigma_{P})$ such that $p<\max \supp(\sigma_{P})$, the seller is indifferent between charging $p$ and some other price $q > p$;

        \item Let $p\in \supp(\sigma_{P})$ and $\theta\in\supp\bigl(\sigma(\, \cdot \, | \, p)\bigr)$. Then in every segment $p'\in\supp(\sigma_{P})$ such that $p<p'\leq \theta$ the seller is indifferent between charging $p'$ and $\theta$.
    \end{enumerate}
\end{definition}
An efficient segmentation is saturated if (a) in every segment, except maybe for the one with the highest price, the seller faces more than one optimal price, and (b) if some consumers of type $\theta$ belong to a segment with price $p$, then consumers of type $\theta$ must be sufficiently represented in higher segments so that $\theta$ is an optimal price in those segments. 

\begin{theorem}[Optimal redistributive segmentations]\label{thm:opt_red_seg}
Fix $\mu\in\Delta(\Theta)$. The following claims hold:
\begin{enumerate}[(i)]
    \item If $w\in\mathcal{R}$, then there exists a \labelcref{eqn:designer_pb_efficient}-optimal segmentation which is saturated. Furthermore, if $w\in\overline{\mathcal{R}}$, then every \labelcref{eqn:designer_pb_efficient}-optimal segmentation is saturated.
    \item Conversely, if $\sigma\in\Sigma^{\star}(\mu)$ is saturated, then there exists $w\in\mathcal{R}, w\neq 0$, such that $\sigma$ is \labelcref{eqn:designer_pb_efficient}-optimal.
\end{enumerate}
\end{theorem}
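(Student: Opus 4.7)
The plan is to cast \labelcref{eqn:designer_pb_efficient} as a linear program over the polytope $\Sigma_{\Omega}(\mu)$ and to identify saturation with the absence of welfare-improving feasible reallocations. The central algebraic tool is an \emph{elementary swap}: for $\theta_L<\theta_H$ and $p_L<p_H$, the perturbation that moves $\varepsilon$ mass of type $\theta_H$ from segment $p_L$ to segment $p_H$ while moving $\varepsilon$ mass of type $\theta_L$ the other way changes aggregate welfare by $\varepsilon\bigl([w(\theta_L,p_L)-w(\theta_L,p_H)]-[w(\theta_H,p_L)-w(\theta_H,p_H)]\bigr)$, which by \labelcref{cond:R} is non-negative for every $w\in\mathcal{R}$ and strictly positive for every $w\in\overline{\mathcal{R}}$. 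Both parts of the theorem then reduce to showing that saturation is equivalent to the nonexistence of feasible positive-assortative swaps at $\sigma$.

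For part (i), I would argue by contraposition: if $\sigma$ fails saturation, then some positive-assortative swap is feasible for small $\varepsilon>0$. A failure of clause (a) at a non-maximal segment $p$ yields strict slack in segment $p$'s obedience constraint against some higher price $q$, which I would exploit by swapping a high-type buyer from $p$ into a suitable higher segment against a low-type moving down; the slack absorbs the compositional change in $p$, while the receiving segment's obedience is preserved either because it is maximal or because the displaced low type does not threaten any further price increase. A failure of clause (b) admits an exactly analogous swap directly between the two segments $p$ and $p'$ named in the condition, using the strict slack in segment $p'$ against the price $\theta$. For $w\in\overline{\mathcal{R}}$ such a swap strictly raises the objective and contradicts optimality; for a general $w\in\mathcal{R}$, iterating these weakly improving swaps---each of which strictly tightens some previously slack obedience inequality---terminates after finitely many steps in a saturated optimal segmentation.

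For part (ii) I would construct $w\in\mathcal{R}$ by separation. Any feasible deviation $\delta=\sigma'-\sigma$ can be expressed as a finite signed sum of elementary swaps together with intra-segment motions. Saturation forces every positive-assortative swap available from $\sigma$ to be infeasible: clause (a) prevents introducing high-type mass into a non-maximal segment whose obedience against every higher price was already tight, and clause (b) prevents adding type-$\theta$ mass to any higher segment because the obedience constraint against $\theta$ there is already binding. Hence $\delta$ decomposes into a non-negative combination of negative-assortative swaps and intra-segment motions, both of which yield weak welfare losses for every $w\in\mathcal{R}$ and strict losses for every $w\in\overline{\mathcal{R}}$. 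Picking any $w\in\overline{\mathcal{R}}$ therefore renders $\sigma$ optimal for \labelcref{eqn:designer_pb_efficient}.

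The main obstacle is the decomposition of arbitrary feasible deviations into elementary swaps in part (ii), and, dually, the careful construction of a feasible swap in part (i) when the obvious swap is blocked by the absence of mass in the right cells (for instance, when segment $p$ at which clause (a) fails contains no type $\geq \max\supp(\sigma_P)$, so one must chain the swap through an intermediate segment). Both hinge on the combinatorial interplay between market consistency (a flow constraint across types) and obedience (a constraint within each segment), and on using the two clauses of \cref{def:saturation} in a complementary fashion: clause (a) controls reallocations that would alter the set of active prices, while clause (b) controls reallocations within the existing price distribution.
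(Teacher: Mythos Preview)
Your proposal has two genuine gaps.

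\textbf{Missing downward transfers.} Your only reallocation tool is the four-cell swap (the redistributive transfer). But $\mathcal{R}$ also requires $p\mapsto w(\theta,p)$ to be non-increasing, so pure \emph{downward} transfers---moving mass of a single type from a higher-price to a lower-price segment---are also welfare-improving for every $w\in\mathcal{R}$. These transfers change the price marginal $\sigma_P$, which your swaps cannot do. The paper's characterization of the redistributive order (\cref{prop:red_order}) uses both families, and both are needed: the dual cone of $\mathcal{R}$ is generated by redistributive \emph{and} downward transfers. Concretely, in part (i), when clause (a) fails at a segment $p$ that contains only types $\leq p'$ for every higher segment $p'$, there is no high type in $p$ to swap out; the paper instead performs a downward transfer from $(\bar\theta,\bar p)$ to $(\bar\theta,p)$, exploiting the slack in $p$ directly. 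Your ``chain through an intermediate segment'' suggestion does not fix this, because chaining swaps still preserves $\sigma_P$.

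\textbf{Part (ii) proves too much and its decomposition is unjustified.} Your conclusion---that \emph{any} $w\in\overline{\mathcal{R}}$ renders every saturated $\sigma$ optimal---is false: a given market typically admits several saturated segmentations (the paper uses this in \cref{thm:strong_mon} and \cref{lem:unique_strongly_mon}), yet for $w\in\overline{\mathcal{R}}$ the optimum among them is generically unique. The flaw is the inference ``every individual positive-assortative swap from $\sigma$ is infeasible, hence any feasible $\delta=\sigma'-\sigma$ decomposes into negative-assortative swaps and intra-segment motions.'' That does not follow: a feasible $\delta$ may combine a positive swap with other moves (e.g., an upward transfer of a high type) so that the combination satisfies obedience even though the positive swap alone does not---indeed this is exactly the mechanism behind strongly redistributive welfare functions in \cref{thm:strong_mon}. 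The paper handles this with two separate ingredients: a lemma (\cref{lem:maximal-single-transfer}) showing that for a saturated $\sigma$, no nonzero element of $\mathrm{cone}(\mathcal{T})$ keeps $\sigma+\tau$ feasible (not just no single $t\in\mathcal{T}$), and then a separating-hyperplane argument (\cref{prop:max_order_opt}) that produces a \emph{specific} $w\in\mathcal{R}$, depending on $\sigma$, for which $\sigma$ is optimal. You need both pieces; neither can be replaced by the local-swap reasoning you outline.
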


The above result establishes that the set of saturated segmentations and optimally redistributive ones are (almost) equal. Claim (i) states that an optimal segmentation must be saturated if the welfare function is strictly redistributive ($w \in \overline{\mathcal{R}}$). If $w \in \mathcal{R}$ is not strictly redistributive, then there may be indifferences that make some non-saturated segmentations also optimal, but a saturated optimal one always exists. Claim (ii) states that, conversely, any saturated segmentation is optimal for some redistributive objective. As a result, any property that is satisfied by saturated segmentations is also satisfied by optimal redistributive segmentations. We leverage this result to draw economic implications of optimal redistributive segmentations in \cref{sec:monotone,sec:rents}; \cref{sec:proofthm1} outlines its proof.\footnote{From now on, with slight abuse of terminology, we use the terms optimally redistributive segmentation and saturated segmentation interchangeably.} 

\subsubsection{Intuitive Explanation of \cref{thm:opt_red_seg}}

\Cref{thm:opt_red_seg} is best understood in light of what saturated segmentations achieve. Saturated segmentations are those for which no reallocation of buyers between segments could increase the expectation of \emph{every} redistributive welfare function without leading to a violation of the seller's obedience constraints. At this stage, two questions are in order: (1) Which type of reallocation increases the expectation of every redistributive welfare function? (2) When is it not possible to perform such reallocations without violating the seller's obedience constraints?

\Cref{prop:red_order}, formally stated in \cref{sec:proofthm1}, answers question (1) by identifying the two marginal reallocations of buyers that increase any redistributive welfare function. First, \emph{downward buyer transfers} (\cref{def:down_trans}) move buyers to segments with lower prices. Second, \emph{redistributive buyer transfers} (\cref{def:red_trans}) swap a given mass of low-type buyers in a high-price segment for an equal mass of higher-type buyers in a lower-price segment. \Cref{fig:red_down_trans} illustrates both transfers on an efficient segmentation (see also the example in \cref{fig:ill_example_2}).
\begin{figure}[h]
    \begin{center}
        \begin{tikzpicture}[scale=1, transform shape]
            \draw[step=1cm,mBlack,very thin,opacity=0.2] (0,0) grid (6,6);
            \draw[thick,-stealth] (0,0) -- (6,0) node[anchor=west,scale=1.25] {$\theta$};
            \draw[thick,-stealth] (0,0) -- (0,6) node[anchor=south,scale=1.25] {$p$};
        
            \draw[thick,mBlack,opacity=0.5] (0,0) -- (6,6);
            \fill[mBlack,fill opacity=0.05] (0,0) -- (6,6) -- (6,0) -- cycle;
            
            \placepoints{0}{1}{0}{3pt}{mBlack}{0.8};
            \placepoints{3}{5}{0}{3pt}{mBlack}{0.8};
            \placepoints{1}{2}{1}{3pt}{mBlack}{0.8};
            \placepoints{4}{6}{1}{3pt}{mBlack}{0.8};
            \placepoints{2}{4}{2}{3pt}{mBlack}{0.8};
            \placepoints{4}{6}{4}{3pt}{mBlack}{0.8};
            
            \draw[-stealth,thick,mBlue,opacity=1, very thick,opacity=0.75] (5,4) to node[midway, right, very thick,scale=1.25,opacity=0.75] {$\delta$} ++(0,-2);
            
            \draw[-stealth, very thick, mOrange,opacity=0.75] (3,2) to node[midway, left, very thick,scale=1.25,opacity=0.75] {$\varepsilon$} ++(0,-1);
            \draw[-stealth, very thick, mOrange,opacity=0.75] (4,1) to node[midway, right, very thick,scale=1.25,opacity=0.75] {$\varepsilon$} ++(0,0.89);
        \end{tikzpicture}
    \end{center}
    \caption{The shaded region represents the set $\Omega$. A downward transfer involving a mass $\delta$ of buyers (in blue) and a redistributive transfer involving a mass $\varepsilon$ of buyers (in orange).}
    \label{fig:red_down_trans}
\end{figure}

Downward transfers are desirable under redistributive welfare functions as they decrease the price charged to some buyers. Redistributive transfers are also desirable under redistributive welfare functions due to condition \labelcref{cond:R}. The main contribution of \cref{prop:red_order} is to establish the converse: any buyer reallocation that increases the expected value of all redistributive welfare functions must necessarily consist of some (finite) combination of downward and redistributive buyer transfers. This result provides a complete characterization of welfare-improving buyer reallocations under redistributive objectives.

While downward and redistributive buyer transfers are the key to making segmentations more redistributive, these transfers interfere with the seller's obedience constraints. Reallocating buyers through downward or redistributive transfers can lead the seller to deviate to a higher price in one or more segments. \cref{prop:max_order_opt,prop:charac_max_order}, also formally stated in \cref{sec:proofthm1}, answer question (2). They show that saturation is precisely the condition that ensures that no downward or redistributive transfer (or a combination thereof) can be performed without causing the seller to deviate from the recommended prices. 

To get a sense of why, let us first investigate the conditions under which a particular downward transfer is infeasible for some segmentation. Consider transferring some mass of buyers of type $\tilde{\theta}$ from a higher-price segment $p''$ to a lower-price segment $p'$, as illustrated in \cref{fig:downtransfer_ob}.
\begin{figure}[h]
    \begin{subfigure}[c]{0.495\textwidth}
            \begin{tikzpicture}[scale=0.6, transform shape]
            \node[below,scale=1.75] at (6,0) {$\tilde{\theta}$};
            \node[scale=0.5] at (6,0) {$|$};
                       
            \node[left,scale=1.75] at (0,2) {$p'$};
            \node[scale=0.5,rotate=90] at (0,2) {$|$};
            \node[left,scale=1.75] at (0,5) {$p''$};
            \node[scale=0.5,rotate=90] at (0,5) {$|$};
            \draw[step=1cm,mBlack,very thin,opacity=0.2] (0,0) grid (9,9);
            \draw[thick,- stealth] (0,0) -- (9,0) node[anchor=west,scale=1.75] {$\theta$};
            \draw[thick,- stealth] (0,0) -- (0,9) node[anchor=south,scale=1.75] {$p$};
            
            \draw[thick,mBlack,opacity=0.5] (0,0) -- (9,9);
            \fill[mBlack,fill opacity=0.05] (0,0) -- (9,9) -- (9,0) -- cycle;
            
            \node[mOrange,opacity=0.75,scale=2] at (2,2) {$\bullet$};
            \node[mOrange,opacity=0.75,scale=2] at (5,5) {$\bullet$};
            
            \coordinate (p1) at (6,5);
            \coordinate (p2) at (6,2);
            \coordinate (p3) at (8,2);
            \coordinate (p4) at (8,5);
            
            
            \node[mBlack,opacity=0.75,scale=2] at (3,2) {$\bullet$};
            \node[mBlack,opacity=0.75,scale=2] at (6,2) {$\bullet$};
            \node[mBlack,opacity=0.75,scale=2] at (6,5) {$\bullet$};
            \node[mOrange,opacity=0.75,scale=2] at (5,5) {$\bullet$};
            \node[mBlack,opacity=0.75,scale=2] at (7,5) {$\bullet$};
            \node[mBlack,opacity=0.75,scale=2] at (9,5) {$\bullet$};
        
            \node[mBlack,opacity=0.75,scale=2] at (4,2) {$\bullet$};
            \node[mBlack,opacity=0.75,scale=2] at (5,2) {$\bullet$};
            \node[mBlack,opacity=0.75,scale=2] at (7,2) {$\bullet$};
            \node[mBlack,opacity=0.75,scale=2] at (8,2) {$\bullet$};
            \node[mBlack,opacity=0.75,scale=2] at (9,2) {$\bullet$};
        
            \node[mBlack,opacity=0.75,scale=2] at (8,5) {$\bullet$};

            \draw[mOrange, thick, opacity=0.6, rounded corners] (2.5,1.5) rectangle (6.5,2.5);
            \draw[mOrange, thick, opacity=0.6, rounded corners] (6.5,4.5) rectangle (9.5,5.5);
            
            
            \draw[- stealth,ultra thick,mBlue,opacity=0.6] (p1)  -- ++(0,-2.8);
            \node[right, mBlue, opacity=0.6, scale=2] at (6,3.5) {$\delta$};
        
            \end{tikzpicture}
    \caption{Downward transfer.}
    \label{fig:downtransfer_ob}
    \end{subfigure}
    \begin{subfigure}[c]{0.495\textwidth}
           \centering
            \begin{tikzpicture}[scale=0.6, transform shape]
            \node[below,scale=1.75] at (5,0) {$\theta'$};
            \node[scale=0.5] at (5,0) {$|$};
            \node[below,scale=1.75] at (8,0) {$\theta''$};
            \node[scale=0.5] at (8,0) {$|$};
            \node[left,scale=1.75] at (0,2) {$p'$};
            \node[scale=0.5,rotate=90] at (0,2) {$|$};
            \node[left,scale=1.75] at (0,4) {$p''$};
            \node[scale=0.5,rotate=90] at (0,4) {$|$};
            \draw[step=1cm,mBlack,very thin,opacity=0.2] (0,0) grid (9,9);
            \draw[thick,- stealth] (0,0) -- (9,0) node[anchor=west,scale=1.75] {$\theta$};
            \draw[thick,- stealth] (0,0) -- (0,9) node[anchor=south,scale=1.75] {$p$};
            
            \draw[thick,mBlack,opacity=0.5] (0,0) -- (9,9);
            \fill[mBlack,fill opacity=0.05] (0,0) -- (9,9) -- (9,0) -- cycle;
            
            \node[mOrange,opacity=0.75,scale=2] at (2,2) {$\bullet$};
            \node[mOrange,opacity=0.75,scale=2] at (4,4) {$\bullet$};
            
            \coordinate (p1) at (5,4);
            \coordinate (p2) at (5,2);
            \coordinate (p3) at (8,2);
            \coordinate (p4) at (8,4);
            
            \node[mBlack,opacity=0.75,scale=2] at (6,4) {$\bullet$};
            \node[mBlack,opacity=0.75,scale=2] at (6,2) {$\bullet$};
            \node[mBlack,opacity=0.75,scale=2] at (3,2) {$\bullet$};
            \node[mBlack,opacity=0.75,scale=2] at (5,4) {$\bullet$};
            \node[mBlack,opacity=0.75,scale=2] at (7,4) {$\bullet$};
            \node[mBlack,opacity=0.75,scale=2] at (9,4) {$\bullet$};
        
            \node[mBlack,opacity=0.75,scale=2] at (4,2) {$\bullet$};
            \node[mBlack,opacity=0.75,scale=2] at (5,2) {$\bullet$};
            \node[mBlack,opacity=0.75,scale=2] at (7,2) {$\bullet$};
            \node[mBlack,opacity=0.75,scale=2] at (8,2) {$\bullet$};
            \node[mBlack,opacity=0.75,scale=2] at (9,2) {$\bullet$};
        
            \node[mBlack,opacity=0.75,scale=2] at (8,4) {$\bullet$};

            \draw[mOrange, thick, opacity=0.6, rounded corners] (5.5,3.5) rectangle (8.5,4.5);
            
            
            \draw[- stealth,ultra thick,mBlue,opacity=0.6] (p1) -- ++(0,-1.85) ;
            \node[left, mBlue, opacity=0.5, scale=2] at (5,3) {$\varepsilon$};
            \draw[- stealth,ultra thick,mBlue,opacity=0.6] (p3) -- ++(0,+1.85);
            \node[right, mBlue, opacity=0.5, scale=2] at (8,3) {$\varepsilon$};
            
            \end{tikzpicture}
            \caption{Redistributive transfer.}
            \label{fig:red_transfer_ob}
    \end{subfigure}
    \caption{Points within the orange boxes represent $(\theta,p)$ for which \labelcref{eqn:obedience} tightens after a transfer.}
\end{figure}
This transfer affects the seller's obedience constraints in both segments. In the lower segment $p'$, adding these buyers increases demand at prices between $p'$ and $\tilde{\theta}$, tightening constraints for these prices (shown in the orange box in the lower segment). Simultaneously, in the higher segment $p''$, removing buyers decreases demand at prices between $p''$ and $\tilde{\theta}$, tightening constraints for prices above $\tilde{\theta}$ (represented in the orange box in the upper segment). If at least one of the obedience constraints is already binding at those prices in either segment, even an arbitrarily small transfer will violate the seller's obedience constraints, making the downward transfer infeasible.

The reasoning is similar for redistributive transfers. Consider swapping some mass of buyers between segments as shown in \cref{fig:red_transfer_ob}: buyers of type $\theta'$ move from segment $p''$ to $p'$, while buyers of type $\theta''$ move from $p'$ to $p''$. This transfer does not tighten any of the obedience constraints in segment $p'$. Indeed, it decreases the demand at all prices between $\theta'$ and $\theta''$, and does not affect the demand at any other price. However, in the higher segment $p''$, it increases demand at all these intermediate prices (highlighted by the orange box in \cref{fig:red_transfer_ob}), potentially violating obedience constraints in that segment.

Saturated segmentations are precisely those for which no downward or redistributive transfers remain feasible. Condition (ii) ensures that no buyer of type $\theta$ belonging to segment $p$ can be moved upwards, as the obedience constraint for price $\theta$ binds in every upper segment they could be moved to.\footnote{Since we are restricting attention to efficient segmentations, this corresponds to segments $p'$ such that $p<p'\leq\theta$.} This blocks all redistributive transfers, which necessarily involve moving some buyers upward. Condition (ii) also blocks some downward transfers: one cannot transfer some type $\theta$ downwards to a segment that includes any higher type $\theta'>\theta$, since this would lead to a violation of obedience in the origin segment. Transferring $\theta$ downwards to a segment without any higher types, however, could in principle be possible. Condition (i) blocks those remaining downward transfers by ensuring that even if such transfer does not lead to a violation of obedience in the origin segment, it does in the destination one.

\subsubsection{Proof of \cref{thm:opt_red_seg}}\label{sec:proofthm1}

The proof of \cref{thm:opt_red_seg} is based on a key result of independent interest. We articulate a sense in which one segmentation is \textit{more redistributive} than another by introducing a partial order on the space of efficient segmentations (\cref{def:red_order}). Under this order, a segmentation is more redistributive than another if the former generates higher aggregate welfare than the latter \emph{for every} redistributive welfare function. \Cref{prop:red_order} establishes that this occurs precisely when the more redistributive segmentation can be obtained through a sequence of downward and redistributive buyer transfers applied to the less redistributive one.\footnote{This provides an answer to a practical question: Say one observes some market segmentation. How can it be modified to make it \textit{more redistributive}?} We then demonstrate that optimal redistributive segmentations are exactly those for which no segmentation that is more redistributive exists (\cref{prop:max_order_opt}) and that this condition is equivalent to being saturated (\cref{prop:charac_max_order}).


\begin{proof}[Proof of \cref{thm:opt_red_seg}]
    We first introduce a \emph{redistributive order} on the space of efficient segmentations:\footnote{See \cref{sec:red_order} for the detailed connection between our redistributive order and existing stochastic orders, in particular, the supermodular one \citep{shaked2007,Muller2013,Meyer2015}.}
    \begin{definition}[Redistributive order]\label{def:red_order}
        Fix $\mu\in\Delta(\Theta)$, and let $\sigma, \sigma' \in\Sigma^{\star}(\mu)$. We say that $\sigma$ is \emph{more redistributive} than $\sigma'$, denoted $\sigma\succsim_{\mathcal{R}} \sigma'$, if
        \begin{equation*}
            \forall w\in \mathcal{R}, \quad \sum_{(\theta,p)\in \Theta\times P} w(\theta, p) \, \sigma(\theta,p) \geq \sum_{(\theta,p)\in \Theta\times P} w(\theta, p) \, \sigma'(\theta,p).
        \end{equation*}
        For any $\mu\in\Delta(\Theta)$, we call the binary relation $\succsim_{\mathcal{R}}$ over $\Sigma^{\star}(\mu)$ the \emph{redistributive order} and denote as $\succ_{\mathcal{R}}$ its asymmetric part.\footnote{One can show that $\succsim_{\mathcal{R}}$ is a \emph{partial order}, that is, a transitive, reflexive and antisymmetric binary relation. The first two conditions are easy to verify from the definition, the fact that it is antisymmetric follows from \cref{prop:red_order}.} We say that a segmentation $\sigma \in \Sigma^{\star}(\mu)$ is $\succsim_{\mathcal{R}}$-\textit{maximal} if there exists no other segmentation $\sigma' \in \Sigma^{\star}(\mu)$ such that $\sigma' \succ_{\mathcal{R}} \sigma$.
    \end{definition}
    
    Note that $\succsim_{\mathcal{R}}$ is not a complete order. A given segmentation may yield higher aggregate welfare than another for certain redistributive welfare functions, while the latter may yield higher aggregate welfare than the former for different redistributive welfare functions. Therefore, there might exist multiple $\succsim_{\mathcal{R}}$-maximal segmentations of a given market.
    
    We call a \emph{buyer mass transfer} (in short, a \emph{transfer}) any $t\colon \Omega\to \mathbb{R}$ that satisfies $\sum_{p\leq \theta} t(\theta,p) = 0$ for any $\theta\in \Theta$. Intuitively, transfers perturb a segmentation $\sigma$ by moving buyers between segments, with $t(\theta,p)$ denoting the mass of buyers of type $\theta$ that is added to or removed from a segment with price $p$. We now define two particular classes of transfers.
    \begin{definition}[Downward transfers]\label{def:down_trans}
        A transfer $t$ is \emph{downward} if there exist $\delta\geq 0$ and $\tilde{\theta}, p', p'' \in \Theta$ such that $p'<p''\leq \tilde{\theta}$, and
        \begin{equation*}
            t(\theta,p)= \left\{\begin{array}{ll}
                \delta  & \text{if $(\theta,p) =(\tilde{\theta},p')$} \\
                -\delta  &  \text{if $(\theta,p) = (\tilde{\theta},p'')$} \\
                0  & \text{otherwise}
        \end{array}\right.
        \end{equation*}
    \end{definition}
    \begin{definition}[Redistributive transfers]\label{def:red_trans}
        A transfer $t$ is \emph{redistributive} if there exists $\varepsilon \geq 0$ and $(\theta', p''), (\theta'', p') \in \Omega$ such that $\theta'<\theta''$, $p'<p''$, and
        \begin{equation*}
        t(\theta,p)=\left\{
        \begin{array}{ll}
            \varepsilon  & \text{if $(\theta,p) \in \bigl\{(\theta',p'), (\theta'',p'') \bigr\}$} \\[5pt]
           -\varepsilon  &  \text{if $(\theta,p)\in \bigl\{(\theta',p''), (\theta'',p' )\bigr\}$} \\[5pt]
           0  & \text{otherwise}
        \end{array}\right.
        \end{equation*}
    \end{definition}
    
    \Cref{prop:red_order} shows that performing a sequence of such transfers is not only \emph{sufficient} to make a segmentation more redistributive (according to \cref{def:red_order}) but also \emph{necessary}.
    Let $\mathcal{T}$ be the set of all downward and redistributive transfers.
    \begin{proposition}\label{prop:red_order}
        Fix $\mu \in \Delta(\Theta)$, and let $\sigma, \sigma' \in \Sigma^{\star}(\mu)$. The two following claims are equivalent:
        \begin{enumerate}[(i)]
            \item $\sigma \succsim_{\mathcal{R}} \sigma'$;
            \item There exists $n\in\mathbb{N}$, $n>0$, and $(t_{i})_{i\in\{1,\dots,n\}}\in \mathcal{T}^{n}$ such that
        \begin{equation*}
            \sigma = \sigma' + \sum_{i=1}^{n} t_{i}.
        \end{equation*}
        \end{enumerate}
    \end{proposition}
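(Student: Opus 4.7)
The proof splits naturally into the easy direction (ii) $\Rightarrow$ (i), verified by a direct computation on each elementary transfer, and the harder direction (i) $\Rightarrow$ (ii), which calls for a Farkas-style convex duality argument.

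For (ii) $\Rightarrow$ (i), it suffices to show that a single elementary transfer $t$ satisfies $\sum_{(\theta,p)} w(\theta,p)\,t(\theta,p) \geq 0$ for every $w \in \mathcal{R}$; summing over the sequence $(t_i)_{i=1}^n$ then yields (i). A downward transfer with parameters $(\tilde{\theta}, p', p'')$ and mass $\delta$ contributes $\delta\bigl[w(\tilde{\theta}, p') - w(\tilde{\theta}, p'')\bigr] \geq 0$, by the monotonicity of $p \mapsto w(\tilde{\theta}, p)$ on $\{\theta_{1}, \dots, \tilde{\theta}\}$. A redistributive transfer with parameters $(\theta', p'), (\theta'', p'')$ and mass $\varepsilon$ contributes $\varepsilon\bigl[w(\theta', p') + w(\theta'', p'') - w(\theta', p'') - w(\theta'', p')\bigr] \geq 0$, by \labelcref{cond:R}, whose hypotheses $\theta'' > \theta' \geq p'' > p'$ hold because $(\theta', p'') \in \Omega$.

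For (i) $\Rightarrow$ (ii), the plan is to work in the finite-dimensional vector space $V = \bigl\{t\colon \Omega \to \mathbb{R} : \sum_{p \leq \theta} t(\theta, p) = 0 \text{ for all } \theta \in \Theta\bigr\}$ of transfers, to which $\sigma - \sigma'$ belongs since both segmentations share the marginal $\mu$. Let $\mathcal{T}_{0} \subset V$ be the finite set of elementary (unit-mass) downward and redistributive transfers, and let $C = \mathrm{cone}(\mathcal{T}_{0})$ be the convex cone they generate. Since $\mathcal{T} = \mathbb{R}_{+} \cdot \mathcal{T}_{0}$, membership $\sigma - \sigma' \in C$ is exactly claim (ii). Finitely generated cones are closed, so Farkas' lemma applies: $\sigma - \sigma' \in C$ if and only if every linear functional on $V$ that is non-negative on $\mathcal{T}_{0}$ is also non-negative on $\sigma - \sigma'$.

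The bridge step, and the main obstacle, is to identify this dual cone with (the image of) $\mathcal{R}$. A linear functional on $V$ is represented by a function $w\colon \Omega \to \mathbb{R}$, and non-negativity on $\mathcal{T}_{0}$ translates exactly into (a) $w$ non-increasing in $p$ on $\Omega$, and (b) $w$ supermodular on $\Omega$. Given such a dual $w$, extend it by zero off $\Omega$ and add the shift $c(\theta)\mathds{1}_{\Omega}(\theta,p)$ with $c(\theta) = -\min_{p \leq \theta} w(\theta,p)$ to obtain $\tilde{w}$; a direct check gives $\tilde{w} \in \mathcal{R}$ (vanishing off $\Omega$ and non-negativity on $\Omega$ by construction; monotonicity in $p$ preserved by the shift; and restricted supermodularity \labelcref{cond:R} reducing to the supermodularity of $w$ on $\Omega$, since all four points involved lie in $\Omega$). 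Because elements of $V$ have zero marginals over $\theta$, $\sum_{(\theta,p)} c(\theta)\bigl(\sigma(\theta,p) - \sigma'(\theta,p)\bigr) = 0$, hence $\langle \tilde{w}, \sigma - \sigma'\rangle = \langle w, \sigma - \sigma'\rangle$. Thus (i) forces $\langle w, \sigma - \sigma'\rangle \geq 0$ for every dual $w$, and Farkas delivers (ii). The delicate part is precisely this zero-marginal trick, which makes the dual cone of $C$ coincide with $\mathcal{R}$ modulo shifts in $\theta$ alone—without it, the dual cone conditions on their own would not imply the non-negativity requirement of $\mathcal{R}$.
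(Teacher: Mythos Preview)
Your proof is correct and follows essentially the same duality scheme as the paper: both directions match, and for (i) $\Rightarrow$ (ii) both you and the paper identify the dual cone of the elementary transfers with the monotone-supermodular functions on $\Omega$ and then invoke closedness of the finitely generated cone (Farkas/separating hyperplane). The only cosmetic difference is in the normalization step: the paper enlarges $\mathcal{R}$ to $\mathcal{E} = \{w + b : w \in \mathcal{R},\, b \in \mathbb{R}\}$ and uses that $\langle \sigma, b\rangle = \langle \sigma', b\rangle$ for a \emph{constant} shift, whereas you exploit the stronger fact that $\sigma - \sigma'$ has zero $\theta$-marginals and shift by a $\theta$-dependent $c(\theta)$; either shift suffices to recover the non-negativity requirement of $\mathcal{R}$ without affecting the inner product with $\sigma - \sigma'$.
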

    
    The proof of \cref{prop:red_order}, which can be found in \cref{app:proof_red_order}, relies on duality arguments akin to those employed by \cite{Muller2013} and \cite{Meyer2015}. Specifically, we show that the space of redistributive welfare functions (up to an additive constant) is dual to the cone of downward and redistributive transfers.
    
    The definition of the redistributive order implies that, given a market $\mu\in\Delta(\Theta)$ and a redistributive welfare function $w\in\mathcal{R}$, the set of \labelcref{eqn:designer_pb_efficient}-optimal segmentations must include some $\succsim_{\mathcal{R}}$-maximal segmentation. \cref{prop:max_order_opt}, the proof of which is in \cref{app:proof_satur}, establishes the converse: for any aggregate market $\mu\in\Delta(\Theta)$, any $\succsim_{\mathcal{R}}$-maximal segmentation of $\mu$ can be rationalized as optimal for some redistributive welfare function. Its proof relies on the separating hyperplane theorem.
    \begin{lemma}\label{prop:max_order_opt} 
    Fix $\mu\in\Delta(\Theta)$. The following claims hold:
        \begin{enumerate}[(i)]
            \item For any $w \in \mathcal{R}$, there exists a \labelcref{eqn:designer_pb_efficient}-optimal segmentation of $\mu$ that is $\succsim_{\mathcal{R}}$-maximal. Furthermore, if $w \in \overline{\mathcal{R}}$, then any \labelcref{eqn:designer_pb_efficient}-optimal segmentation of $\mu$ is $\succsim_{\mathcal{R}}$-maximal.
            \item Conversely, if $\sigma\in\Sigma^{\star}(\mu)$ is $\succsim_{\mathcal{R}}$-maximal, then there exists $w \in \mathcal{R}, w\neq 0$, such that $\sigma$ is \labelcref{eqn:designer_pb_efficient}-optimal.
        \end{enumerate}
    \end{lemma}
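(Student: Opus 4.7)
For part (i), I would rely on a key consequence of \cref{prop:red_order}: \emph{if $\sigma'\succ_{\mathcal{R}}\sigma$ and $w^s\in\overline{\mathcal{R}}$, then $\mathbb{E}_{\sigma'}[w^s]>\mathbb{E}_\sigma[w^s]$.} Indeed, \cref{prop:red_order} supplies a decomposition $\sigma'-\sigma=\sum_i t_i$ into downward and redistributive transfers, and since $\sigma'\neq\sigma$ at least one $t_i$ is nonzero; because $w^s$ is strictly decreasing in price and strictly supermodular on $\Omega$, each nonzero $t_i$ contributes a strictly positive amount to $\mathbb{E}[w^s]$, so the sum is strictly positive. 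The strictly redistributive sub-claim of (i) then follows immediately: if $\sigma^\star$ were \labelcref{eqn:designer_pb_efficient}-optimal for some $w\in\overline{\mathcal{R}}$ but not $\succsim_{\mathcal{R}}$-maximal, any $\sigma'\succ_{\mathcal{R}}\sigma^\star$ would yield $\mathbb{E}_{\sigma'}[w]>\mathbb{E}_{\sigma^\star}[w]$, contradicting optimality.

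For the existence clause of (i), I would use a strictly redistributive tie-breaker. Fix any $w^s\in\overline{\mathcal{R}}$ (nonempty via \cref{ex:decreasing_pareto} with a strictly decreasing $\lambda$) and let $F(w)\subseteq\Sigma_\Omega(\mu)$ denote the compact face of \labelcref{eqn:designer_pb_efficient}-optimal segmentations. Any $\tilde\sigma\in\argmax_{\sigma'\in F(w)}\mathbb{E}_{\sigma'}[w^s]$ is \labelcref{eqn:designer_pb_efficient}-optimal by construction. If $\sigma'\succ_{\mathcal{R}}\tilde\sigma$ existed, then $\mathbb{E}_{\sigma'}[w]\geq\mathbb{E}_{\tilde\sigma}[w]$ (because $w\in\mathcal{R}$) forces $\sigma'\in F(w)$, while the observation above gives $\mathbb{E}_{\sigma'}[w^s]>\mathbb{E}_{\tilde\sigma}[w^s]$, contradicting the choice of $\tilde\sigma$. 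Hence $\tilde\sigma$ is $\succsim_{\mathcal{R}}$-maximal.

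For part (ii), I would apply a separating-hyperplane argument in the transfer space $V:=\{t\in\mathbb{R}^\Omega:\sum_{p:(\theta,p)\in\Omega}t(\theta,p)=0 \text{ for all } \theta\in\Theta\}$. Let $T\subseteq V$ be the polyhedral cone of non-negative linear combinations of downward and redistributive transfers, and set $A:=\Sigma_\Omega(\mu)-\sigma\subseteq V$. By \cref{prop:red_order}, $\sigma'\succsim_{\mathcal{R}}\sigma$ is equivalent to $\sigma'-\sigma\in T$, and the dual of $T$ (viewed as functionals on $V$ induced by welfare functions on $\Omega$) is exactly $\mathcal{R}$. Maximality of $\sigma$ then translates to $A\cap T\subseteq L$, where $L:=T\cap(-T)$ is the lineality of $T$---the directions of $\succsim_{\mathcal{R}}$-equivalence. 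Quotienting by $L$, $\bar T:=T/L$ is a pointed polyhedral cone and $\bar A:=(A+L)/L$ is a polytope with $\bar A\cap\bar T=\{0\}$. The separating hyperplane theorem in $V/L$ yields a linear functional $\bar f$ with $\bar f\geq 0$ on $\bar T$ and $\bar f\leq 0$ on $\bar A$. Lifting $\bar f$ to a functional on $V$ vanishing on $L$ and representing it, via the bipolar theorem for the polyhedral cone $\mathcal{R}$, as the restriction to $V$ of some $w\in\mathcal{R}$, I obtain $\mathbb{E}_\sigma[w]\geq\mathbb{E}_{\sigma'}[w]$ for every $\sigma'\in\Sigma_\Omega(\mu)$, so $\sigma$ is \labelcref{eqn:designer_pb_efficient}-optimal.

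The main obstacle is ensuring that the separating functional is realized by a bona fide welfare function in $\mathcal{R}$---with all the required nonnegativity, monotonicity, and supermodularity---rather than merely as an abstract dual element. This is resolved by the bipolar theorem for polyhedral cones in finite dimensions: the dual of $T$ as a set of functionals on $V$ coincides with the image of $\mathcal{R}$ modulo additive functions of $\theta$ alone, which contribute the same constant $\sum_\theta w(\theta)\mu(\theta)$ to $\mathbb{E}_{\sigma'}[\cdot]$ across all $\sigma'\in\Sigma_\Omega(\mu)$ and hence leave optimality unaffected. Careful handling of the lineality $L$ in the quotient step is what guarantees that the lifted functional genuinely lies in $\mathcal{R}$.
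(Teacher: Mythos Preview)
Your proposal is correct and rests on the same two ideas as the paper: for (i), the observation that any nonzero element of $\mathrm{cone}(\mathcal{T})$ is detected by every $w^s\in\overline{\mathcal{R}}$; for (ii), a separating-hyperplane argument combined with the duality between $\mathrm{cone}(\mathcal{T})$ and the class $\mathcal{E}$ (the paper's \cref{lem:proof_red_order}).

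Two points of comparison are worth noting. For (i), your explicit tie-breaking construction---maximizing a fixed $w^s\in\overline{\mathcal{R}}$ over the optimal face $F(w)$---is a cleaner treatment of the existence clause than the paper's one-line ``direct,'' which does not spell out why the process of passing to $\succsim_{\mathcal{R}}$-improvements terminates at a maximal element. For (ii), your passage through the lineality $L=T\cap(-T)$ and the quotient $V/L$ is unnecessary: the very observation you use in (i) shows that any $w^s\in\overline{\mathcal{R}}$ is strictly positive on $T\setminus\{0\}$, hence $T$ is already pointed and $L=\{0\}$. The paper exploits this directly, separating $\Sigma_\Omega(\mu)$ from $\{\sigma\}+\mathrm{cone}(\mathcal{T})$ in $\mathbb{R}^\Omega$ (their intersection is the singleton $\{\sigma\}$), and then identifying the separating functional with some $v\in\mathcal{E}=\mathcal{R}+\mathbb{R}$. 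This sidesteps your more delicate discussion of ``$\mathcal{R}$ modulo additive functions of $\theta$'': only a scalar constant, not an arbitrary $\theta$-function, needs to be stripped off, and constants leave $\langle v,\sigma'\rangle-\langle v,\sigma\rangle$ unchanged.
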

    
    \cref{prop:max_order_opt} establishes the (almost) equality between $\succsim_{\mathcal{R}}$-maximal segmentations and optimal redistributive ones. Consequently, any property satisfied by the former must also be satisfied by the latter. 
    
    \cref{prop:charac_max_order}, the proof of which is in \cref{app:proof_satur_2}, concludes the proof of \cref{thm:opt_red_seg} by showing that saturation is necessary and sufficient for  $\succsim_{\mathcal{R}}$-maximality. Its proof consists of first showing that saturation characterizes the segmentations for which no \textit{single} downward or redistributive transfer can be feasibly performed, and then establishing a local-to-global result showing that this is also sufficient for no \textit{sequence} of downward and redistributive transfers to be feasible.
    \begin{lemma}\label{prop:charac_max_order}
        Fix $\mu\in\Delta(\Theta)$. Then, $\sigma\in\Sigma^{\star}(\mu)$ is $\succsim_{\mathcal{R}}$-maximal if and only if it is saturated.
    \end{lemma}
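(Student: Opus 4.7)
The plan is to apply Proposition~\ref{prop:red_order}: $\sigma\in\Sigma_\Omega(\mu)$ is $\succsim_\mathcal{R}$-maximal if and only if no non-zero non-negative combination of downward and redistributive transfer directions, added to $\sigma$, yields another element of $\Sigma_\Omega(\mu)$. Both directions of the biconditional reduce to analyzing how such transfer directions interact with the binding obedience constraints at $\sigma$.

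For the direction \emph{not saturated implies not maximal}, I exhibit a feasible improving transfer. If condition (a) fails at a non-top segment $\bar p\in\supp(\sigma_P)$, I pick $p^\star\in\supp(\sigma_P)$ with $p^\star>\bar p$ and let $\tilde\theta=\max\supp\bigl(\sigma(\,\cdot\,|\,p^\star)\bigr)$. The downward transfer of mass $\delta>0$ of type $\tilde\theta$ from $p^\star$ to $\bar p$ is feasible for small $\delta$: at $\bar p$, only the obedience constraints for deviations to $q\in(\bar p,\tilde\theta]$ tighten, each strictly slack by the failure of (a); at $p^\star$, only the constraints for $q>\tilde\theta$ tighten, each trivially slack because demand at those $q$ in $p^\star$ is zero by the maximality of $\tilde\theta$. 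If (b) fails at some triple $(\theta,p,p')$, then the slack of obedience in $p'$ at $q=\theta$ combined with $p'<\theta$ forces $p'$ to contain types strictly below $\theta$; letting $\theta'$ be the largest such, a redistributive transfer of mass $\varepsilon>0$ swapping $(\theta,p)$ with $(\theta',p')$ is feasible for small $\varepsilon$: the only newly tightened constraints are those of $p'$ for deviations to $q\in(\theta',\theta]\cap\Theta$; slack at $q=\theta$ is assumed, and for $q\in(\theta',\theta)\cap\Theta$ the maximality of $\theta'$ gives $\sigma(q,p')=0$, so slack at $q$ follows from that at $\theta$ via $q<\theta$. In either case, the resulting segmentation lies in $\Sigma_\Omega(\mu)$ and strictly dominates $\sigma$ under any $w\in\overline{\mathcal{R}}$.

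For the direction \emph{saturated implies maximal}, I argue by contradiction: if $\sigma'\succ_\mathcal{R}\sigma$ in $\Sigma_\Omega(\mu)$, then $T:=\sigma'-\sigma$ is a non-zero non-negative combination of downward and redistributive transfer directions by Proposition~\ref{prop:red_order}. The paragraphs preceding the lemma already show that every \emph{single} such direction strictly tightens at least one binding obedience constraint of $\sigma$ under saturation: (b) blocks every redistributive transfer and every downward transfer whose destination contains a type higher than the moved one, and (a) blocks the remaining downward transfers. My plan to upgrade to arbitrary non-negative combinations is to induct on the largest type $\tilde\theta^\star$ perturbed by $T$: since no transfer in the decomposition affects types above $\tilde\theta^\star$, the motion of $\tilde\theta^\star$ consists only of downward moves of $\tilde\theta^\star$ itself and redistributive moves with $\theta''=\tilde\theta^\star$; the binding constraints forced by (b) at segments containing $\tilde\theta^\star$ and by (a) at their lower neighbors preclude any net non-negative effect on them from such moves, forcing $T$ to vanish on type $\tilde\theta^\star$, after which one iterates on strictly smaller types until $T\equiv 0$, a contradiction.

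\emph{Main obstacle.} The main difficulty is this upgrade from single-transfer to combination infeasibility: a priori, relaxations from some transfer directions could offset the tightenings of binding constraints by others. Closing this gap requires the extremal/inductive argument sketched above; alternatively, one could construct a strictly redistributive $w$ for which $\sigma$ is the unique optimum and invoke Lemma~\ref{prop:max_order_opt}(i) directly.
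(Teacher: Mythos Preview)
Your argument for ``not saturated $\Rightarrow$ not maximal'' is correct and essentially matches the paper's: exhibiting a single feasible downward or redistributive transfer already produces some $\sigma'\succ_{\mathcal{R}}\sigma$.

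The genuine gap lies in ``saturated $\Rightarrow$ maximal''. Your inductive sketch on the largest perturbed type $\tilde\theta^\star$ does not isolate $T(\tilde\theta^\star,\cdot)$ from the rest of $T$. The obedience constraint at deviation price $\tilde\theta^\star$ in a segment $p'$ reads
\[
p'\sum_{x\ge p'}(\sigma+T)(x,p')\ \ge\ \tilde\theta^\star\sum_{x\ge\tilde\theta^\star}(\sigma+T)(x,p'),
\]
and its left-hand side depends on $T(x,p')$ for \emph{every} type $x\in[p',\tilde\theta^\star)$, not only on $T(\tilde\theta^\star,p')$. Downward transfers of lower types into $p'$ relax precisely this constraint and can offset the tightening induced by the motion of $\tilde\theta^\star$; so the saturation constraints do not force $T(\tilde\theta^\star,\cdot)=0$ independently of what $T$ does on lower types, and there is no clean way to peel off the top type and iterate. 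The paper sidesteps this entanglement entirely by first proving a separate reduction (Lemma~\ref{lem:maximal-single-transfer}) via convex geometry rather than constraint-by-constraint bookkeeping: if no single $t\in\mathcal{T}\setminus\{0\}$ keeps $\sigma+t$ inside the polytope $\Sigma_\Omega(\mu)$, then no nonzero $\tau\in\mathrm{cone}(\mathcal{T})$ does either. The idea is that if some such $\tau$ existed, the convex set $\Sigma_\Omega(\mu)-\sigma$ would meet $\mathrm{cone}(\mathcal{T})$ nontrivially while touching its boundary only at $0$, forcing $\Sigma_\Omega(\mu)-\sigma\subset\mathrm{cone}(\mathcal{T})$; this is then contradicted by observing that the first-degree-discrimination segmentation $\sigma^F\in\Sigma_\Omega(\mu)$ cannot be reached from $\sigma$ by any element of $\mathrm{cone}(\mathcal{T})$. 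With this reduction in hand, both directions of the lemma become exactly the single-transfer checks you have already carried out.
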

\end{proof}


\subsection{Monotonicity and Progressive Pricing}\label{sec:monotone}

In this subsection, we show an important economic implication of \cref{thm:opt_red_seg}. Optimal redistributive segmentations exhibit a \emph{positive assortative} structure. They assign low-type buyers to low-price segments and high-type buyers to high-price segments. Furthermore, this positive-assortative structure is strengthened as redistributive concerns grow. 

\subsubsection{Weak Monotonicity}

We start by introducing a weak notion of positive assortativity for segmentations:
\begin{definition}[Weakly monotone segmentations]
    \label{def:weakly_monotone}
    For any $\mu \in \Delta(\Theta)$, a segmentation $\sigma\in\Sigma^{\star}(\mu)$ is \emph{weakly monotone} if for any $p,p'\in \supp(\sigma_{P})$ such that $p<p'$,
    \begin{equation*}
        \max \supp\bigl(\sigma(\, \cdot \, | \, p)\bigr) \leq \max  \supp\bigl(\sigma(\, \cdot \, | \, p')\bigr). 
    \end{equation*}
\end{definition}
\begin{figure}[t]
\begin{subfigure}[c]{0.48\textwidth}
    \begin{tikzpicture}[scale=0.6, transform shape]
    \draw[step=1cm,mBlack,very thin,opacity=0.2] (0,0) grid (9,9);
    \draw[thick,-stealth] (0,0) -- (9,0) node[anchor=west,scale=1.75] {$\theta$};
    \draw[thick,-stealth] (0,0) -- (0,9) node[anchor=south,scale=1.75] {$p$};
    
    \draw[thick,mBlack,opacity=0.5] (0,0) -- (9,9);
    \fill[mBlack,fill opacity=0.05] (0,0) -- (9,9) -- (9,0) -- cycle;

    \placepoints{0}{4}{0}{4.25pt}{mBlack}{0.8};
    \placepoints{2}{6}{2}{4.25pt}{mBlack}{0.8};
    \placepoints{8}{8}{2}{4.25pt}{mBlack}{0.8};
    \placepoints{5}{8}{5}{4.25pt}{mBlack}{0.8};
    \placepoints{7}{9}{7}{4.25pt}{mBlack}{0.8};

    \end{tikzpicture}
    \caption{Weak monotonicity.}
    \label{fig:monotone}
\end{subfigure}
\begin{subfigure}[c]{0.48\textwidth}
    \begin{tikzpicture}[scale=0.6, transform shape]
    \draw[step=1cm,mBlack,very thin,opacity=0.2] (0,0) grid (9,9);
    \draw[thick,-stealth] (0,0) -- (9,0) node[anchor=west,scale=1.8] {$\theta$};
    \draw[thick,-stealth] (0,0) -- (0,9) node[anchor=south,scale=1.8] {$p$};
    
    \draw[thick,mBlack,opacity=0.5] (0,0) -- (9,9);
    \fill[mBlack,fill opacity=0.05] (0,0) -- (9,9) -- (9,0) -- cycle;

    \placepoints{0}{3}{0}{4.25pt}{mBlack}{0.8};
    \placepoints{3}{6}{3}{4.25pt}{mBlack}{0.8};
    \placepoints{7}{9}{7}{4.25pt}{mBlack}{0.8};
    \placepoints{9}{9}{9}{4.25pt}{mBlack}{0.8};


    \end{tikzpicture}
    \caption{Strong monotonicity.}
    \label{fig:strongly_monotone}
\end{subfigure}
\caption{Monotone segmentations.}
\label{fig:optimal_segmentations}
\end{figure}

A segmentation is weakly monotone if, for any pair of segments, the highest type assigned to the lower-price segment is not strictly greater than the highest type assigned to the higher-price segment. \Cref{fig:monotone} illustrates weakly monotone segmentations.

A direct implication of \cref{thm:opt_red_seg} is that all optimal redistributive segmentations are weakly monotone. Indeed, by condition (b) in \cref{def:saturation}, for any saturated segmentation, if a type $\theta$ is in a segment $p \in \supp(\sigma_{P})$, then it must belong to any other segment $p' \in \supp(\sigma_{P})$ such that $p < p' \leq \theta$, which implies weak monotonicity.\footnote{We state the result for strictly redistributive welfare functions to clarify the exposition. If $w\in \mathcal{R}$, then for any $\mu\in\Delta(\Theta)$, there exists a \labelcref{eqn:designer_pb_efficient}-optimal and weakly monotone segmentation of $\mu$.} 

\begin{corollary} \label{cor:weak_mon}
    For any $\mu \in \Delta(\Theta)$ and $w \in \overline{\mathcal{R}}$, every \labelcref{eqn:designer_pb_efficient}-optimal segmentation is \emph{weakly monotone}.
\end{corollary}

\Cref{cor:weak_mon} establishes that redistributive segmentations induce a form of \emph{progressive pricing}, whereby high-types are charged higher prices than low-types. In weakly monotone segmentations, for every buyer of a given type, there exists a buyer of a higher type that is charged a (weakly) higher price. This notion of progressivity is mild, since some buyers of a lower type might still pay higher prices than some buyers of higher type. This outcome arises in regions where the support of different segments overlap, i.e., when buyers are \emph{randomly} assigned to segments.

\begin{figure}
    \centering
    \begin{subfigure}[b]{0.4955\linewidth}
        \begin{center}
            \begin{tikzpicture}[scale=0.8, transform shape]
            \draw[step=1cm,mBlack,very thin,opacity=0.2] (0,0) grid (7,7);
            \draw[thick,-stealth] (0,0) -- (7,0) node[anchor=west,scale=1.25] {$\theta$};
            \draw[thick,-stealth] (0,0) -- (0,7) node[anchor=south,scale=1.25] {$p$};
        
            \node[left,mBlack,scale=1.25] at (0,1) {$p$};
            \node[left,mBlack,scale=1.25] at (0,3) {$\theta_{k}$};
    
            \node[left,mBlack,scale=1.25] at (0,6) {$\theta_{\ell}$};
        
            \node[below,mBlack,scale=1.25] at (3,0) {$\theta_{k}$};
            \node[below,mBlack,scale=1.25] at (4,0) {$\theta_{k+1}$};
            \node[below,mBlack,scale=1.25] at (6,0) {$\theta_{\ell}$};

            \draw[thick,mBlack,opacity=0.5] (0,0) -- (7,7);
            \fill[mBlack,fill opacity=0.05] (0,0) -- (7,7) -- (7,0) -- cycle;
        
            \node[mOrange,scale=1.5] at (3,3) {$\bullet$};
            \node[mOrange, scale=1.5] at (4,3) {$\bullet$};
            \node[mOrange,scale=1.5] at (1,1) {$\bullet$};
            \node[mOrange,scale=1.5] at (2,1) {$\bullet$};
            \node[mBlack,scale=1.5] at (3,1) {$\bullet$};
            \node[mBlack,scale=1.5] at (4,1) {$\bullet$};
            \node[mBlack,scale=1.5] at (5,3) {$\bullet$};
            \node[mBlack,scale=1.5] at (6,3) {$\bullet$};
        
            \end{tikzpicture}
        \end{center}
        \caption{A saturated weakly monotone segmentation.}
        \label{fig:strongly_red_left}
    \end{subfigure}
    \begin{subfigure}[b]{0.4955\linewidth}
        \begin{center}
            \begin{tikzpicture}[scale=0.8, transform shape]
            \draw[step=1cm,mBlack,very thin,opacity=0.2] (0,0) grid (7,7);
            \draw[thick,-stealth] (0,0) -- (7,0) node[anchor=west,scale=1.25] {$\theta$};
            \draw[thick,-stealth] (0,0) -- (0,7) node[anchor=south,scale=1.25] {$p$};
        
            \node[left,mBlack,scale=1.25] at (0,1) {$p$};
            \node[left,mBlack,scale=1.25] at (0,3) {$\theta_{k}$};
    
            \node[left,mBlack,scale=1.25] at (0,6) {$\theta_{\ell}$};
        
            \node[below,mBlack,scale=1.25] at (3,0) {$\theta_{k}$};
            \node[below,mBlack,scale=1.25] at (4,0) {$\theta_{k+1}$};
            \node[below,mBlack,scale=1.25] at (6,0) {$\theta_{\ell}$};

            \draw[thick,mBlack,opacity=0.5] (0,0) -- (7,7);
            \fill[mBlack,fill opacity=0.05] (0,0) -- (7,7) -- (7,0) -- cycle;
        
            \draw[-stealth,very thick,mOrange,opacity=0.7] (3,3) to node[midway, left,scale=1.25] {$\varepsilon$} ++(0,-1.9);
            \draw[-stealth,very thick,mOrange,opacity=0.7] (4,1) to node[midway, right,scale=1.25] {$\varepsilon$} ++(0,+1.9);
            \draw[-stealth,very thick,mBlue,opacity=0.7] (6,3) to node[midway, right,scale=1.25] {$\frac{\theta_{k+1}}{\theta_{k+1}-\theta_{k}}\, \varepsilon$} ++(0,+2.9);
        
            \node[mOrange,scale=1.5] at (3,3) {$\bullet$};
            \node[mOrange, scale=1.5] at (4,3) {$\bullet$};
            \node[mOrange,scale=1.5] at (1,1) {$\bullet$};
            \node[mOrange,scale=1.5] at (2,1) {$\bullet$};
            \node[mBlack,scale=1.5] at (3,1) {$\bullet$};
            \node[mBlack,scale=1.5] at (4,1) {$\bullet$};
            \node[mBlack,scale=1.5] at (5,3) {$\bullet$};
            \node[mBlack,scale=1.5] at (6,3) {$\bullet$};
            \node[mOrange,scale=1.5] at (6,6) {$\bullet$};
        
            \end{tikzpicture}
        \end{center}
        \caption{A compensated redistributive transfer.}
        \label{fig:strongly_red_right}
    \end{subfigure}
    \caption{The left panel illustrates a saturated segmentation where some buyers $\theta_{k+1}$ consume at a lower price than some buyers $\theta_k$. Points in orange represent $(\theta,p)$ for which \labelcref{eqn:obedience} binds.}
    \label{fig:strongly_red}
\end{figure}

\subsubsection{Strengthening Monotonicity}

\paragraph{Compensated redistributive transfers}\label{sec:compensated_transfers}

Regions of random assignment are exactly where unresolved redistributive tensions remain in saturated segmentations. Consider the segmentation shown in \cref{fig:strongly_red_left}: some buyers of type $\theta_k$ are placed in the higher segment to ``make room'' for some buyers of type $\theta_{k+1}$ in the lower segment. These $\theta_k$ buyers would be better off in the lower segment, but in a saturated segmentation this cannot be achieved by simple downward or redistributive transfers without violating \labelcref{eqn:obedience}. In particular, any redistributive transfer between types $\theta_k$ and $\theta_{k+1}$ would induce the seller to deviate to price $\theta_{k+1}$ in the upper segment.

Yet such a reallocation of types $\theta_k$ to the lower segment can be feasibly implemented through an appropriate sequence of transfers. Consider the construction illustrated in \cref{fig:strongly_red_right}: first, a redistributive transfer swaps a mass $\varepsilon$ of types $\theta_k$ and $\theta_{k+1}$; then an upward transfer moves a mass $\frac{\theta_{k+1}}{\theta_{k+1}-\theta_{k}}\varepsilon$ of higher-type buyers $\theta_{\ell}$ to a segment where they are charged exactly their willingness to pay. Removing this mass of type $\theta_{\ell}$ exactly offsets the effect of adding an $\varepsilon$ mass of types $\theta_{k+1}$ to segment $\theta_k$, preserving obedience. Moreover, because the transferred $\theta_{\ell}$ buyers are the highest types in segment $\theta_k$ and are moved to $(\theta_{\ell}, \theta_{\ell})$, \labelcref{eqn:obedience} remains satisfied. Thus, whenever a segmentation contains multiple types at the overlap between segments, one can construct such a feasible combination of redistributive and upward transfers. We refer to this construction as \emph{compensated redistributive transfers}.

\paragraph{Strongly redistributive welfare functions}

Compensated redistributive transfers may be desirable under welfare functions that reflect \emph{strong} redistributive concerns. That is, welfare functions under which the welfare gain from lowering the price paid by a buyer $\theta_k$ offsets not only the welfare loss from an equivalent price increase for a buyer $\theta_{k+1}$ but also potential welfare losses borne by other higher types. Specifically, the compensated redistributive transfer illustrated in \cref{fig:strongly_red} is beneficial whenever the net welfare gain from the redistributive transfer between types $\theta_k$ and $\theta_{k+1}$ exceeds by a factor of $\frac{\theta_{k+1}}{\theta_{k+1}-\theta_{k}}$ the welfare loss from raising the price paid by a buyer of type $\theta_{\ell}$ from $\theta_k$ to $\theta_{\ell}$. 

We now define a class of welfare functions for which \emph{every} compensated redistributive transfer is desirable:

\begin{definition}[Strongly redistributive welfare functions]\label{def:strong_red}
    A welfare function $w\in \overline{\mathcal{R}}$ is \emph{strongly redistributive} if for any $p, \theta_k, \theta_\ell \in \Theta$ such that $p <\theta_{k} < \theta_\ell$, 
    \begin{multline}\label{cond:SR}
        \overbrace{\bigl[w(\theta_{k}, p) - w(\theta_{k},\theta_{k})\bigr] -\bigl[w(\theta_{k+1}, p)-w(\theta_{k+1}, \theta_{k})\bigr]}^{\text{\shortstack{Net welfare gain from a redistributive transfer between \\ types $\theta_{k}$ and $\theta_{k+1}$ in segments $p$ and $\theta_k$.}}} \\
        > \frac{\theta_{k+1}}{\theta_{k+1}-\theta_{k}}\underbrace{\bigl[w(\theta_{\ell},\theta_{k}) -  w(\theta_{\ell},\theta_{\ell})\bigr].}_{\text{\shortstack{Welfare loss from increasing the \\ price from $\theta_{k}$ to $\theta_{\ell}$ for type $\theta_{\ell}$.}}}\tag{$\mathrm{SR}$}
    \end{multline}
\end{definition}

\paragraph{Strong monotonicity}

Compensated redistributive transfers render any segmentation more positive-assortative by removing cases where a higher-type buyer pays less than a lower-type buyer.\footnote{Note that performing a compensated redistributive transfer on a saturated segmentation also increases the seller's profit, since the upward transfer increases the price being charged to $\theta_{\ell}$ while the redistributive transfer is profit-neutral. This aspect of compensated redistributive transfers is further examined in \cref{sec:rents}.} As redistributive concerns become stronger, more compensated redistributive transfers should be performed. This process reaches a limit once no further compensated redistributive transfers can be implemented, yielding what we call \emph{strongly monotone} segmentations:
\begin{definition}[Strongly monotone segmentations]
    \label{def:strongly_monotone}
    For any $\mu \in \Delta(\Theta)$, a segmentation $\sigma\in\Sigma^{\star}(\mu)$ is \emph{strongly monotone} if for any $p,p'\in \supp(\sigma_{P})$ such that $p<p'$, 
    \begin{equation*}
         \max \supp\bigl(\sigma(\, \cdot \, | \, p)\bigr) \leq p'. 
    \end{equation*}
\end{definition}

A segmentation is strongly monotone if buyers are assigned to segments \emph{deterministically}, except possibly for types lying at the boundary between two consecutive segments.\footnote{Our definition of weak monotonicity implies that segment supports are weak-set-ordered \citep{Che2021}. Strong monotonicity, by contrast, induces an ordering of segment supports stronger than the strong set order \citep{topkis}. Strongly monotone segmentations correspond to discrete version of \emph{monotone partitional} information structures, which partition the state space into convex sets \citep{Ivanov2021,Mensch2021,Kolotilin2022,Onuchic2023,Kolotilin2024}.} Strongly monotone segmentations, illustrated in \cref{fig:strongly_monotone}, thus represent a \emph{pure} form of progressive pricing: for any type, \emph{every} higher type is charged a (weakly) higher price.

\paragraph{Optimality of strong monotonicity}

Our next result shows that strongly redistributive welfare functions are the \emph{only} ones for which \emph{every} optimal segmentation is strongly monotone.

\begin{proposition}\label{prop:strong_mon}
      Any \labelcref{eqn:designer_pb_efficient}-optimal segmentation is strongly monotone for every $\mu\in\Delta(\Theta)$ if and only if $w\in \overline{\mathcal{R}}$ is strongly redistributive.
\end{proposition}

The proof of \cref{prop:strong_mon} appears in \cref{app:proof_mon}. We further show in the following lemma (the proof of which can be found in \cref{app:proof_rent}) that, for any given aggregate market, there is exactly one strongly monotone segmentation of that market that is saturated. This particular segmentation therefore represents the \emph{uniquely optimal} market segmentation for \emph{any} strongly redistributive welfare function.
\begin{lemma}\label{lem:unique_strongly_mon}
    For any $\mu \in \Delta(\Theta)$, there exists a unique strongly monotone segmentation $\sigma \in \Sigma^{\star}(\mu)$ that is saturated.
\end{lemma}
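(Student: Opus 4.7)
The plan is to obtain existence by a greedy construction and uniqueness by an induction over segments from below. For existence, build $\sigma^{\star}$ recursively: starting with $p_{1} := \min \supp(\mu)$, process the remaining types of $\supp(\mu)$ in increasing order, assigning each its full mass in segment~1 as long as obedience of $p_{1}$ is preserved, and close segment~1 by including the maximum mass $\alpha$ of the first type $\bar{\theta}$ whose full inclusion would violate obedience (with $\alpha$ determined by the first obedience constraint to bind). Then recurse on the remaining mass. The resulting $\sigma^{\star}$ is efficient and strongly monotone by construction, saturation~(a) holds because each non-terminal segment closes on a binding constraint, and saturation~(b) is automatic for strongly monotone segmentations, as noted in the discussion following \cref{thm:opt_red_seg}.

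For uniqueness, let $\sigma, \sigma' \in \Sigma_{\Omega}(\mu)$ be two strongly monotone saturated segmentations. The key structural observation is that in any efficient obedient segmentation the price of each segment equals its minimum type (obedience at $\theta_{\min}$ gives $p \geq \theta_{\min}$, and efficiency gives $p \leq \theta_{\min}$); together with strong monotonicity this pins down each segmentation by its sequence of segment prices $(p_{k})$ and boundary splits $(\alpha_{k}) \in [0, \mu(p_{k+1}))$, where $\alpha_{k}$ is the mass of $p_{k+1}$ kept in segment~$k$ (the strictness of the upper bound follows because segment~$k+1$ must have $p_{k+1}$ as its own minimum type). I induct on segment index, so it suffices to show $(p_{2}^{\sigma}, \alpha_{1}^{\sigma}) = (p_{2}^{\sigma'}, \alpha_{1}^{\sigma'})$, adopting the convention $p_{2} = +\infty$ for single-segment segmentations; once this is proved, the restrictions of $\sigma$ and $\sigma'$ to segments $\geq 2$ are two strongly monotone saturated segmentations of the same reduced market, and the induction closes.

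For the core equality, saturation~(a) in $\sigma$'s first segment delivers a binding price $q^{\star} \in (p_{1}, p_{2}^{\sigma}]$ with $p_{1} M^{\sigma} = q^{\star} D_{q^{\star}}^{\sigma}$, where $M$ and $D_{q}$ denote total segment mass and demand at $q$. Assuming WLOG $p_{2}^{\sigma} \leq p_{2}^{\sigma'}$, the first segment of $\sigma'$ differs from that of $\sigma$ by a non-negative excess $E = (\mu(p_{2}^{\sigma}) - \alpha_{1}^{\sigma}) + \sum_{p_{2}^{\sigma} < \theta < p_{2}^{\sigma'}} \mu(\theta) + \alpha_{1}^{\sigma'}$ concentrated on types weakly above $p_{2}^{\sigma} \geq q^{\star}$, so $M^{\sigma'} - M^{\sigma} = D_{q^{\star}}^{\sigma'} - D_{q^{\star}}^{\sigma} = E$; substituting into $\sigma'$'s obedience of $p_{1}$ against $q^{\star}$ and using the $\sigma$-identity collapses to $(q^{\star} - p_{1}) E \leq 0$, forcing $E = 0$. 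In the equality case $p_{2}^{\sigma} = p_{2}^{\sigma'}$, $E$ reduces to $\alpha_{1}^{\sigma'} - \alpha_{1}^{\sigma}$ and yields $\alpha_{1}^{\sigma} = \alpha_{1}^{\sigma'}$; in the strict case, $E = 0$ forces $\alpha_{1}^{\sigma} = \mu(p_{2}^{\sigma})$, contradicting the preliminary bound $\alpha_{1}^{\sigma} < \mu(p_{2}^{\sigma})$ whenever $\sigma$ has a second segment, while the single-segment degeneracy $p_{2}^{\sigma'} = +\infty$ forces $\sigma$ also to be single-segment, so $\sigma = \sigma'$ trivially. The delicate step is precisely this bookkeeping of boundary cases encoded by the $p_{2} = +\infty$ convention: one must verify that the $E = 0$ identity correctly aligns the single-segment degeneracies so that the induction over segments closes without gaps.
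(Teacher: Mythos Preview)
Your proposal is correct and follows essentially the same approach as the paper: existence via the greedy construction (which the paper also invokes), and uniqueness via an induction from the lowest segment upward. Your argument is in fact considerably more detailed than the paper's, which asserts without justification that ``any segmentation that is strongly monotone and saturated must have the same market segment $\theta_1$''; your obedience-comparison step (using the binding price $q^\star$ from saturation~(a) to force the excess mass $E$ to vanish) is precisely what is needed to substantiate that claim.
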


We now come back to \cref{ex:decreasing_pareto,ex:deacr_marg_surplus}. We show that the weighted consumer surplus welfare function is strongly redistributive if and only if Pareto weights decrease sufficiently fast in willingness to pay.\footnote{This shows incidentally that the set of strongly redistributive welfare functions is non-empty.} We also show that concave increasing transformations of consumer surplus are never strongly redistributive.
\setcounter{example}{\getrefnumber{ex:decreasing_pareto}-1} 
\begin{example}[Decreasing Pareto weights, cont'd]
    Let $\lambda\colon \Theta \to \mathbb{R}$ be a strictly decreasing function such that $\lambda(\theta)>0$ for all $\theta\in\Theta$. Define $w$ such that, for any $(\theta,p) \in \Omega$, $w(\theta,p) = \lambda(\theta)(\theta - p)$. By rearranging terms in the inequality \labelcref{cond:SR}, we can see that $w$ is strongly redistributive if and only if for any $\theta_k, \theta_\ell \in \Theta$ such that $2\leq k\leq K-1$ and $k < \ell \leq K$,
    \begin{equation}\label{eqn:strong_red_weights}
        \frac{\lambda(\theta_{k})-\lambda(\theta_{k+1})}{\lambda(\theta_{\ell})} \geq\frac{\theta_{k+1}}{\theta_{k+1}-\theta_{k}} \frac{\theta_{\ell}-\theta_k}{\theta_k-\theta_{k-1}}.
    \end{equation}
    
    Inequality \labelcref{eqn:strong_red_weights} imposes a condition on the rate of decrease of $\theta\mapsto\lambda(\theta)$. In this case, we can reinterpret \cref{prop:strong_mon} as follows: every optimal segmentation is strongly monotone if and only if Pareto weights decrease at a sufficiently fast rate. For instance, weights of the form $\lambda(\theta)=\theta^{-\gamma}$ or $\lambda(\theta)=\exp(-\gamma\theta)$ are strongly redistributive if $\gamma$ exceeds some threshold.
\end{example}

\begin{example}[Increasing concave transformations, cont'd]
    Let $u: \mathbb{R} \to \mathbb{R}$ be strictly increasing and strictly concave. Define $w$ such that, for any $(\theta, p) \in \Omega$,  $w(\theta, p) = u(\theta - p)$. In this case, one can show that condition \labelcref{cond:SR} can never be satisfied. In contrast to \cref{ex:decreasing_pareto}, when the redistributive concerns stem from a concave transformation of buyers' surplus, the welfare gain associated with redistributive transfers can never outweigh the welfare loss associated with burning the surplus of some high type. In this case, \cref{prop:strong_mon} implies that there always exists a market $\mu\in\Delta(\Theta)$ such that no optimal segmentation is strongly monotone.
\end{example}

\paragraph{The greedy redistributive algorithm} 

The simple structure of strongly monotone segmentations and their uniqueness pave the way for a constructive approach. We describe a procedure, called the \emph{greedy redistributive algorithm}, that outputs, for any market $\mu$, the unique strongly monotone segmentation of $\mu$ that is saturated.  

Strongly monotone segmentations can be seen as the result of lexicographic optimization: it benefits each type sequentially as much as possible, from the smallest to the largest. The greedy redistributive algorithm follows this logic and sequentially assigns types into segments. It starts by creating a lower segment with all buyers of type $\theta_1$ and then proceeds to include types $\theta_2$ in that segment. If all buyers $\theta_2$ can be included while keeping the price in the segment at $\theta_1$, the algorithm does so and proceeds to include types $\theta_3$. At some point in this process, the algorithm reaches a point such that including all buyers of a given type $\theta_k$ would violate \labelcref{eqn:obedience} in that segment. When the algorithm reaches this point, it fits as many types $\theta_k$ as possible in the segment so that \labelcref{eqn:obedience} becomes binding for some price and moves on to creating a new segment with the remaining mass of types $\theta_k$. This inclusion process is then repeated in the newly created segment starting at type $\theta_{k+1}$, and then proceeds sequentially as previously explained. The algorithm ends when every buyer has been assigned to some segment.

\subsection{Redistributive Rents}\label{sec:rents}

We now study how redistributive segmentations distribute surplus between the seller and the buyers. We show that if the welfare function is strongly redistributive, then optimal segmentation might allocate some of the created surplus to the seller. Hence, although the welfare function does not take into consideration the seller's profit, redistributive segmentations may not maximize total consumer surplus.

For any $\mu\in\Delta(\Theta)$, we say that a segmentation $\sigma\in\Sigma^{\star}(\mu)$ gives a \emph{rent} to the seller if the profit under $\sigma$ is strictly higher than the profit under uniform pricing: that is, if
\begin{equation*}
    \sum_{p \in P} p \sum_{\theta\geq p} \sigma(\theta,p) > p^{\star}_{\mu}\sum_{\theta\geq p^{\star}_{\mu}} \mu(\theta).
\end{equation*}

The following joint probability distribution over types and prices will prove to be of particular interest. For any market $\mu \in \Delta(\Theta)$, we define $\sigma^{\star}_{\mu} \in \Delta(\Theta \times P)$ as follows:
\begin{equation*}
    \sigma^{\star}_{\mu}(\theta, p)=\left\{
    \begin{array}{ll}
        \mu(\theta) & \text{if $\theta < p^{\star}_{\mu}$ and $p = \theta_{1}$} \\[15pt]
        \min\left\{\mu(p^{\star}_{\mu}),\displaystyle\frac{\theta_{1}}{p^{\star}_{\mu} - \theta_{1}}\displaystyle\sum_{\theta< p^{\star}_{\mu}} \mu(\theta)\right\} & \text{if $(\theta,p) = (p^{\star}_{\mu}, \theta_{1})$} \\[15pt]
        \mu(p^{\star}_{\mu})-\sigma^{\star}_{\mu}(p^{\star}_{\mu}, \theta_1) & \text{if $(\theta,p) = (p^{\star}_{\mu}, p^{\star}_{\mu})$} \\[15pt]
        \mu(\theta) & \text{if $\theta > p^{\star}_{\mu} \text{ and } p = p^{\star}_{\mu}$} \\[15pt]
        0  & \text{otherwise.}
    \end{array}\right.
\end{equation*}

\begin{figure}[t]
 \centering
        \begin{tikzpicture}[scale=1, transform shape]
            \draw[step=1cm,mBlack,very thin,opacity=0.2] (0,0) grid (6,6);
            \draw[thick,-stealth] (0,0) -- (6,0) node[anchor=west] {$\theta$};
            \draw[thick,-stealth] (0,0) -- (0,6) node[anchor=south] {$p$};
            
            \node[below] at (3,-0.1) {$p^{\star}_{\mu}$};
            
            \node[left] at (0,3) {$p^{\star}_{\mu}$};
            \node[left] at (0,0) {$\theta_1$};
            

            \node[mOrange,opacity=0.75,scale=1.5] at (0,0) {$\bullet$};
            \node[mOrange,opacity=0.75,scale=1.5] at (3,0) {$\bullet$};
            \node[mOrange,opacity=0.75,scale=1.5] at (3,3) {$\bullet$};

            \node[mBlack,opacity=0.75,scale=1.5] at (1,0) {$\bullet$};
            \node[mBlack,opacity=0.75,scale=1.5] at (2,0) {$\bullet$};
            \node[mBlack,opacity=0.75,scale=1.5] at (4,3) {$\bullet$};
            \node[mBlack,opacity=0.75,scale=1.5] at (5,3) {$\bullet$};
            \node[mBlack,opacity=0.75,scale=1.5] at (6,3) {$\bullet$};
            
            \draw[thick,mBlack,opacity=0.5] (0,0) -- (6,6);
            \fill[mBlack,fill opacity=0.05] (0,0) -- (6,6) -- (6,6) -- (6,0) -- cycle;
        \end{tikzpicture}
        \caption{The segmentation $\sigma^{\star}_{\mu}$ for a uniform price $p^{\star}_{\mu}$.}
        \label{fig:sigmabar}
\end{figure}
By construction, $\sigma^{\star}_{\mu}$ satisfies \labelcref{eqn:market_consistency} and $\supp(\sigma^{\star}_{\mu})\subset \Omega$. Therefore, if $\sigma^{\star}_{\mu}$ satisfies \labelcref{eqn:obedience} then it is a segmentation belonging to $\Sigma^{\star}(\mu)$. In this case, we refer to $\sigma^{\star}_{\mu}$ as being \emph{feasible}. Note that, if $\sigma^{\star}_{\mu}$ is feasible, then it is the unique saturated and strongly monotone segmentation of\footnote{Note that $\sigma^{\star}_{\mu}$ is constructed such that, if it is feasible, the seller is indifferent between charging $\theta_1$ and $p^{\star}_{\mu}$ in segment $\theta_1$. As such, $\sigma^{\star}_{\mu}$ satisfies conditions (a) and (b) of \cref{def:saturation} if it is feasible, implying that it is saturated. Intuitively, the feasibility of $\sigma^\star_\mu$ is therefore demanding as it requires a large mass of type $p^\star_\mu$ such that $p^\star_\mu$ is an optimal price in both segments.} $\mu$. Furthermore, $\sigma^{\star}_{\mu}$ has a very simple structure. It assigns all types lower than $p^{\star}_{\mu}$ to a ``discount'' segment where the lowest possible price $\theta_1$ is charged, and all the remaining types to a residual segment where the uniform price $p^{\star}_{\mu}$ is charged. \cref{fig:sigmabar} illustrates $\sigma^\star_\mu$.
\begin{proposition}[Redistributive Rents]\label{thm:rent}
For any $\mu \in \Delta(\Theta)$ and any strongly redistributive welfare function $w$, the two following statements are true:
    \begin{enumerate}[(i)]
        \item If $\sigma^{\star}_{\mu}$ is feasible, then it is the unique \labelcref{eqn:designer_pb_efficient}-optimal segmentation and it does not give any rent to the seller.
        \item If $\sigma^{\star}_{\mu}$ is not feasible, then the unique \labelcref{eqn:designer_pb_efficient}-optimal segmentation gives a rent to the seller.
    \end{enumerate}
\end{proposition}
\begin{proof}
A segmentation does not give any rent to the seller if and only if $p^{\star}_{\mu}$ is an optimal price in every market segment.\footnote{This follows easily from the fact that the seller's total profit is the sum of its profits on each segment.} Therefore, a strongly monotone segmentation does not give a rent to the seller only if it satisfies two conditions. First, it must generate only two market segments. One containing every type weakly lower than $p^{\star}_{\mu}$, and one containing every type weakly greater than $p^{\star}_{\mu}$. Second, \labelcref{eqn:obedience} must bind at $p^{\star}_{\mu}$ in the segment $\theta_1$. It can easily be seen that $\sigma^{\star}_{\mu}$ is the only strongly monotone segmentation that satisfies those two requirements. Furthermore, following \cref{prop:strong_mon}, for any strongly redistributive $w$, the \labelcref{eqn:designer_pb_efficient}-optimal segmentation is the unique strongly monotone one that is saturated. Therefore, if $\sigma^{\star}_{\mu}$ is feasible, then it is uniquely \labelcref{eqn:designer_pb_efficient}-optimal and does not give a rent to the seller. On the contrary, if $\sigma^{\star}_{\mu}$ is not feasible, then the \labelcref{eqn:designer_pb_efficient}-optimal segmentation must give a rent to the seller. 
\end{proof}
\Cref{thm:rent} highlights the key tradeoff generated by redistributive concerns. Benefiting poor consumers may require imposing a sacrifice on richer consumers' surplus. Strongly redistributive welfare functions precisely characterize when the welfare gains from redistribution more than compensate for the surplus loss of higher types. Given that redistributive segmentations are efficient (\cref{lem:efficiency}), as a byproduct of this loss, the seller benefits from a redistributive rent. This result is easily explainable in light of the structure of strongly redistributive segmentations (\cref{prop:strong_mon}). Such segmentations pool as many low-types as possible in low-price segments, allowing them to benefit from lower prices. Consequently, high types are separated from low types, enabling the seller to better price discriminate. This increases the seller's profit, except in the special case where $\sigma^{\star}_{\mu}$ is feasible.

\subsection{Discussion}\label{sec:discussion}

\subsubsection{Redistribution and Consumer-Surplus Maximization}\label{sec:CS_max}

The results in this paper highlight a variety of ways in which redistributive segmentations can differ from consumer surplus-maximizing segmentations \citep{bbm15}. Notably, redistributive segmentations might differ from consumer surplus-maximizing (henceforth CS-maximizing) ones in how they assign consumers to segments, in the prices they generate, and in the way they distribute surplus across the buyers and the seller.

Redistributive segmentations exhibit a positive-assortative structure, assigning consumers with a lower willingness to pay to segments with low prices and consumers with a higher willingness to pay to segments with higher prices. As seen in \cref{sec:monotone}, all redistributive segmentations satisfy weak monotonicity: if a buyer of type $\theta$ consumes at price $p$, there must exist a buyer of higher type who consumes at a higher price. When the redistributive motive is sufficiently strong, the optimal redistributive segmentations also satisfy strong monotonicity: if a buyer of type $\theta$ consumes at price $p$, every buyer of a higher type must be consuming at a higher price. CS-maximizing segmentations need not satisfy even this weak notion of monotonicity. For instance, \citeauthor{bbm15}' (\citeyear{bbm15}) greedy algorithm may not lead to a weakly monotone segmentation (taking the equivalent direct segmentation).

The monotone structure of redistributive segmentations qualifies a sense in which redistributive segmentations are ``more separating'' than CS-maximizing ones. This greater separation is echoed in the distributions of prices induced by such segmentations: while CS-maximizing segmentations only induce segments with discounts relative to the uniform price, redistributive segmentations might also induce segments with prices higher than the uniform. As such, prices might be more dispersed under redistributive segmentations. Another symptom of this greater separation is the distribution of surplus across sides of the market. As treated in \cref{sec:rents}, while CS-maximizing segmentations allocate all of the surplus generated by the segmentation to buyers, redistributive segmentations might, as a byproduct of prioritizing the welfare of poorer consumers, grant extra surplus to the seller.

Finally, it is important to note that the set of redistributive segmentations intersects with the set of CS-maximizing ones. Indeed, as a welfare function, total consumer surplus is included in the class of welfare functions we study, despite not being our focus since condition \labelcref{cond:R}, which captures our notion of redistributive concerns, is satisfied everywhere with equality. Regardless, from \cref{thm:opt_red_seg} it follows that in the set of CS-maximizing segmentations, one can always find a saturated one.

The redistributive order characterized in \cref{prop:red_order} is also helpful if one wants to use redistribution as a criterion to select among CS-maximizing segmentations. CS-maximizing segmentations cannot be improved in the redistributive order through downward transfers, but might be improvable through redistributive transfers. Departing from some CS-maximizing segmentation, sequentially performing feasible redistributive transfers will eventually result in a saturated segmentation that is also CS-maximizing. 



\subsubsection{Link to Stochastic Orders and Optimal Transport}\label{sec:red_order}

Our characterization of the redistributive order, as presented in \cref{prop:red_order}, conceptually parallels the characterization of the supermodular stochastic order developed independently by \cite{Muller2013} and \cite{Meyer2015}. The critical distinction lies in their respective objects of comparison: the supermodular stochastic order compares \emph{couplings}---bivariate joint distributions with two prescribed marginals---whereas the redistributive order compares segmentations, which consist of joint distributions where only the first marginal (the aggregate market $\mu$) is prescribed, while the second marginal (the price distribution $\sigma_{P}$) emerges endogenously from the seller's incentive constraints.\footnote{This observation holds for any persuasion problem, where the prior over the state is given and the distribution over the receiver's action is pinned down by obedience constraints \citep[see][]{bergemann2019}. Our characterization can therefore be extended to address a recurring question in the literature: When are optimal information structures monotone?} 

Therefore, the characterization of the supermodular order over couplings differs from that of the redistributive order over segmentations in a substantive way. The supermodular order is characterized only in terms of supermodular transfers (akin to redistributive transfers in our terminology), whereas the redistributive order is characterized in terms of both supermodular transfers, which preserve both marginals, and \emph{downward} transfers, which can modify the marginal distribution of prices. This distinction arises directly from the endogenous nature of segmentations' second marginal. Since downward transfers inevitably alter the price distribution, they cannot be applied to couplings, where both marginals are prescribed.\footnote{Redistributive transfers (\cref{def:red_trans}) coincide with the mass transfers used in optimal transport to show that comonotone distributions are the extreme points of the transport polytope \citep[see][Theorem 6.14]{Carlier2022}. Future research could use similar tools to characterize the extreme points of the ``persuasion polytope.''}

Observe also that the optimization problem \labelcref{eqn:designer_pb_efficient} bears similarity to the discrete optimal assignment problem (see \citealp{Galichon2018}, Chapter 3). Again, the key difference is that rather than maximizing the expectation of a function on the set of \emph{couplings}, we maximize it on the set of (efficient) segmentations.

In statistics and optimal transport theory it is known that \emph{comonotone} joint distributions maximize the expectation of supermodular functions on the set of couplings (see, e.g., Theorem 2.1 in \citealp{Puccetti2015}, and Theorem 4.3 in \citealp{Galichon2018}). We show that, similarly as for couplings, the supermodularity of redistributive welfare functions induces a positive-assortative matching of willingness to pay and prices. Furthermore, we show in \cref{prop:max_order_opt} that saturated segmentations correspond to the maximal elements of the redistributive order, which provides a parallel to how comonotone distributions correspond to the maximal elements of the supermodular stochastic order (see claim (c) of Theorem 2.1 in \citealp{Puccetti2015}).

Collectively, \cref{thm:opt_red_seg}, \cref{cor:weak_mon,prop:strong_mon} thus provide a natural analog for segmentations to the characterizations of (discrete) optimal assignments in optimal transport theory. However, since segmentations are constrained by the seller's incentives, supermodularity alone no longer guarantees comonotonicity. As discussed earlier, our concept of strongly redistributive welfare functions (\cref{def:strong_red}), which can be understood as a form of ``strong-supermodularity,'' becomes both necessary and sufficient to achieve comonotone optimal solutions.

\section{Implementation Through Price Regulation}\label{sec:implementation}

Thus far, we have analyzed the segmentation design problem from the perspective of a central planner with redistributive preferences. In practice, however, market segmentation is primarily conducted by private actors with access to consumers' data, such as platforms. This leads to the problem of implementation: How can a regulator (she) induce the seller (he) to implement a redistributive segmentation? 

If the regulator can verify the segmentation chosen by the seller, this implementation problem is trivial. The regulator can simply commit to punishing the seller if his chosen segmentation differs from her preferred one. However, it is unlikely that the regulator can perfectly observe the segmentation set by the seller. We show that observing the price distribution induced by segmentations is \emph{sufficient} for the regulator to implement her preferred segmentation.

Assume that both the regulator and the seller know the aggregate market $\mu\in\Delta(\Theta)$. The regulator communicates to the seller the segmentation $\sigma$ she wishes to implement, after which the seller chooses a market segmentation $\sigma'\in \Sigma(\mu)$. The regulator cannot observe $\sigma'$ directly but can verify and enforce punishments based on the induced price distribution\footnote{As discussed in the introduction, observing the distribution of prices is realistic; see \cref{footnote:websites}.} $\sigma'_{P}$. Consequently, for any $\mu\in\Delta(\Theta)$, the set of undetectable deviations from $\sigma$ is
\begin{equation*}
    \mathcal{D}_{\mu}(\sigma)\coloneqq \bigl\{\sigma'\in\Sigma(\mu) \ | \ \sigma^{\prime}_{P} = \sigma_{P}\bigr\}.
\end{equation*}

Since the regulator can detect any deviation that does not belong to $\mathcal{D}_{\mu}(\sigma)$, she can impose appropriate penalties and deter the seller from deviating to $\sigma'\notin\mathcal{D}_{\mu}(\sigma)$. Hence we adopt the following notion of implementability.
\begin{definition}[Price-based implementability]
    Let $\mu\in\Delta(\Theta)$. A segmentation $\sigma\in \Sigma(\mu)$ is \emph{implementable through price-based regulation} if there exists no $\sigma'\in\mathcal{D}_{\mu}(\sigma)$ such that
    \begin{equation*}
        \sum_{p\in \supp(\sigma^{\prime}_{P})} p \sum_{\theta\geq p} \sigma'(\theta,p) > \sum_{p\in \supp(\sigma_{P})} p \sum_{\theta\geq p} \sigma(\theta,p).
    \end{equation*}
\end{definition}

A segmentation can be implemented through price-based regulation if there exists no strictly profitable deviation for the seller that is undetectable by the regulator. The next proposition shows that efficiency is, in fact, sufficient for implementability.\footnote{Our notion of implementability is equivalent to the notion of credibility defined by \cite{linliu} in the context of Bayesian persuasion. Specifically, the authors define an information structure as credible if there exists no alternative information structure that would strictly increase the sender's payoff while unaffecting the marginal distribution of signal realizations. As such, \cref{prop:implementation} shows that any efficient segmentation is also seller-credible. Our notion of implementation is also equivalent to the notion of price-regulation defined by \cite{Schlom2024} in the context of second-degree price discrimination. \label{foot:credibility}}
\begin{proposition}\label{prop:implementation}
    Let $\mu\in \Delta(\Theta)$. If $\sigma\in\Sigma^{\star}(\mu)$, then $\sigma$ is implementable through price-based regulation.
\end{proposition}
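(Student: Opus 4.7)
The plan is to observe that an efficient segmentation extracts the maximum possible seller profit among all segmentations that share its price marginal, so no deviation that preserves the price distribution can strictly increase the seller's profit. There is essentially no substantive obstacle; the argument is a short accounting calculation, and the only care required is distinguishing between the mass of a segment and the demand faced at the recommended price in that segment.

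Concretely, I would first use efficiency of $\sigma$. Since $\sigma\in\Sigma_{\Omega}(\mu)$, for every $p\in\supp(\sigma_{P})$ and every $\theta$ with $\sigma(\theta,p)>0$ we have $\theta\geq p$. Hence every buyer assigned to segment $p$ actually consumes at the recommended price, and the demand coincides with the mass of the segment:
\begin{equation*}
    \sum_{\theta\geq p}\sigma(\theta,p)=\sum_{\theta\in\Theta}\sigma(\theta,p)=\sigma_{P}(p).
\end{equation*}
Consequently the seller's profit under $\sigma$ is exactly $\sum_{p}p\,\sigma_{P}(p)$.

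Next I would consider an arbitrary undetectable deviation $\sigma'\in\mathcal{D}_{\mu}(\sigma)$, which by definition satisfies $\sigma'_{P}=\sigma_{P}$ and lies in $\Sigma(\mu)$, but need \emph{not} be efficient. Because non-consuming buyers contribute nothing to demand, the bound
\begin{equation*}
    \sum_{\theta\geq p}\sigma'(\theta,p)\;\leq\;\sum_{\theta\in\Theta}\sigma'(\theta,p)\;=\;\sigma'_{P}(p)\;=\;\sigma_{P}(p)
\end{equation*}
holds for every price $p$, with equality only when $\sigma'$ assigns no mass to $(\theta,p)$ with $\theta<p$. Multiplying by $p$ and summing yields
\begin{equation*}
    \sum_{p}p\sum_{\theta\geq p}\sigma'(\theta,p)\;\leq\;\sum_{p}p\,\sigma_{P}(p)\;=\;\sum_{p}p\sum_{\theta\geq p}\sigma(\theta,p),
\end{equation*}
so the seller's profit under any $\sigma'\in\mathcal{D}_{\mu}(\sigma)$ is at most her profit under $\sigma$. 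This is exactly the negation of the existence of a strictly profitable undetectable deviation, so $\sigma$ is implementable through price-based regulation. The only nuance worth flagging in the write-up is that obedience of $\sigma'$ is not exploited anywhere—the upper bound on profit follows purely from the price marginal being fixed—which is what makes efficiency alone sufficient for implementability.
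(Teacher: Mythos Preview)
Your proposal is correct and follows essentially the same approach as the paper's proof: both compute the seller's profit under $\sigma$ as $\sum_{p}p\,\sigma_{P}(p)$ using efficiency, then bound any deviation's profit by the same quantity via $\sum_{\theta\geq p}\sigma'(\theta,p)\leq\sigma'_{P}(p)=\sigma_{P}(p)$. The only cosmetic difference is that the paper writes the calculation using conditional masses $\sigma(\theta\mid p)$ whereas you work directly with the joint distribution.
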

\begin{proof}
Let $\mu\in\Delta(\Theta)$ and $\sigma\in\Sigma^{\star}(\mu)$. Recall that, for any $(\theta,p)\in \Omega$ we must have that $\sigma(\theta,p)=\sigma(\theta \, | \, p) \, \sigma_{P}(p)$. We thus obtain the following decomposition of the platform's profit under $\sigma$:
\begin{align*}
    \sum_{p\in \supp(\sigma_{P})} p \sum_{\theta\geq p} \sigma(\theta,p) &= \sum_{p\in \supp(\sigma_{P})} p \, \sigma_{P}(p) \overbrace{\sum_{\theta\geq p} \sigma(\theta \, | \, p)}^{\mathclap{\text{$=1$ since $\sigma\in\Sigma^{\star}(\mu)$}}} \\[5pt]
    &= \sum_{p\in \supp(\sigma_{P})} p \, \sigma_{P}(p).
\end{align*}

By definition, $\sum_{\theta \geq p} \sigma^{\prime}(\theta \, | \, p)\leq \sum_{\theta \in \Theta} \sigma^{\prime}(\theta \, | \, p) = 1$ for any $\sigma'\in\Sigma(\mu)$. Therefore, for any $\sigma'\in\mathcal{D}_{\mu}(\sigma)$, we must have in particular that
\begin{align*}
    \sum_{p\in \supp(\sigma^{\prime}_{P})} p \, \sigma^{\prime}_{P}(p) \sum_{\theta \geq p} \sigma^{\prime}(\theta \, | \, p) &\leq \sum_{p\in \supp(\sigma^{\prime}_{P})} p \, \sigma^{\prime}_{P}(p) \notag \\[5pt]
    &= \sum_{p\in \supp(\sigma_{P})} p \, \sigma_{P}(p) \, 
\end{align*}

This implies that $\sigma$ is implementable through price-based regulation.
\end{proof}

Therefore, any efficient segmentation and, following \cref{lem:efficiency}, any redistributive segmentation, can be implemented through price-based regulation. The proof follows from a simple observation: the difference in the seller's profit between two segmentations that induce the same price distribution only depends on the mass of consumers that buy the good at each price. For efficient segmentations, this mass is maximal because every buyer consumes. Therefore, there cannot be any strictly beneficial deviation.

\section{Concluding Remarks}\label{sec:conclusion}

Price discrimination has long been used as an instrument for redistribution. Many goods and services are provided at reduced prices for disadvantaged groups. For instance, students from low-income backgrounds often pay reduced tuition fees at universities, and it is common for the unemployed or the elderly to receive discounted prices at museums. The prevalence of such redistributive price discrimination indicates that personalized pricing is deemed socially acceptable as long as it is perceived to benefit those in need. This view is supported by experimental data on people's attitudes to price discrimination \citep{wu} and is expressed in policy reports \citep{WhiteHouse2015,bourreau2018regulation}. Our redistributive segmentations satisfy this social demand by ensuring that price discrimination satisfies a certain degree of price progressivity. Hence, our results inform the current policy debate over the regulation of online personalized pricing. They give qualitative guidelines regarding market segmentations that prioritize consumers with low willingness to pay.

Redistributive price discrimination is usually implemented by public and non-profit organizations \citep{legrand,steinberg}. A concern is that imposing redistributive price discrimination on the private sector could be inefficient. This paper studies how a private market with a profit-maximizing monopoly can be optimally segmented to achieve a redistributive goal. Redistributive segmentations have the twofold benefit of achieving economic efficiency (\cref{lem:efficiency}) and never harming the seller---\cref{thm:rent} shows that redistributive segmentations can even improve the seller's profit.

One might suggest that conditioning market segmentation on additional buyer information (e.g., income or group characteristics), rather than only the willingness to pay, would better target redistribution. Nevertheless, as highlighted in \cref{sec:implementation}, consumer data is usually collected and processed by private actors for which consumers' willingness to pay is the payoff-relevant variable. It might also be impossible or illegal for regulators like competition authorities to directly access income data or protected characteristics. Hence, our optimal segmentations constitute a normative benchmark against which to evaluate welfare outcomes that could be achieved by regulating data intermediaries.


\clearpage
\appendix
\begin{center}
     \textbf{\LARGE Appendices}
\end{center}

\section{A Foundation for Redistributive Welfare Functions}
\label{app:microfoundation}

\paragraph{Separating taste and income}
Consider that each buyer is characterized by a two-dimensional type $(t,y)$ drawn according to a given joint distribution. The first component, $t\in\mathbb{R}_{+}$, is the buyer's intrinsic preference (or ``pure taste'') for the good supplied by the seller. The second component, $y\in\mathbb{R}_{+}$, is the buyer's income at the time of purchase.

A buyer with type $(t,y)$ obtains utility $t+u(y-p)$ if she buys at price $p$, and utility $u(y)$ if she does not buy. The function $m\mapsto u(m)$ is non-decreasing and concave. The willingness to pay of type $(t,y)$ is defined as

\begin{equation*}
    \vartheta(t,y)
    \coloneqq
    \sup\bigl\{p\in \mathbb{R}_{+} \bigm\vert t+u(y-p)\geq u(y)\bigr\}.
\end{equation*}
We restrict attention to cases for which this supremum is finite and interior, so that
\begin{equation}
    t+u(y-\vartheta(t,y))=u(y). \tag{A.1}
\end{equation}
Thus, a buyer purchases at price $p$ if and only if $p\le \vartheta(t,y)$.

As in the main text, the planner observes the willingness to pay $\vartheta(t,y)$, but not taste $t$ and income $y$ separately. Expected consumer welfare at price $p$ is given by
\begin{equation*}
    \mathbb{E}\Bigl[\bigl(t+u(y-p)\bigr)\Ind_{\vartheta(t,y)\geq p}
    +u(y)\Ind_{\vartheta(t,y)<p}\Bigr] = \mathbb{E}\bigl[u(y)\bigr]
    +
    \mathbb{E}\Bigl[\bigl(t+u(y-p)-u(y)\bigr)\Ind_{\vartheta(t,y)\geq p}\Bigr].
\end{equation*}
Up to the constant first term, the contribution to welfare of a buyer with willingness to pay $\theta$ facing price $p$ is
\begin{equation*}
    w(\theta,p)
    \coloneqq
    \mathbb{E}\Bigl[t+u(y-p)-u(y)\Bigm\vert \vartheta(t,y)=\theta\Bigr]
    \Ind_{\theta\geq p}.
\end{equation*}
Using the indifference condition, conditional on $\vartheta(t,y)=\theta$, we obtain
\begin{equation*}
    t=u(y)-u(y-\theta).
\end{equation*}
Therefore,
\begin{equation*}
    w(\theta,p)
    =
    \mathbb{E}\Bigl[u(y-p)-u(y-\theta)\Bigm\vert \vartheta(t,y)=\theta\Bigr]
    \Ind_{\theta\geq p} \label{eq:A.2} \tag{A.2}.
\end{equation*}

This immediately implies the first two requirements imposed on $w$ in the
main text: if $\theta\leq p$, then $w(\theta,p)=0$, and, if $p<p'\leq \theta$, then
\begin{equation*}
    w(\theta,p)-w(\theta,p')
=
\mathbb E\!\left[
u(y-p)-u(y-p')\mid \vartheta(t,y)=\theta
\right]\geq 0.
\end{equation*}

Furthermore, $w$ satisfies the redistribution condition \eqref{cond:R} if the mapping
\begin{equation*}
    \theta\longmapsto w(\theta,p)-w(\theta,p') =\mathbb{E}\Bigl[u(y-p)-u(y-p')\Bigm\vert \vartheta(t,y)=\theta\Bigr],
\end{equation*}
is non-increasing. Given that $u$ is concave, $u(y-p)-u(y-p')$ is decreasing with $y$. Therefore, this mapping is non-increasing if larger willingness to pay is correlated with larger income. For instance, if $F(y \mid \theta)$ denotes the cdf of revenues conditional on willingness to pay\footnote{Note that this conditional cdf is determined by the joint distribution of $(t,y)$ and the map $\vartheta(t,y)$.}, then a sufficient condition for $w\in \mathcal{R}$ is that
\begin{equation*}
    \theta<\theta' \implies F(y\mid\theta')\leq F(y\mid\theta) \quad \forall y.
\end{equation*}


Conversely, since $\vartheta(t,y)$ is non-decreasing in both taste and income, a violation of \eqref{cond:R} can occur only when higher willingness-to-pay groups have a higher expected marginal utility of money than lower willingness-to-pay groups over some price interval. With concave $u$, this means that high willingness to pay must be driven sufficiently by high taste rather than by high income. 

\paragraph{Endogenizing the welfare weights}

Since $u$ is concave, and therefore absolutely continuous, we have
\begin{equation*}
    u(y-p)-u(y-\theta)
    =
    \int_{p}^{\theta} u'(y-s) \, ds.
\end{equation*}
for all $p<\theta$. Therefore, for $p<\theta$, we can write $w(\theta,p) = \bar{\lambda}(\theta,p)(\theta-p)$, where
\begin{equation*}
    \bar{\lambda}(\theta,p)=
    \mathbb{E}\left[
        \frac{1}{\theta-p}
        \int_{p}^{\theta} u'(y-s)\,ds
        \biggm\vert \vartheta(t,y)=\theta
    \right],
\end{equation*}
is an endogenous welfare weight on the buyer surplus \citep[this is reminiscent of the literature on optimal taxation, as in][]{saez}

When the transaction is small relative to income, or when $u'$ varies little over the relevant range of post-purchase incomes, this weight is well approximated by
\begin{equation*}
    \lambda(\theta)
    \coloneqq
    \mathbb{E}\bigl[
        u'(y)
        \bigm\vert \vartheta(t,y)=\theta
    \bigr].
\end{equation*}
In this case, $w(\theta,p) \approx \lambda(\theta)(\theta-p)\Ind_{\theta\geq p}$, which is the weighted-surplus specification used in \cref{ex:decreasing_pareto}, but the weights are no longer primitive. Similarly as before, since $u$ is concave, the weights $\lambda(\cdot)$ are non-increasing if higher willingness-to-pay types have higher conditional incomes. 

\section{Proof of \cref{prop:red_order}}\label{app:proof_red_order}

\begin{proof}
For any $f,g \in \mathbb{R}^\Omega$ we let $\langle f, g \rangle$ be defined as
\begin{equation*}
    \langle f, g \rangle = \sum_{(\theta,p)\in\Omega} f(\theta,p) \, g(\theta,p).
\end{equation*}

We know from \cref{lem:efficiency} that we can, without loss of generality, restrict attention to segmentations supported on $\Omega$. Hence, we treat a segmentation $\sigma$ as an element of $\mathbb{R}^\Omega$. 

We first define the following set of functions,\footnote{$\mathcal{E}$ is the set of real functions $w$ defined on $\Omega$ that is supermodular and such that, for any $\theta \in \Theta$, the function $p \mapsto w(\theta,p)$ is non-increasing on $\{\theta_1, \dots, \theta\}$.}
\begin{equation*}
    \mathcal{E} = \Bigl\{v \in \mathbb{R}^\Omega \ \big| \ \exists w \in \mathcal{R} \text{ and } b \in \mathbb{R} \text{ such that}, \forall (\theta,p)\in\Omega, \, v(\theta,p) = w(\theta,p) +b \Bigr\}.
\end{equation*}

For any given aggregate market $\mu$, we define the order $\succsim_{\mathcal{E}}$ over the set $\Sigma^{\star}(\mu)$ in the analogous way we defined the order $\succsim_\mathcal{R}$. The following lemma allows us to work with the family of functions $\mathcal{E}$.
\begin{lemma}\label{lem:extended_set_of_functions}
    $\succsim_{\mathcal{E}}=\succsim_\mathcal{R}$.
\end{lemma}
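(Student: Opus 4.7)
The plan is to prove both inclusions $\succsim_{\mathcal{R}} \subseteq \succsim_{\mathcal{E}}$ and $\succsim_{\mathcal{E}} \subseteq \succsim_{\mathcal{R}}$ directly from the definitions, exploiting the simple fact that $\mathcal{E}$ differs from $\mathcal{R}$ only by the addition of a constant, which is invisible to comparisons of expectations taken with respect to probability distributions of equal total mass on $\Omega$.

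First, I would observe the key identity. Fix $\mu \in \Delta(\Theta)$ and any $\sigma \in \Sigma_{\Omega}(\mu)$. Since $\supp(\sigma) \subseteq \Omega$ and $\sigma$ is a probability distribution on $\Theta \times P$, we have $\sum_{(\theta,p)\in\Omega}\sigma(\theta,p)=1$. Consequently, for any $v \in \mathcal{E}$ written as $v = w + b$ with $w \in \mathcal{R}$ and $b \in \mathbb{R}$, I can compute
\begin{equation*}
    \mathbb{E}_{\sigma}\bigl[v(\theta,p)\bigr] = \sum_{(\theta,p)\in\Omega}\bigl(w(\theta,p)+b\bigr)\,\sigma(\theta,p) = \mathbb{E}_{\sigma}\bigl[w(\theta,p)\bigr] + b.
\end{equation*}

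The inclusion $\succsim_{\mathcal{E}} \subseteq \succsim_{\mathcal{R}}$ is then immediate: if $\sigma \succsim_{\mathcal{E}} \sigma'$, then for every $w \in \mathcal{R}$ we have $w \in \mathcal{E}$ (taking $b = 0$), so $\mathbb{E}_{\sigma}[w] \geq \mathbb{E}_{\sigma'}[w]$, yielding $\sigma \succsim_{\mathcal{R}} \sigma'$. For the converse inclusion, suppose $\sigma \succsim_{\mathcal{R}} \sigma'$ and fix an arbitrary $v = w + b \in \mathcal{E}$. Applying the identity above to both $\sigma$ and $\sigma'$, I get
\begin{equation*}
    \mathbb{E}_{\sigma}\bigl[v(\theta,p)\bigr] - \mathbb{E}_{\sigma'}\bigl[v(\theta,p)\bigr] = \mathbb{E}_{\sigma}\bigl[w(\theta,p)\bigr] - \mathbb{E}_{\sigma'}\bigl[w(\theta,p)\bigr] \geq 0,
\end{equation*}
where the inequality uses that $w \in \mathcal{R}$ and $\sigma \succsim_{\mathcal{R}} \sigma'$. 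Hence $\sigma \succsim_{\mathcal{E}} \sigma'$, completing the proof.

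There is no real obstacle here: the lemma is essentially a bookkeeping step that enlarges the family of test functions from $\mathcal{R}$ (non-negativity required) to $\mathcal{E}$ (non-negativity dropped) so that later duality arguments can work on a cone defined only by supermodularity and monotonicity in $p$, without the extraneous non-negativity constraint. The only subtlety to flag is that the argument relies crucially on $\sigma, \sigma' \in \Sigma_{\Omega}(\mu)$ (so that their total masses on $\Omega$ coincide and equal one), which is precisely the reduction afforded by \cref{lem:efficiency}.
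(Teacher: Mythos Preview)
Your proof is correct and follows essentially the same approach as the paper's: both directions are handled by observing that $\mathcal{R}\subset\mathcal{E}$ and that adding a constant $b$ to the test function shifts both expectations by the same amount since $\sigma,\sigma'\in\Sigma_{\Omega}(\mu)$ are probability measures supported on $\Omega$. Your added remark about why the enlargement to $\mathcal{E}$ is useful for the subsequent duality argument is apt and goes slightly beyond what the paper states explicitly.
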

\begin{proof}
    As $\mathcal{R} \subset \mathcal{E}$, if $\sigma \succsim_\mathcal{E} \sigma'$, then $\sigma \succsim_\mathcal{R} \sigma'$. Conversely, let $\sigma, \sigma'$ be such that $\sigma \succsim_\mathcal{R}\sigma'$, that is, for any $w\in \mathcal{R}$, $\langle \sigma, w \rangle \geq \langle \sigma', w \rangle$. This implies that for any $w\in \mathcal{R}$ and $b \in \mathbb{R}$, $\langle \sigma, w \rangle + b \geq \langle \sigma', w \rangle +b$. With a slight abuse of notation, by denoting $b$ also as the element of $\mathbb{R}^{\Theta \times P}$ constantly equal to $b$, $\langle \sigma, b \rangle = \langle \sigma', b \rangle = b$. Therefore, for any $w\in \mathcal{R}$ and $b \in \mathbb{R}$,
    \begin{align*}
        \langle \sigma, w \rangle + b \geq \langle \sigma', w \rangle +b & \iff \langle \sigma, w \rangle + \langle \sigma, b \rangle \geq \langle \sigma', w \rangle + \langle \sigma', b \rangle \\
        & \iff \langle \sigma, w +b \rangle \geq \langle \sigma', w +b \rangle
    \end{align*}
    
    This implies that $\sigma \succsim_\mathcal{E} \sigma'$.
\end{proof}

We now prove the statement of the theorem for the order $\succsim_\mathcal{E}$. The proof follows similar steps as the proof of Theorem 1 in \cite{Meyer2015}. 
We first show that the set $\mathcal{E}$ is the dual cone of the set $\mathcal{T}$, as shown by the next lemma.
\begin{lemma}\label{lem:proof_red_order}
    For any $w\in \mathbb{R}^\Omega$,
    \begin{equation*}
        w\in \mathcal{E} \iff \forall t\in \mathcal{T}, \; \langle w, t \rangle \geq 0.
    \end{equation*}
\end{lemma}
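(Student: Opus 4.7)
The plan is to prove the two directions of this dual cone identity separately, exploiting the fact that the two families of transfers in $\mathcal{T}$ generate exactly the two defining inequalities of $\mathcal{E}$.

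For the forward direction ($\Rightarrow$), I would take $w \in \mathcal{E}$, so that $w = \tilde w + b$ on $\Omega$ for some $\tilde w \in \mathcal{R}$ and $b \in \mathbb{R}$, and compute $\langle w, t \rangle$ on each kind of transfer. For a downward transfer with parameters $(\delta, \tilde\theta, p', p'')$, $\langle w, t \rangle = \delta\bigl(w(\tilde\theta,p') - w(\tilde\theta,p'')\bigr)$, which equals $\delta\bigl(\tilde w(\tilde\theta,p') - \tilde w(\tilde\theta,p'')\bigr)$ and is non-negative because $p \mapsto \tilde w(\tilde\theta,p)$ is non-increasing on $\{\theta_1,\dots,\tilde\theta\}$. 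For a redistributive transfer with parameters $(\varepsilon, \theta', \theta'', p', p'')$,
\[
    \langle w, t \rangle = \varepsilon\Bigl(\bigl[\tilde w(\theta',p') - \tilde w(\theta',p'')\bigr] - \bigl[\tilde w(\theta'',p') - \tilde w(\theta'',p'')\bigr]\Bigr),
\]
which is non-negative by condition \labelcref{cond:R}, noting that the inequalities $\theta'' > \theta' \geq p'' > p'$ imposed by the definition of a redistributive transfer match exactly the antecedents of \labelcref{cond:R}.

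For the reverse direction ($\Leftarrow$), I would specialise the hypothesis to the two classes of transfers to read off the defining inequalities of $\mathcal{E}$ directly. Applying the inequality to the one-parameter family of downward transfers yields $w(\theta,p') \geq w(\theta,p'')$ for every $(\theta,p'), (\theta,p'') \in \Omega$ with $p' < p''$, i.e., $p \mapsto w(\theta,p)$ is non-increasing on $\{\theta_1,\dots,\theta\}$. Applying it to single redistributive transfers yields the supermodular inequality \labelcref{cond:R} on $\Omega$. It then suffices to set $b = \min_{(\theta,p)\in\Omega} w(\theta,p)$ (well-defined as $\Omega$ is finite) and to define $\tilde w$ on $\Theta\times P$ by $\tilde w(\theta,p) = w(\theta,p) - b$ on $\Omega$ and $\tilde w(\theta,p) = 0$ on $(\Theta\times P)\setminus\Omega$. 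Non-negativity of $\tilde w$ on $\Omega$ follows from the choice of $b$, while the monotonicity and supermodularity inherited from $w$ are preserved under translation by a constant.

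The main subtlety, though not really an obstacle, is verifying that \labelcref{cond:R} for $\tilde w$ is genuinely captured by the redistributive transfers in $\mathcal{T}$. The key observation is that when $\theta'' > \theta' \geq p'' > p'$, all four points $(\theta', p'), (\theta', p''), (\theta'', p'), (\theta'', p'')$ lie in $\Omega$, so the redistributive transfer is well-defined and $\langle w, t\rangle$ recovers precisely the supermodular inequality on those four points. Once this is noted, the remainder of the argument is a direct bookkeeping exercise, and the proof is essentially the same dual cone argument used by \cite{Muller2013} and \cite{Meyer2015} for the supermodular stochastic order, enlarged to accommodate the downward transfers that become available once the second marginal is unconstrained.
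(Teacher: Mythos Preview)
Your proposal is correct and follows essentially the same approach as the paper: both directions use the fact that the two families of transfers in $\mathcal{T}$ test exactly the monotonicity and supermodularity inequalities that define $\mathcal{E}$. The only difference is presentational: the paper argues sufficiency by contradiction and relegates the identification of $\mathcal{E}$ with ``non-increasing supermodular functions on $\Omega$'' to a footnote, whereas you make the passage from those inequalities back to membership in $\mathcal{E}$ explicit via the construction $b=\min_{\Omega} w$, $\tilde w = w - b$ extended by zero off $\Omega$.
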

\begin{proof}
    Necessity follows from the definition of $\mathcal{R}$ and therefore of $\mathcal{E}$.  Let us prove sufficiency. Let $w$ be such that $\langle w, t \rangle \geq 0$ for every $t\in \mathcal{T}$. By way of contradiction, assume that there exists $(\theta, p),(\theta, p')\in \Omega$ such that $p<p'$ and $w(\theta,p)< w(\theta, p')$. Consider the downward buyer transfer $d\in \mathbb{R}^\Omega$ such that $d(\theta,p)=-d(\theta,p')=1$. Then $\langle w, d \rangle < 0$, a contradiction. Assume now that there exists $(\theta, p),(\theta', p')\in \Omega$ such that $\theta < \theta'$, $p<p'$, and $w(\theta,p)-w(\theta,p') < w(\theta',p)-w(\theta',p')$. Consider the redistributive buyer transfer $r\in \mathbb{R}^\Omega$ such that
    \begin{equation*}
        r(\theta,p)=r(\theta',p')=-r(\theta',p)=-r(\theta,p')=1.
    \end{equation*}
    Then $\langle w, r \rangle < 0$, again a contradiction.
\end{proof}

Let $\cone(\mathcal{T})$ denote the conic hull of $\mathcal{T}$, the smallest convex cone containing $\mathcal{T}$, equal to the set of all positive linear combinations of finitely many elements of $\mathcal{T}$ \citep[see][p. 430]{ok2007real}:
\begin{equation*}
    \cone(\mathcal{T}) = \left\{\sum_{t\in T} \lambda_{t} t \ \Big| \  \text{$T\subset \mathcal{T}$, $\vert T\rvert < \infty$ and $\lambda \in \mathbb{R}_{+}^{T}$} \right\}.
\end{equation*}

Note that for any $t \in \mathcal{T}$ and $\lambda_{t} \geq 0$, $\lambda_{t} t \in \mathcal{T}$. Therefore,
\begin{equation*}
    \cone(\mathcal{T}) = \left\{\sum_{i=1}^n t_i \ \Big| \ \text{$n \in \mathbb{N}^{*}$ and $(t_i)_{i\in\{1,\dots,n\}} \in \mathcal{T}^n$} \right\}.
\end{equation*}

As $\cone(\mathcal{T})$ is closed, we can show in the next lemma that it is the dual cone of $\mathcal{E}$ \citep[see][p.215]{luenberger1969optimization}. 
\begin{lemma}\label{lem:proof_dual_cone}
    For any $\tau\in \mathbb{R}^\Omega$,
    \begin{equation*}
        \tau \in \cone(\mathcal{T}) \iff \forall w\in \mathcal{E}, \; \langle w, \tau\rangle \geq 0.
    \end{equation*}
\end{lemma}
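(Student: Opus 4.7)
The plan is to recognize this statement as the bipolar theorem for the closed convex cone $\mathrm{cone}(\mathcal{T})$, whose dual cone is already identified as $\mathcal{E}$ by \cref{lem:proof_red_order}. The forward implication is routine, while the converse requires a separation argument, for which the key preliminary step is to verify that $\mathrm{cone}(\mathcal{T})$ is closed.

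For the forward direction $(\Rightarrow)$, suppose $\tau \in \mathrm{cone}(\mathcal{T})$. By definition, $\tau = \sum_{i=1}^{n} t_{i}$ for some finite collection $(t_{i})_{i=1}^{n} \in \mathcal{T}^{n}$. Since $\mathcal{E}$ is precisely the set of $w\in\mathbb{R}^{\Omega}$ satisfying $\langle w,t\rangle\geq 0$ for every $t\in\mathcal{T}$ (by \cref{lem:proof_red_order}), linearity of the inner product yields $\langle w,\tau\rangle = \sum_{i=1}^{n}\langle w,t_{i}\rangle \geq 0$ for every $w\in\mathcal{E}$.

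For the converse $(\Leftarrow)$, I would argue by contrapositive: assume $\tau \notin \mathrm{cone}(\mathcal{T})$ and exhibit $w\in\mathcal{E}$ with $\langle w,\tau\rangle <0$. First, I would observe that although $\mathcal{T}$ is an infinite set, it is the union of finitely many rays, one for each quadruple of the form $(\tilde{\theta},p',p'')$ defining a downward transfer or $(\theta',\theta'',p',p'')$ defining a redistributive transfer. Consequently, $\mathrm{cone}(\mathcal{T})$ coincides with the conic hull of a \emph{finite} set of generators in the finite-dimensional space $\mathbb{R}^{\Omega}$, and hence it is a closed convex cone (this is the standard fact that every finitely generated convex cone is closed). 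Next, since $\tau\notin\mathrm{cone}(\mathcal{T})$ and the latter is a nonempty closed convex set, the separating hyperplane theorem yields some $w\in\mathbb{R}^{\Omega}$ and $c\in\mathbb{R}$ such that $\langle w,\tau\rangle < c \leq \langle w,t\rangle$ for every $t\in\mathrm{cone}(\mathcal{T})$. Since $0\in\mathrm{cone}(\mathcal{T})$, we must have $c\leq 0$; and since $\mathrm{cone}(\mathcal{T})$ is a cone, the inequality $\langle w,\lambda t\rangle \geq c$ for all $\lambda\geq 0$ forces $\langle w,t\rangle\geq 0$ for every $t\in\mathrm{cone}(\mathcal{T})$, while $\langle w,\tau\rangle < 0$. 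By \cref{lem:proof_red_order} this separator $w$ belongs to $\mathcal{E}$, which contradicts the assumption that $\langle w,\tau\rangle\geq 0$ for all $w\in\mathcal{E}$.

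The only delicate point is the closedness of $\mathrm{cone}(\mathcal{T})$, which is what lets the separating hyperplane theorem deliver a \emph{strict} separation on one side. Everything else is bookkeeping: the identification of $\mathcal{E}$ as the dual cone of $\mathcal{T}$ is already supplied by \cref{lem:proof_red_order}, and the standard cone arguments (taking $\lambda\to\infty$ and $\lambda\to 0^{+}$) ensure the separating functional can be normalized to have $c=0$.
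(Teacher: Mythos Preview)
Your proposal is correct and follows essentially the same approach as the paper: both directions match, with the forward direction using linearity and \cref{lem:proof_red_order}, and the converse using closedness of $\mathrm{cone}(\mathcal{T})$, the separating hyperplane theorem, and the standard cone-scaling argument to force $\langle w,t\rangle\geq 0$ before invoking \cref{lem:proof_red_order} again. The only notable addition is that you explicitly justify closedness via finite generation of $\mathrm{cone}(\mathcal{T})$, whereas the paper simply asserts it; this is a helpful clarification but not a departure in method.
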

\begin{proof}
    Assume that $\tau\in \cone(\mathcal{T})$. Hence, there exists $(t_i)_{i\in\{1,\dots,n\}} \in \mathcal{T}^n$ such that $\tau = \sum_{i=1}^{n} t_i$. Following \cref{lem:proof_red_order} we have that, for any $w\in \mathcal{E}$,
    \begin{equation*}
        \langle \tau, w \rangle= \sum_{i=1}^{n} \langle t_i, w\rangle \geq 0.
    \end{equation*}
    
    Assume now that $\upsilon\notin \cone(\mathcal{T})$. As $\cone(\mathcal{T})$ is closed, the Separating Hyperplane Theorem implies that there exists $w \in \mathbb{R}^{\Omega}$ such that $\langle \upsilon, w \rangle < \langle \tau, w \rangle$ for every $\tau \in \cone(\mathcal{T})$. Since $\cone(\mathcal{T})$ is a cone, it must be that $\langle \tau, w \rangle \geq 0$ for any $\tau\in\cone(\mathcal{T})$ as otherwise there would exist some $\alpha >0$ such that $\langle \alpha \tau , w \rangle \leq \langle \upsilon, w \rangle$. Thus, \cref{lem:proof_red_order} implies that $w \in \mathcal{E}$. Finally, since $0=\min_{\tau \in \cone(\mathcal{T})} \langle \tau, w \rangle $, we must have $\langle \upsilon, w\rangle<0$.
\end{proof}

We can now restate the theorem as follows: for any $\mu \in \Delta(\Theta)$ and $\sigma, \sigma' \in \Sigma^{\star}(\mu)$, $\sigma \succsim_{\mathcal{E}} \sigma'$ if and only if there exists $n\in\mathbb{N}^{*}$ and a sequence $(t_i)_{i\in\{1,\dots,n\}} \in \mathcal{T}^{n}$ such that:
\begin{align}
    \sigma = \sigma' + \sum_{i=1}^{n} t_i \label{eqn:proof_th1_1}
\end{align}

\Cref{eqn:proof_th1_1} is equivalent to $\sigma - \sigma' \in \cone(\mathcal{T})$, and by \cref{lem:proof_dual_cone}, this is equivalent to $\langle \sigma-\sigma', w \rangle \geq 0 $ for every $w \in \mathcal{E}$.
\end{proof}

\section{Proofs of \cref{prop:max_order_opt,prop:charac_max_order} in the Proof of \cref{thm:opt_red_seg}}

\subsection{Proof of \cref{prop:max_order_opt}}\label{app:proof_satur}

\begin{proof}[Proof of \cref{prop:max_order_opt}]
Claim (i) is direct: If $\sigma$ were not $\succsim_{\mathcal{R}}$-maximal, there would exist at least one segmentation yielding a weakly higher aggregate welfare for any redistributive welfare function, and for $w$ in particular. Furthermore, if $w\in \overline{\mathcal{R}}$, this segmentation would yield a strictly higher aggregate welfare.

We now prove claim (ii). For any set $A$ we let $-A=\bigl\{-a \ | \ a\in A\bigr\}$. For any two sets $A$ and $B$, we let $A+B\coloneqq\bigl\{a+b \ | \ a\in A, b\in B\bigr\}$ be the Minkowski sum of $A$ and $B$, and we let $A-B=A + (-B)$.

Note that, for any $\mu\in\Delta(\Theta)$, the set $\Sigma^{\star}(\mu)$ is a convex polytope. Let $\sigma \in \Sigma^{\star}(\mu)$ be $\succsim_{\mathcal{R}}$-maximal. \cref{prop:red_order} implies that for any $t \in \cone(\mathcal{T})$ such that $t\neq 0$, we must have $\sigma +t \notin \Sigma^{\star}(\mu)$. Therefore, $\bigl[\Sigma^{\star}(\mu) \cap \bigl(\{\sigma\} + \cone(\mathcal{T})\bigr)\bigr] =\{\sigma\}$. This implies that $\sigma$ is on the relative border of both $\Sigma^{\star}(\mu)$ and $\{\sigma\} + \cone(\mathcal{T})$. 

Given that these are closed convex sets and $\Sigma^\star(\mu)$ is compact, the separating hyperplane theorem implies that there exists a nonzero $v \in \mathbb{R}^\Omega$ such that:
\begin{align*}
    \forall \sigma' \in \Sigma^{\star}(\mu), \sigma'\neq \sigma, \forall t \in \cone(\mathcal{T}), t\neq 0, \quad 
    \langle \sigma', v \rangle < \langle \sigma, v \rangle < \langle \sigma + t, v \rangle.
\end{align*}

Furthermore, for any $t \in \cone(\mathcal{T})$, $\langle \sigma, v \rangle \leq \langle \sigma + t, v \rangle \iff 0\leq \langle t, v \rangle$, which, by \cref{lem:proof_red_order}, implies that $v\in \mathcal{E}$. Therefore, there exists $w\in \mathcal{R}$ and $b\in \mathbb{R}$ such that $v=w + b$. Note that $v \neq b$, as otherwise for any $t\in \mathcal{T}$, it would imply that $\langle t,v\rangle =0$, contradicting the previous inequality $\langle \sigma, v \rangle < \langle \sigma + t, v \rangle$. Therefore, $w\neq 0$. Hence, for any $\sigma' \in \Sigma^{\star}(\mu)$,
\begin{align*}
    \langle \sigma', v \rangle \leq \langle \sigma, v \rangle &\iff \langle \sigma', w +b \rangle \leq \langle \sigma,  w+b \rangle \\
    &\iff \langle \sigma', w \rangle \leq \langle \sigma, w \rangle,
\end{align*}
which proves that $\sigma$ is \labelcref{eqn:designer_pb_efficient}-optimal.
\end{proof}

\subsection{Proof of \cref{prop:charac_max_order}}\label{app:proof_satur_2}

\begin{proof}[Proof of \cref{prop:charac_max_order}]

For any $x\in\mathbb{R}^\Omega$ and any $p,q\in P$, write
\begin{equation*}
    \Pi_{p,q}(x)\coloneqq p\sum_{\theta\geq p}x(\theta,p)-q\sum_{\theta\geq q}x(\theta,p).
\end{equation*}
Thus, for any $\mu\in \Delta(\Theta)$ and $\sigma\in \Sigma^{\star}(\mu)$, the constraint $\Pi_{p,q}(\sigma)\geq 0$ is the obedience condition ensuring that, in segment $p$, the seller's profit from setting a price equal to $p$ is at least as high as from setting a price equal to $q$. For any $\mu\in \Delta(\Theta)$, $\sigma\in \Sigma^{\star}(\mu)$, and $p\in P$, let
\begin{equation*}
    \mathcal{I}_{\sigma}(p)\coloneqq \bigl\{q\in P \bigm\vert \Pi_{p,q}(\sigma)=0\bigr\},
\end{equation*}
be the set of prices that are tied with $p$ in segment $p$.

We begin by establishing that a segmentation is saturated if and only if perturbing it with any single downward or redistributive transfer leads to a violation of obedience.

\begin{lemma}\label{lem:saturation_iff_no_trans}
    For any $\sigma \in \Sigma^{\star}(\mu)$, $\sigma$ is saturated if and only if there exists no $t\in \mathcal{T}$ such that $t\neq 0$ and $\sigma + t \in \Sigma^{\star}(\mu)$.
\end{lemma}
\begin{proof}[Proof of \cref{lem:saturation_iff_no_trans}]
    \framebox{$\Leftarrow$} Suppose that there exists no $t \in \mathcal{T}$ such that $t\neq 0$ and $\sigma + t \in \Sigma^{\star}(\mu)$. Downward and redistributive buyer transfers preserve the marginal distribution over types, so any infeasibility created by such a transfer must come from a violation of obedience.

    We first prove part (a) of \cref{def:saturation}. Denote $\Bar{p} =\max \supp(\sigma_{P})$ and $\Bar{\theta}=\max \bigl\{\supp\bigl( \sigma(\, \cdot \, | \, \Bar{p})\bigr)\bigr\}$, and let $p \in \supp(\sigma_{P})$ be such that $p<\Bar{p}$. If there were no $q\in P$ with $p<q\leq \Bar{\theta}$ and $q\in\mathcal{I}_{\sigma}(p)$, then, for $\varepsilon>0$ sufficiently small, the downward transfer of mass $\varepsilon$ from $(\Bar{\theta},\Bar{p})$ to $(\Bar{\theta},p)$ would preserve obedience, a contradiction. Hence such a $q$ exists, and part (a) follows.

    We now prove part (b). Let $p,p'\in\supp(\sigma_P)$ with $p<p'$, and suppose by contradiction that there exists $q\in\supp\bigl(\sigma(\,\cdot\,|\,p)\bigr)$ such that $q\geq p'$ and $q\notin\mathcal{I}_{\sigma}(p')$. Since $p'\in\mathcal{I}_{\sigma}(p')$, we must have $q>p'$. Moreover, segment $p'$ must contain some type below $q$: otherwise charging $q$ would yield strictly more profit than charging $p'$ in segment $p'$. Let
    \begin{equation*}
        \theta_q\coloneqq\max \Bigl\{\theta\in\Theta \mid \theta<q,\; \theta\in\supp\bigl(\sigma(\,\cdot\,|\,p')\bigr)\Bigr\}.
    \end{equation*}
    For $\varepsilon>0$ sufficiently small, the redistributive transfer that moves mass $\varepsilon$ from $(\theta_q,p')$ and $(q,p)$ to $(\theta_q,p)$ and $(q,p')$ preserves obedience: it relaxes all constraints in segment $p$, and in segment $p'$ it tightens only constraints associated with prices in $(\theta_q,q]$, all of which are slack because $\theta_q$ is the largest type below $q$ in segment $p'$ and $q\notin\mathcal{I}_{\sigma}(p')$. This contradicts the assumption that no single transfer is feasible. Therefore part (b) holds.

    \framebox{$\Rightarrow$} Suppose that $\sigma \in \Sigma^{\star}(\mu)$ is saturated. We first show that no redistributive transfer is feasible. Suppose, to the contrary, that $\sigma'\in\Sigma^{\star}(\mu)$ is obtained from $\sigma$ by moving a mass $\varepsilon>0$ from $(\theta',p'')$ and $(\theta'',p')$ to $(\theta',p')$ and $(\theta'',p'')$, with $p'<p''$ and $\theta'<\theta''$. Since $(\theta'',p')\in\supp(\sigma)$, part (b) of \cref{def:saturation} gives $\theta''\in\mathcal{I}_{\sigma}(p'')$. The transfer then strictly tightens the obedience constraint comparing $p''$ to $\theta''$ in segment $p''$, contradicting $\sigma'\in\Sigma^{\star}(\mu)$.

    We next show that no downward transfer is feasible. Suppose, to the contrary, that $\sigma'\in\Sigma^{\star}(\mu)$ is obtained from $\sigma$ by moving a mass $\varepsilon>0$ of type $\theta'$ buyers from segment $p''$ to segment $p'$, with $(\theta',p'')\in\supp(\sigma)$ and $p'<p''$. By part (a) of \cref{def:saturation}, there exists $q>p'$ such that $q\in\mathcal{I}_{\sigma}(p')$. If $q\leq \theta'$, the transfer strictly tightens the constraint comparing $p'$ to $q$ in segment $p'$, so $\sigma'$ violates obedience. If instead $q>\theta'$, then $q\geq p''$ and part (b) of \cref{def:saturation} gives $q\in\mathcal{I}_{\sigma}(p'')$. In that case the transfer strictly tightens the constraint comparing $p''$ to $q$ in segment $p''$, again contradicting $\sigma'\in\Sigma^{\star}(\mu)$.
\end{proof}


We now show the local-to-global result.

\begin{lemma}\label{lem:local_global}
    Fix $\mu\in \Delta(\Theta)$. If $\sigma\in \Sigma^{\star}(\mu)$ is saturated, then for any $\tau \in \cone(\mathcal{T})$, $\tau\neq 0$, we have
    $\sigma + \tau \notin \Sigma^{\star}(\mu)$.
\end{lemma}

\begin{proof}[Proof of \cref{lem:local_global}]
We first characterize the extremal rays of $\cone(\mathcal{T})$. A ray of $\cone(\mathcal{T})$ is a set $\mathbb{R}_{+}\tau\coloneqq\{\lambda\tau\mid \lambda\in\mathbb{R}_{+}\}$ with $\tau\in\cone(\mathcal{T})$ and $\tau\neq 0$. It is \emph{extremal} if, whenever $\tau=\tau_1+\tau_2$ with $\tau_1,\tau_2\in\cone(\mathcal{T})$, we have $\tau_1,\tau_2\in\mathbb{R}_{+}\tau$.

Write $P=\{p_1,\dots,p_K\}$, where $p_i=\theta_i$ for every $i$. For $i\in\{1,\dots,K-1\}$, define the \emph{elementary downward transfer} $d_i\in \mathbb{R}^\Omega$ by
\begin{equation*}
    d_i(\theta,p)=
    \begin{cases}
        1 & \text{if }(\theta,p)=(\theta_K,p_i),\\
        -1 & \text{if }(\theta,p)=(\theta_K,p_{i+1}),\\
        0 & \text{otherwise.}
    \end{cases}
\end{equation*}
For $1\leq i<k\leq K-1$, define the \emph{elementary redistributive transfer} $r_{k,i}\in \mathbb{R}^\Omega$ by
\begin{equation*}
    r_{k,i}(\theta,p)=
    \begin{cases}
        1 & \text{if }(\theta,p)\in \bigl\{(\theta_k,p_i),(\theta_{k+1},p_{i+1})\bigr\},\\
        -1 & \text{if }(\theta,p)\in \bigl\{(\theta_k,p_{i+1}),(\theta_{k+1},p_i)\bigr\},\\
        0 & \text{otherwise.}
    \end{cases}
\end{equation*}
We refer to the index $i$ as the \emph{price step} associated with $d_i$ (respectively, $r_{k,i}$). That is to say that these elementary transfers reallocate buyers exclusively between the adjacent price segments $p_i$ and $p_{i+1}$.

\begin{lemma}\label{lem:extremal_rays_coneT}
    A ray of $\cone(\mathcal{T})$ is extremal if and only if it is generated by some elementary downward or redistributive transfer.
\end{lemma}

\begin{proof}
For $1\leq i<k\leq K$, let $d_{k,i}\in\mathbb{R}^{\Omega}$ denote the unit adjacent downward transfer of type $\theta_k$ from segment $p_{i+1}$ to segment $p_i$:
\begin{equation*}
    d_{k,i}(\theta,p)=
    \begin{cases}
        1 & \text{if }(\theta,p)=(\theta_k,p_i),\\
        -1 & \text{if }(\theta,p)=(\theta_k,p_{i+1}),\\
        0 & \text{otherwise.}
    \end{cases}
\end{equation*}
Then $d_i=d_{K,i}$ and $r_{k,i}=d_{k,i}-d_{k+1,i}$. Conversely, for $1\leq i<k\leq K-1$,
\begin{equation}\label{eq:adjacent_downward_decomp}
    d_{k,i}=d_i+\sum_{m=k}^{K-1} r_{m,i}.
\end{equation}

Note that every downward transfer of type $\theta_k$ from $p_j$ to $p_i$, with $i<j\leq k$, is a nonnegative multiple of
\begin{equation*}
    \sum_{\ell=i}^{j-1}d_{k,\ell},
\end{equation*}
and is therefore, by \eqref{eq:adjacent_downward_decomp}, a nonnegative combination of the elementary transfers. Likewise, every redistributive transfer swapping types $\theta_a<\theta_b$ between prices $p_i<p_j$, with $i<j\leq a<b$, is a nonnegative multiple of
\begin{equation*}
    \sum_{\ell=i}^{j-1}\bigl(d_{a,\ell}-d_{b,\ell}\bigr)
    =\sum_{\ell=i}^{j-1}\sum_{m=a}^{b-1}r_{m,\ell},
\end{equation*}
and hence is also a nonnegative combination of the elementary transfers. Thus $\cone(\mathcal{T})$ is generated by the set
\begin{equation*}
    \mathcal{G}\coloneqq \{d_i\mid 1\leq i\leq K-1\}\cup\{r_{k,i}\mid 1\leq i<k\leq K-1\}.
\end{equation*}

It remains to show that these generators are linearly independent. First, the vectors $\{d_{k,i}\mid 1\leq i<k\leq K\}$ are linearly independent: for each fixed type $\theta_k$, the vectors $d_{k,1},\dots,d_{k,k-1}$ have support only on row $\theta_k$, and the coefficient of $d_{k,1}$ is identified by the cell $(\theta_k,p_1)$, the coefficient of $d_{k,2}$ by the cell $(\theta_k,p_2)$ after subtracting the first one, and so on; rows corresponding to different types are disjoint.

Now suppose that
\begin{equation*}
    \sum_{i=1}^{K-1}\alpha_i d_i+\sum_{i=1}^{K-2}\sum_{k=i+1}^{K-1}\beta_{k,i}r_{k,i}=0.
\end{equation*}
Fix a price step $i$. Set $\beta_{i,i}=0$ and $\beta_{K,i}=\alpha_i$. When the left-hand side is written in the $d_{k,i}$'s, the coefficient of $d_{k,i}$ is $\beta_{k,i}-\beta_{k-1,i}$ for every $k=i+1,\dots,K$. Since the $d_{k,i}$'s are linearly independent, these coefficients must all be zero. Hence $\beta_{i+1,i}=\cdots=\beta_{K-1,i}=\alpha_i=0$. This holds for every $i$, so $\mathcal{G}$ is linearly independent.

Because $\mathcal{G}$ generates $\cone(\mathcal{T})$ and is linearly independent, every element of $\cone(\mathcal{T})$ has a unique representation as a nonnegative combination of elements of $\mathcal{G}$. Let $g\in\mathcal{G}$. If $g=\tau_1+\tau_2$ with $\tau_1,\tau_2\in\cone(\mathcal{T})$, uniqueness of the representation implies that all coefficients of $\tau_1$ and $\tau_2$ on generators other than $g$ are zero; hence $\tau_1,\tau_2\in\mathbb{R}_{+}g$. Therefore $\mathbb{R}_{+}g$ is an extremal ray. Conversely, if a nonzero $\tau\in\cone(\mathcal{T})$ has positive coefficients on at least two generators, its unique representation splits $\tau$ into two nonzero elements of the cone that are not proportional to $\tau$. Hence the ray generated by $\tau$ is not extremal. The extremal rays are therefore exactly the rays generated by the elementary transfers.
\end{proof}

We now prove \cref{lem:local_global}. By \cref{lem:extremal_rays_coneT}, any $\tau\in\cone(\mathcal{T})$ can be written uniquely as
\begin{equation*}
    \tau=\sum_{i=1}^{K-1}\left(\alpha_i d_i+\sum_{k=i+1}^{K-1}\beta_{k,i}r_{k,i}\right),
    \qquad \alpha_i,\beta_{k,i}\geq 0.
\end{equation*}
Consider a saturated $\sigma \in \Sigma^{\star}(\mu)$. Suppose, toward a contradiction, that $\tau\neq 0$ and $\sigma+\tau\in\Sigma^{\star}(\mu)$. Let $n$ be the lowest active price step, namely
\begin{equation*}
    n=\min\Bigl\{i\in\{1,\dots,K-1\}: \alpha_i>0 \text{ or } \beta_{k,i}>0 \text{ for some } k\in\{i+1,\dots,K-1\}\Bigr\}.
\end{equation*}
For $a\leq b$, write
\begin{equation*}
    \tau_{a:b}\coloneqq \sum_{i=a}^{b}\left(\alpha_i d_i+\sum_{k=i+1}^{K-1}\beta_{k,i}r_{k,i}\right).
\end{equation*}
We first find a type $q=\theta_j$, with $j\geq n+2$, such that
\begin{equation}\label{eq:local_start_deficit}
    \Pi_{p_{n+1},q}(\sigma+\tau_{n:n})<0
\end{equation}
and that
\begin{equation}\label{eq:local_binding_chain}
    \text{$ q\in \mathcal{I}_{\sigma}(p_i)$ for every $i\in\{n+1,\dots,j\}$.}
\end{equation}

First suppose that $\alpha_n=0$. Choose $j$ maximal such that $\beta_{j-1,n}>0$. The transfer $r_{j-1,n}$ removes mass from $(\theta_j,p_n)$. No lower price step is active, and by the maximality of $j$ no other elementary transfer at step $n$ adds mass to this cell. Since $\sigma+\tau$ is nonnegative, we must have $\sigma(\theta_j,p_n)>0$. Part (b) of \cref{def:saturation} therefore implies \eqref{eq:local_binding_chain} for every nonempty segment $p_i$ with $i\in \{ n+1,\dots,j \}$. If such a segment is empty, then $\Pi_{p_i,q}(\sigma)=0$ holds trivially. Among all elementary transfers at step $n$, only $r_{j-1,n}$ affects the constraint comparing $p_{n+1}$ to $q=\theta_j$. Hence
\begin{equation*}
    \Pi_{p_{n+1},q}(\sigma+\tau_{n:n})
    =\Pi_{p_{n+1},q}(\sigma)-\beta_{j-1,n}q
    =-\beta_{j-1,n}q<0,
\end{equation*}
which gives \eqref{eq:local_start_deficit}.

Now suppose that $\alpha_n>0$. If $\Pi_{p_n,p_{n+1}}(\sigma)=0$, then no redistributive transfer at step $n$ affects the constraint comparing $p_n$ to $p_{n+1}$, while $d_n$ tightens it by $\alpha_n(p_{n+1}-p_n)$. Since no later price step affects segment $p_n$, $\sigma+\tau$ would violate obedience. Hence $\Pi_{p_n,p_{n+1}}(\sigma)>0$, so segment $p_n$ is nonempty.

Segment $p_n$ is not the highest nonempty segment. Indeed, suppose that all segments above $p_n$ are empty under $\sigma$. Starting from $p_K$ and moving downward, nonnegativity of $\sigma+\tau$ forces all coefficients at price steps $K-1,K-2,\dots,n$ to be zero: once steps $m+1,\dots,K-1$ have been shown to be zero, the empty segment $p_{m+1}$ is affected only by step $m$, and nonnegativity at the cells $(\theta_{m+1},p_{m+1}),\dots,(\theta_K,p_{m+1})$ successively gives $\beta_{m+1,m}=\cdots=\beta_{K-1,m}=\alpha_m=0$. This contradicts the definition of $n$.

Thus $p_n<\max\supp(\sigma_P)$. By part (a) of \cref{def:saturation}, there exists $q=\theta_j>p_n$ such that $q\in\mathcal{I}_{\sigma}(p_n)$. Since $\Pi_{p_n,p_{n+1}}(\sigma)>0$, this binding price cannot be $p_{n+1}$, so $j\geq n+2$. Because $q$ is optimal in the nonempty segment $p_n$, we have $\sigma(\theta_j,p_n)>0$: otherwise either the next higher type would yield strictly larger profit than $q$, or the profit from $q$ would be zero. Part (b) of \cref{def:saturation} then gives \eqref{eq:local_binding_chain}.

Because $\sigma+\tau$ satisfies obedience in segment $p_n$, and because price steps above $n$ do not affect that segment, step $n$ must repair the loss caused by $d_n$ in the constraint comparing $p_n$ to $q$. The only elementary transfer at step $n$ that relaxes this constraint is $r_{j-1,n}$, and therefore
\begin{equation*}
    \beta_{j-1,n}q-\alpha_n(q-p_n)\geq 0.
\end{equation*}
In particular, $\beta_{j-1,n}>0$. The effect of step $n$ on the constraint comparing $p_{n+1}$ to $q$ is
\begin{align*}
    \Pi_{p_{n+1},q}(\sigma+\tau_{n:n})
    &=\alpha_n(q-p_{n+1})-\beta_{j-1,n}q \\
    &\leq \alpha_n(q-p_{n+1})-\alpha_n(q-p_n) \\
    &=\alpha_n(p_n-p_{n+1})<0,
\end{align*}
where we used $\Pi_{p_{n+1},q}(\sigma)=0$ from \eqref{eq:local_binding_chain}. This proves \eqref{eq:local_start_deficit} in the second case as well.

It remains to propagate the deficit from segment $p_{n+1}$ up to segment $p_{j-1}$. For each $i\in\{n+1,\dots,j-1\}$, define
\begin{equation*}
    \Pi_{i}\coloneqq \Pi_{p_i,q}(\sigma+\tau_{n:i-1}).
\end{equation*}
By \eqref{eq:local_start_deficit}, $\Pi_{n+1}<0$. We claim that $\Pi_{i}<0$ for every $i\in \{n+1,\dots,j-1\}$.

Assume $i\leq j-2$ and $\Pi_{i}<0$. Since $\sigma+\tau$ is obedient and no price step above $i$ affects segment $p_i$, step $i$ must repair this deficit. At step $i$, the downward transfer $d_i$ tightens the constraint comparing $p_i$ to $q$ by $\alpha_i(q-p_i)$, while the only elementary transfer that relaxes it is $r_{j-1,i}$. Thus
\begin{equation*}
    \Pi_{i}-\alpha_i(q-p_i)+\beta_{j-1,i}q\geq 0,
\end{equation*}
or equivalently
\begin{equation}\label{eq:repair_bound}
    \beta_{j-1,i}q\geq -\Pi_{i}+\alpha_i(q-p_i).
\end{equation}
Now consider segment $p_{i+1}$. Price steps $n,\dots,i-1$ do not affect this segment, and $q\in\mathcal{I}_{\sigma}(p_{i+1})$ by \eqref{eq:local_binding_chain}; hence $\Pi_{p_{i+1},q}(\sigma+\tau_{n:i-1})=0$. Therefore
\begin{align*}
    \Pi_{i+1}
    &=\Pi_{p_{i+1},q}(\sigma+\tau_{n:i}) \\
    &=\alpha_i(q-p_{i+1})-\beta_{j-1,i}q \\
    &\leq \alpha_i(q-p_{i+1})-\bigl[-\Pi_{i}+\alpha_i(q-p_i)\bigr] \\
    &=\Pi_{i}+\alpha_i(p_i-p_{i+1})<0,
\end{align*}
where the weak inequality uses \eqref{eq:repair_bound}. This proves the claim.

Taking $i=j-1$ in the claim gives
\begin{equation*}
    \Pi_{p_{j-1},q}(\sigma+\tau_{n:j-2})<0.
\end{equation*}
Step $j-1$ cannot repair this final deficit. The elementary redistributive transfer $r_{j-1,j-1}$ does not exist, while $d_{j-1}$ only tightens the constraint comparing $p_{j-1}$ to $q=\theta_j$. Price steps above $j-1$ do not affect segment $p_{j-1}$. Hence $\sigma+\tau$ violates obedience in segment $p_{j-1}$, contradicting $\sigma+\tau\in\Sigma^{\star}(\mu)$.

Therefore no nonzero $\tau\in\cone(\mathcal{T})$ satisfies $\sigma+\tau\in\Sigma^{\star}(\mu)$.
\end{proof}

We can now conclude the proof of \cref{prop:charac_max_order}. If $\sigma$ is $\succsim_{\mathcal{R}}$-maximal, then by \cref{prop:red_order} there is no nonzero $t\in\mathcal{T}$ such that $\sigma+t\in\Sigma^{\star}(\mu)$. Hence $\sigma$ is saturated by \cref{lem:saturation_iff_no_trans}. Conversely, if $\sigma$ is saturated, \cref{lem:local_global} rules out every nonzero $\tau\in\cone(\mathcal{T})$ with $\sigma+\tau\in\Sigma^{\star}(\mu)$. By \cref{prop:red_order}, no feasible segmentation is strictly above $\sigma$ in the redistributive order, so $\sigma$ is $\succsim_{\mathcal{R}}$-maximal.

\end{proof}
\section{Proof of \cref{prop:strong_mon}}\label{app:proof_mon}

\begin{proof}[Proof of \cref{prop:strong_mon}]
\framebox{$\Leftarrow$} Consider a segmentation $\sigma \in \Sigma^{\star}(\mu)$ that is $\succsim_{\mathcal{R}}$-maximal but not strongly monotone. Thus, there exist at least two segments $p, \theta_i \in \supp(\sigma_{P})$ such that $\max \supp\bigl(\sigma(\, \cdot \, | \, p)\bigr)>\theta_i>p$. We denote as $\theta'$ the lowest type in the segment $p$ that is strictly above $\theta_i$, and as $\theta_{k}$ the maximal type in the segment $\theta_i$: 
\begin{align*}
    & \theta' = \min \bigl\{x \in \supp\bigl(\sigma(\, \cdot \, | \, p)\bigr) \ | \ x> \theta_i \bigr\}, \\
    & \theta_{k} = \max \supp\bigl(\sigma(\, \cdot \, | \, \theta_i)\bigr).
\end{align*}

Let $\varepsilon>0$ be sufficiently small and $r$ be the redistributive buyer transfer such that
\begin{equation*}
    r(\theta_i,p) = r(\theta',\theta_i) = - r(\theta_i,\theta_i) = - r(\theta',p) = \varepsilon,
\end{equation*}
and $r = 0$ otherwise. Neglecting for the moment the negative indirect impacts of $r$ by triggering a violation of the obedient constraint in the segment $\theta_i$, the direct impact on the welfare of types $\theta_i$ and $\theta'$ involved in the transfer $r$ is
\begin{align}
    & \varepsilon\left[(w(\theta_i, p) - w(\theta_i, \theta_i))- \left(w(\theta', p)-w(\theta', \theta_i)\right)\right] \label{eqn:gain} \\
    > & \varepsilon\left[(w(\theta_i, p) - w(\theta_i, \theta_i))- \left(w(\theta_{i+1}, p)-w(\theta_{i+1}, \theta_i)\right)\right]> 0, \label{eqn:gain-2}
\end{align} 
where both strict inequalities come from $w \in \overline{\mathcal{R}}$. In addition, given that $\sigma$ is $\succsim_{\mathcal{R}}$-maximal, $\sigma + r$ does not satisfy \labelcref{eqn:obedience}. In particular, there exists $\hat{\theta}$, with $\theta_i < \hat{\theta} \leq \theta'$, such that $\hat{\theta}\in \mathcal{I}_{\sigma}(\theta_i)$ (see \cref{app:proof_satur_2} for the definition of $\mathcal{I}_{\sigma}(\theta)$), and thus:
\begin{equation*}
    \theta \sum_{x \geq \theta} \sigma(x,\theta) < \hat{\theta} \left(\sum_{x \geq \hat{\theta}} \sigma(x,\theta) + \varepsilon\right).
\end{equation*}

To recover the obedience of the price recommendation $\theta_i$, we implement the transfer $t\colon \Omega \to \mathbb{R}$, which transfers a mass $\frac{\theta_{i+1}}{\theta_{i+1}-\theta_i}\varepsilon$ of buyers from $(\theta_{k},\theta_i)$ to $(\theta_{k},\theta_{k})$: that is,
\begin{equation*}
    t(\theta_{k}, \theta_{k}) = - t(\theta_{k},\theta_i) = \frac{\theta_{i+1}}{\theta_{i+1}-\theta_i}\varepsilon,
\end{equation*}
and $t=0$ otherwise. First, for sufficiently small $\varepsilon$, $\sigma + t \in \Delta(\Omega)$. Second, implementing $t$ cannot break any obedience constraints, and it corrects from any violation of obedience constraints induced by the transfer $r$. To see this, look at the difference between the profit made by setting the price $\theta_i$ and setting the price $q \geq \theta_{i+1}$:
\begin{align*}
    & \theta_i \left(\sum_{x \geq \theta_i} \sigma(x,\theta_i) - \frac{\theta_{i+1}}{\theta_{i+1}-\theta_i}\varepsilon\right) - q \left(\sum_{x \geq q} \sigma(x,\theta_i) + \varepsilon - \frac{\theta_{i+1}}{\theta_{i+1}-\theta_i}\varepsilon\right) \\
    \geq & \varepsilon \left(\frac{\theta_{i+1}}{\theta_{i+1}-\theta_i} (q - \theta_i) -q\right)=\varepsilon\frac{\theta_i(q-\theta_{i+1})}{\theta_{i+1}-\theta_i} \geq 0
\end{align*}
where the first inequality follows from $\theta_i$ being an optimal price in segment $\theta_i$ according to $\sigma$, and the second inequality follows from $q \geq \theta_{i+1}$.

Therefore, $\sigma + r + t \in \Sigma^{\star}(\mu)$. The loss from the transfer $t$ is 
\begin{equation}\label{eqn:loss}
    \frac{\theta_{i+1}}{\theta_{i+1}-\theta}\varepsilon \bigr[w(\theta_{k},\theta_i) - w(\theta_{k}, \theta_{k})\bigr].
\end{equation}

Therefore, implementing the transfers $r$ and $t$ is beneficial if and only if \labelcref{eqn:gain} $\geq$ \labelcref{eqn:loss}. If $w$ is strongly redistributive, we have that \labelcref{eqn:gain-2} $\geq$ \labelcref{eqn:loss}. Given that \labelcref{eqn:gain} $\geq$ \labelcref{eqn:gain-2}, this ends the proof.

\framebox{$\Rightarrow$} We prove the contrapositive. Suppose that $w \in \overline{\mathcal{R}}$ is not strongly redistributive. So there exist $p <  \theta_i < \theta_{i+1} \leq \theta_{k}$, such that
\begin{equation}\label{eqn:violate-SR}
    \begin{aligned}
        \bigl[w(\theta_i, p) - w(\theta_i,\theta_i)\bigr] -&\bigl[w(\theta_{i+1}, p)-w(\theta_{i+1}, \theta_{i})\bigr] \\[5pt] 
        &< \frac{\theta_{i+1}}{\theta_{i+1}-\theta_{i}}\bigl[w(\theta_{k},\theta_{i}) -  w(\theta_{k},\theta_{k})\bigr].
    \end{aligned}
\end{equation}
    
Let $\mu \in \Delta\bigl(\{p, \theta_i, \theta_{i+1}, \theta_{k}\}\bigr)$ such that the unique strongly monotone and $\succsim_\mathcal{R}$-maximal segmentation, denoted $\sigma$, has three segments:
\begin{itemize}
    \item one segment $p$ including all types $p$ and some type $\theta_i$;
    \item one segment $\theta_i$ including the remaining types $\theta_i$ and all types $\theta_{i+1}$;
    \item and one segment $\theta_{k}$ including all types $\theta_{k}$.
\end{itemize}
Such a $\mu$ exists. For instance, let $\mu(p)>0$ and $\mu(\theta_{i+1})>0$ be arbitrarily small, and set $\mu(\theta_i) = \frac{p}{\theta_i-p} \mu(p) + \frac{\theta_{i+1}-\theta_i}{\theta_i}\mu(\theta_{i+1})$ and $\mu(\theta_k) = 1-\mu(p)-\mu(\theta_i)-\mu(\theta_{i+1})$. One can verify that this $\mu$ satisfies all the desired conditions. Let $\varepsilon>0$ be sufficiently small. We will now perform two transfers. The first one is a downward transfer $d$ defined by:
\begin{equation*}
    d(\theta_{k}, \theta_i)= - d(\theta_{k}, \theta_{k})=\frac{\theta_{i+1}}{\theta_{i+1}-\theta_i}\varepsilon,
\end{equation*}
$d=0$ otherwise. The second transfer, $s$, is the reverse of a redistributive one. It is defined by:
\begin{equation*}
    s(\theta_i,\theta_i) = s(\theta_{i+1},p) = - s(\theta_{i+1},\theta_i) = - s(\theta_i, p) = \varepsilon,
\end{equation*}
and $s=0$ otherwise. We check that $\sigma+d+s$ satisfies \labelcref{eqn:obedience}. First, for sufficiently small $\varepsilon$, this does not change the optimality of price $p$ in segment $p$. Second, in segment $\theta_i$, the transfers do not change the obedience constraint between for $\theta_{i+1}$ as
\begin{align*}
    \theta_i\left(\sigma(\theta_i, \theta_i) + \sigma(\theta_{i+1}, \theta_{i}) + \frac{\theta_{i+1}}{\theta_{i+1}-\theta_i}\varepsilon\right) &- \theta_{i+1}\left( \sigma(\theta_{i+1}, \theta_{i}) -\varepsilon + \frac{\theta_{i+1}}{\theta_{i+1}-\theta_i}\varepsilon\right) \\
    = & \varepsilon \left(\frac{\theta_i \theta_{i+1}}{\theta_{i+1}-\theta_1} - \frac{\theta_i \theta_{i+1}}{\theta_{i+1}-\theta_1}\right) \\
    = & 0,
\end{align*}
where the first equality follows from $\theta_{i+1}$ being an optimal price in segment $\theta_i$ for segmentation $\sigma$. Similarly, one can show that charging $\theta_i$ always gives a greater profit than charging $\theta_{k}$ if the segmentation is $\sigma + d + s$. Therefore $\sigma+d+s \in \Sigma^{\star}(\mu)$. Furthermore, it directly follows from \labelcref{eqn:violate-SR} that $\langle w, \sigma + d + s \rangle > \langle w, \sigma \rangle$. Therefore, $\sigma$, which is the unique strongly monotone and $\succsim_\mathcal{R}$-maximal segmentation of $\mu$ (\cref{lem:unique_strongly_mon}) is not optimal for $w$.
\end{proof}

\section{Proof of \cref{lem:unique_strongly_mon}}\label{app:proof_rent}

\begin{proof}
    First, if the uniform price of $\mu$ is $\theta_1$, there exists a unique segmentation that is saturated, the one that does not segment the market. Suppose now that the uniform price of $\mu$ is strictly greater than $\theta_1$. Let $\sigma \in \Sigma^{\star}(\mu)$ be strongly monotone and saturated. Denote as $\theta_{k}$ the maximal type that is in the segment $\theta_1$. By \cref{def:saturation}, there exists $\theta_i$, with $\theta_1 < \theta_i \leq \theta_{k}$ such that $\theta_i \in \mathcal{I}_\sigma(\theta_1)$. Consequently, any segmentation that is strongly monotone and saturated must have the same market segment $\theta_1$. We can conclude by iterating the same argument for the other segments.
\end{proof}

\bibliographystyle{ecta}
\bibliography{Biblio}

\end{document}